\tikzstyle{small node} = [draw, circle, fill = black, minimum size = 3pt, inner sep = 0pt]
\tikzstyle{black node} = [draw, circle, fill = black, minimum size = 5pt, inner sep = 0pt]
\tikzstyle{white node} = [draw, circle, fill = white, minimum size = 5pt, inner sep = 0pt]
\tikzstyle{normal} = [draw=none, fill = none, rectangle, minimum size =0]
\declaretheorem[name=Theorem, numberwithin=section]{theorem}
\declaretheorem[name=Lemma, sibling=theorem]{lemma}
\declaretheorem[name=Proposition, sibling=theorem]{proposition}
\declaretheorem[name=Observation, sibling=theorem]{observation}
\declaretheorem[name=Definition, sibling=theorem]{definition}
\declaretheorem[name=Corollary, sibling=theorem]{corollary}
\declaretheorem[name=Claim, sibling=theorem]{claim}
\declaretheorem[name=Remark, style=remark, sibling=theorem]{remark}
\def\cqedsymbol{\ifmmode$\lrcorner$\else{\unskip\nobreak\hfil
\penalty50\hskip1em\null\nobreak\hfil$\lrcorner$
\parfillskip=0pt\finalhyphendemerits=0\endgraf}\fi} 
\newcommand{\cqed}{\renewcommand{\qed}{\cqedsymbol}}
\def\ie{i.e.}
\def\C{\mathcal{C}} 
\def\Cp{\mathcal{C'}} 
\def\D{\mathcal{D}} 
\def\F{\mathcal{F}} 
\def\Dcs{\textsc{Dcs}} 
\def\SAT{\textsc{SAT}}
\def\DomEnum{\textsc{Dom-Enum}}
\newcommand{\transv}{\textsc{Trans-Enum}}
\newcommand{\N}{\mathbb{N}}
\newcommand{\algoa}{\texttt{A}} 
\newcommand{\algob}{\texttt{B}} 
\newcommand{\algobp}{\texttt{B'}} 
\renewcommand{\geq}{\geqslant}
\renewcommand{\leq}{\leqslant}
\newcommand{\setin}{\Sigma_{\rm in}}
\newcommand{\setout}{\Sigma_{\rm out}}
\newcommand{\intv}[2]{\left \{ #1, \dots, #2 \right \}}
\DeclareMathOperator{\parent}{\sf Parent}
\DeclareMathOperator{\priv}{Priv}
\DeclareMathOperator{\poly}{poly}
\def\TFNPFP{{\sf TFNP$=$FP}}
\def\NP{{\sf NP}}
\newcommand\eqdef{\overset{\text{\tiny{def}}}{=}} 
\title{Enumerating minimal dominating sets\\ in $K_t$-free graphs and variants\thanks{A preliminary version of this article appeared in the proceedings of the 36\textsuperscript{th} Symposium on Theoretical Aspects of Computer Science (STACS 2019)~\cite{bonamy2019triangle}. The first author has been supported by the ANR project GrR ANR-18-CE40-0032. The second author has been supported by the ANR project GraphEn ANR-15-CE40-0009.
The fourth author is supported by project TOTAL, which has received funding from the European Research Council (ERC) 
under the European Union's Horizon 2020 research and innovation programme, grant agreement No.~677651.
The last author has been supported by the ERC consolidator grant DISTRUCT-648527.}}
\author[1]{Marthe Bonamy}
\author[2]{Oscar Defrain}
\author[3]{Marc Heinrich}
\author[4]{\\Micha\l{} Pilipczuk}
\author[5]{Jean-Florent Raymond}
\affil[1]{CNRS, LaBRI, Université de Bordeaux, France.}
\affil[2]{LIMOS, Université Clermont Auvergne, France.}
\affil[3]{LIRIS, Université Claude-Bernard, Lyon, France.}
\affil[4]{University of Warsaw, Poland.}
\affil[5]{CNRS, LIMOS, Université Clermont Auvergne, France.}
\date{\today}
\begin{document}

\maketitle

\begin{textblock}{20}(0, 13.1)
\includegraphics[width=40px]{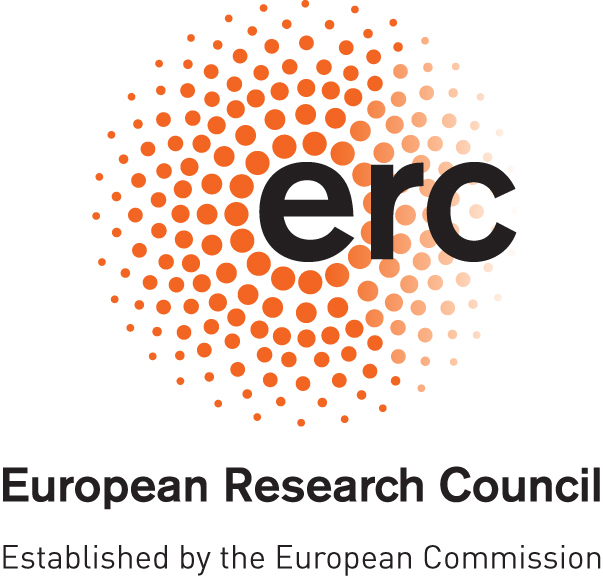}%
\end{textblock}
\begin{textblock}{20}(-0.35, 13.4)
\includegraphics[width=70px]{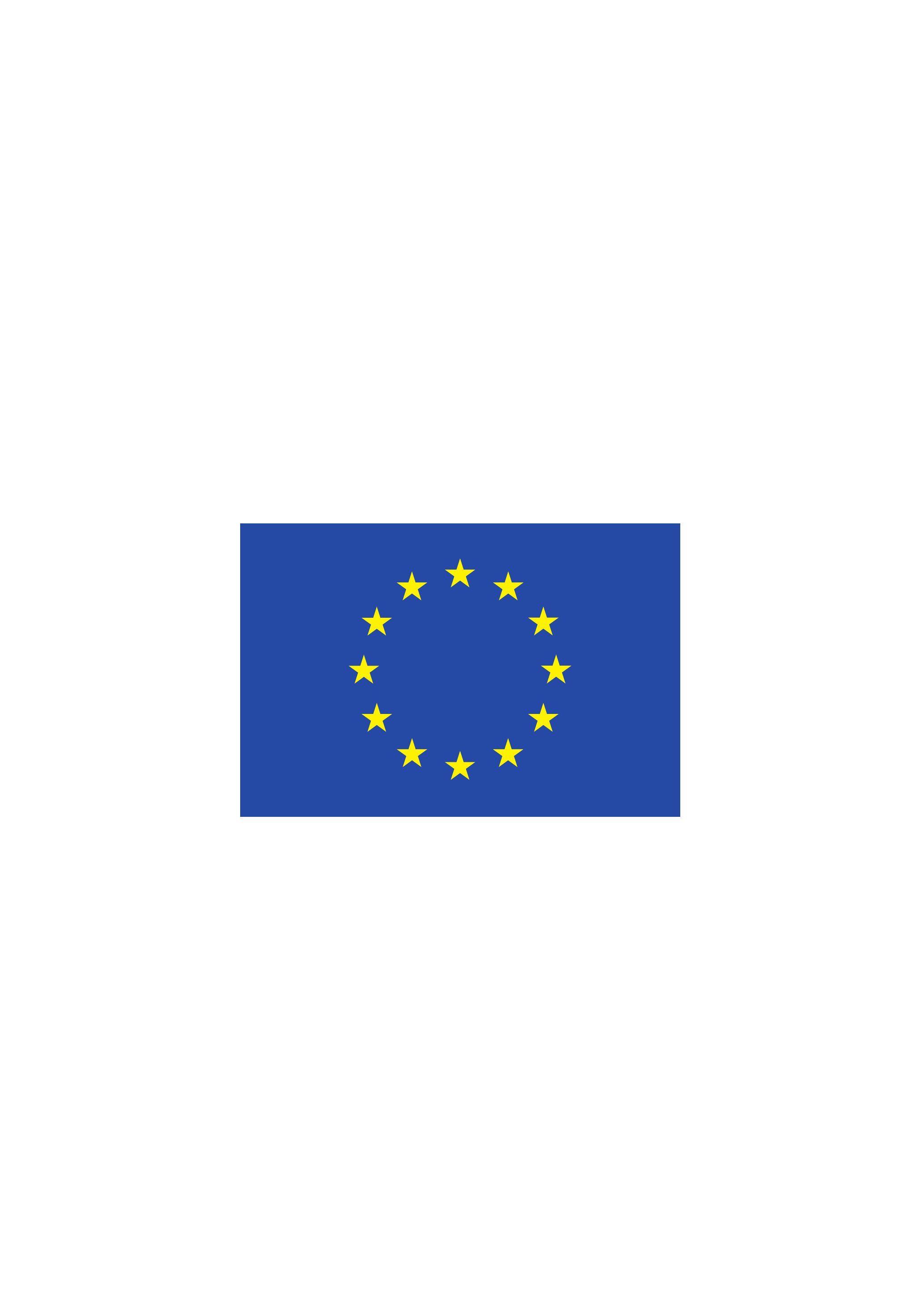}%
\end{textblock}

\begin{abstract}
  It is a long-standing open problem whether the minimal dominating sets of a graph can be enumerated in output-polynomial time. 
  In this paper we investigate this problem in graph classes defined by forbidding an induced subgraph. 
  In particular, we provide output-polynomial time algorithms for $K_t$-free graphs and for several related graph classes. 
  This answers a question of Kanté et al.~about enumeration in bipartite graphs.
\end{abstract}

\newcommand{\figscaleAx}{1}
\newcommand{\figscaleAy}{1}
\newcommand{\figscaleBx}{1}
\newcommand{\figscaleBy}{.85}
\newcommand{\figscaleCx}{0.9}
\newcommand{\figscaleCy}{1}


\section{Introduction}\label{sec:intro}

Countless algorithmic problems in graph theory require to detect a structure with prescribed properties in an input graph.
Rather than finding one such object, it is sometimes more desirable to generate all of them. 
This is for instance useful in certain applications to database search \cite{yan2005substructure}, network analysis \cite{grochow2007network}, bioinformatics \cite{damaschke2006parameterized,marino2015analysis}, and cheminformatics~\cite{barnard1993substructure}.
Enumeration algorithms for graph problems seem to have been first mentioned in the early 70's with the pioneer works of Tiernen \cite{tiernan1970efficient} and Tarjan \cite{tarjan1973enumeration} on cycles in directed graphs, and of Akkoyunlu \cite{akkoyunlu1973enumeration} on maximal cliques in undirected graphs. 
However, they already appeared in disguise in earlier works \cite{paull1959minimizing, marcus1964derivation}.  
To this date, several intriguing questions on the topic remain unsolved.  
We refer the reader to \cite[Chapter~2]{marino2015analysis} and~\cite{strozecki2019survey} for more in-depth introductions to enumeration algorithms, and to \cite{wasa2016enumeration} for a listing of enumeration algorithms and problems.

The objects we wish to enumerate in this paper are the
(inclusion-wise) minimal dominating sets of a given graph. In general,
the number of these objects may grow exponentially with the order $n$
of the input graph.
Therefore, in stark contrast to decision or optimization problems, looking for a running time polynomially bounded by $n$ is not a reasonable, let alone meaningful, efficiency criterion. 
Rather, we aim here for so-called \emph{output-polynomial} time algorithms~\cite{johnson1988generating},
whose running time is polynomially bounded by the size of both the input
and output data, and refer to~\cite{fomin2008combinatorial,couturier2013minimal,golovach2019input} for input exponential-time algorithms for the problem we consider in this paper.

Since dominating sets are among the most studied objects in graph
theory and algorithms, their enumeration (and counting) have attracted an increasing
attention over the past 10 years.
The problem of enumerating minimal dominating sets (hereafter referred to as \DomEnum{}) has a notable feature: 
it is equivalent to the extensively studied problem~\transv{}. 
In \transv{}, one is given a hypergraph $\mathcal{H}$ (\ie,~a collection of subsets, called \emph{hyperedges}, of elements called \emph{vertices}) and is asked to enumerate all the (inclusion-wise) minimal
\emph{transversals} of $\mathcal{H}$ (\ie, the inclusion-wise minimal
sets of vertices that meet every hyperedge). It is not hard to see
that \DomEnum{} is a particular case of \transv{}: the minimal
dominating sets of a graph $G$ are exactly the minimal transversals of
the hypergraph
of closed neighborhoods of $G$. 
Conversely, Kanté,
Limouzy, Mary, and Nourine proved that every instance of \transv{} can
be reduced to a co-bipartite\footnote{The complement of a bipartite graph.} instance of \DomEnum{}
\cite{kante2014split}. 
Currently, the best output-sensitive
algorithm for \transv{} is due to Fredman and Khachiyan and runs in
output quasi-polynomial time~\cite{fredman1996complexity}. It is a
long-standing open problem whether this complexity bound can be
improved (see for instance the surveys \cite{eiter2002hypergraph, eiter2008computational}).
Therefore, the equivalence between the two problems is an additional
motivation to study \DomEnum, with the hope that techniques from graph
theory could be used to obtain new results on \transv{}.

So far, output-polynomial time algorithms have been obtained for
\DomEnum{} in several classes of graphs, including planar graphs and degenerate graphs \cite{eiter2003new}, classes of graphs of bounded treewidth, cliquewidth
\cite{courcelle2009linear}, or LMIM-width \cite{golovach2018lmimwidth}, path graphs and line graphs \cite{kante2012neighbourhood}, interval graphs and permutation graphs
\cite{kante2013permutation}, split graphs
\cite{kante2015chordal}, graphs of girth at least~7~\cite{golovach2015flipping}, chordal
graphs \cite{kante2015chordal}, and chordal bipartite
graphs~\cite{golovach2016chordalbip}. A succinct survey of results on
\DomEnum{} can be found in \cite{kante2008encyclopedia} and~\cite{golovach2016chordalbip}.

In this paper, we investigate the complexity of \DomEnum{} in graph classes defined by forbidding an induced subgraph $H$, hereafter referred to as \emph{$H$-free} graphs.
For every $t\in \N$, we denote by $K_t$ the complete graph on $t$ vertices, by $K_t-e$ the graph obtained by removing any edge in~$K_t$ and by $K_t+K_2$ the disjoint union of $K_t$ and $K_2$.
Our main result is the following.

\begin{theorem}\label{thm:op}
There is an algorithm enumerating, for every fixed $t \in \N$, the minimal dominating sets in $(K_t+K_2)$-free graphs in output-polynomial time and polynomial space.
\end{theorem}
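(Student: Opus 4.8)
The plan is to exploit a single structural consequence of forbidding $K_t+K_2$ as an induced subgraph, and to use it to reduce the problem to the enumeration of minimal dominating sets in $K_t$-free graphs. The consequence is this: for every edge $\{u,v\}$ of a $(K_t+K_2)$-free graph $G$, the subgraph induced by the vertices anticomplete to $\{u,v\}$, that is by $V(G)\setminus(N[u]\cup N[v])$, is $K_t$-free, since a $K_t$ inside this set together with the edge $\{u,v\}$ would span an induced copy of $K_t+K_2$. More generally, deleting the closed neighbourhood of any clique of size at least two leaves a $K_t$-free graph. Enumeration in $K_t$-free graphs (with the triangle-free case $t=3$ as its core) I would treat as a separate subroutine, established on its own, that the reduction may call as a black box.

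First I would set up a canonical decomposition of each minimal dominating set $D$. Let $I$ be the set of vertices isolated in $G[D]$ and let $D_0=D\setminus I$, so that $G[D_0]$ has minimum degree at least one. One checks that $I\subseteq V(G)\setminus N[D_0]$ and that $I$ is precisely a maximal independent set of the graph $W\eqdef G-N[D_0]$; conversely $D$ recovers the pair $(D_0,I)$ uniquely, so listing the admissible pairs lists each minimal dominating set exactly once. When $D_0=\emptyset$ the set $D=I$ is just a maximal independent set of $G$, and these can be generated with polynomial delay and polynomial space by the classical algorithm. When $D_0\neq\emptyset$ it carries an edge, so the structural fact above guarantees that $W$ is $K_t$-free: this is where the hypothesis is spent, for $D_0$ dominates $V(G)\setminus W$ while the part $W$ that still has to be dominated induces a $K_t$-free graph. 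I would therefore organise the enumeration as a recursive peeling that branches to fix a growing core $D_0$, and, as soon as the current core contains an edge, switches to the $K_t$-free remainder $W$ to produce the compatible completions.

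The main obstacle, in my view, is the base case itself: minimal dominating sets of $K_t$-free, and already of triangle-free, graphs, where the structural lemma offers no slack and a genuinely different and more delicate argument is needed, presumably a Ramsey-type bound controlling the number of relevant private neighbours so that the search branches only polynomially. A second, orthogonal difficulty is the coupling between $D_0$ and $I$ through the minimality constraints: a vertex $v\in D_0$ may lose its only private neighbour precisely because some vertex of $I$ dominates that neighbour, so the sets $I$ must be enumerated subject to constraints inherited from $D_0$ rather than as arbitrary maximal independent sets of $W$. Reconciling these while keeping the running time output-polynomial and the space polynomial, that is without ever storing the solutions already produced, is the careful bookkeeping that the proof must ultimately carry out.
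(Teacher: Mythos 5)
Your opening structural observation is exactly the one the paper's proof of Theorem~\ref{thm:op} (given as Theorem~\ref{thm:ktme}) rests on: for any edge $uv$ of a $(K_t+K_2)$-free graph, $G[V(G)\setminus N[\{u,v\}]]$ is $K_t$-free, and the edgeless minimal dominating sets are precisely the maximal independent sets, which Tsukiyama et al.\ enumerate with polynomial delay. Beyond that, however, there are two genuine gaps. The first is that the subroutine you defer as a black box is the actual content of the theorem: since every $K_t$-free graph is in particular $(K_t+K_2)$-free, the $K_t$-free case is a special case of the statement you are proving, not a prior result you may invoke. Worse, the subroutine that is actually needed is the \emph{bicolored} one: the vertices completing the dominating set may be taken from $N[\{u,v\}]$, which can contain arbitrarily large cliques, so one needs to enumerate $\D(G,A)$ where only $G[A]$ is $K_t$-free while $G$ itself is unrestricted. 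The paper proves this (Theorem~\ref{thm:kt-free}) by ordered generation along a peeling of $A$ into closed neighborhoods, recursing on the clique number of $G[A]$ (candidate extensions live inside a neighborhood $N(v_{i+1})$, whose trace on $A$ is $K_{t-1}$-free), with split graphs, not a Ramsey bound, as the base case; and Section~\ref{subsec:bic} shows that without the restriction on $G[A]$ the bicolored problem is \DomEnum{}-hard, so this distinction cannot be elided.

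The second gap is in your decomposition $D=D_0\cup I$: you must enumerate the admissible cores $D_0$, which is itself an enumeration problem of unbounded combinatorial size with no obvious output-polynomial bound, and on top of that the private-neighbor coupling between $D_0$ and $I$ that you flag is left unresolved. The paper avoids fixing the core entirely: for each of the at most $|E(G)|$ edges $uv$ it enumerates all of $\D(G,A_{uv})$ with $A_{uv}=V(G)\setminus N[\{u,v\}]$, outputs $D\cup\{u,v\}$ when this is a minimal dominating set of $G$, and discards it otherwise. Minimality guarantees completeness (if $D'$ is a minimal dominating set containing $u$ and $v$, then $D'\setminus\{u,v\}$ must lie in $\D(G,A_{uv})$, else a proper subset of $D'$ would dominate $G$); the injection argument of Lemma~\ref{lem:cand-ext-bound} gives $|\D(G,A_{uv})|\le|\D(G)|$, so the filtering stays output-polynomial; and the up to $|E(G)|$-fold repetitions across edges are removed by the resimulation trick of Lemma~\ref{lem:avoid-repetitions} without storing past outputs, which is what preserves polynomial space. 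Your proposal would need all three of these ingredients, plus the bicolored $K_t$-free algorithm, to become a proof.
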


In particular, this yields an output-polynomial time algorithm for $K_t$-free graphs. 
A~notable special case is that of bipartite graphs, where the question of the existence of an output-polynomial time algorithm for \DomEnum{} was explicitly stated in~\cite{kante2008encyclopedia} and later papers~\cite{kante2015chordal,golovach2016chordalbip}.
We stress that we provide in the proof of Theorem~\ref{thm:op} a single algorithm that deals with all values of~$t$ and that this algorithm does not require the knowledge of~$t$. We discuss the complexity in greater details in Sections~\ref{sec:triangle-free} and~\ref{sec:kt-free}.

In order to push our techniques to their limits, we investigate cases that are close to but not covered by Theorem~\ref{thm:op}. Namely, we consider two particular choices of the graph $H$: the \emph{paw}, which is the graph obtained by adding a vertex of degree one to $K_3$
(\ie{}, $H = \tikz[every node/.style = small node, scale = 0.2, baseline=-0.1cm]{
\draw (0:1) node (a) {} -- (120:1) node {} -- (-120:1) node {} -- (a) -- ++ (1.4759, 0) node {};
}$), and $K_4-e$, also known as \emph{diamond} graph
(%
\ie{}, $H = \tikz[every node/.style = small node, scale = 0.2, baseline=-0.1cm]{
\draw (0:1) node (a) {} -- (120:1) node (b) {} -- (-120:1) node (c) {} -- (a) ++ (-2.95189, 0) node (d) {} (b) -- (d) -- (c);
}$). We combine our main tools with some ad hoc techniques to handle those two cases, thus obtaining the following.

\begin{theorem}\label{th:pawdiam}
There is an algorithm enumerating minimal dominating sets in paw-free (resp. diamond-free) graphs in output-polynomial time and polynomial space.
\end{theorem}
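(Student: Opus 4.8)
The plan is to handle the two classes separately, and in each case to reduce the problem to enumeration in a triangle-free graph, which is covered by Theorem~\ref{thm:op}: every triangle-free graph is in particular $(K_3+K_2)$-free, so taking $t=3$ gives an output-polynomial, polynomial-space enumerator for minimal dominating sets of triangle-free graphs. I will also use repeatedly that minimal dominating sets factor over connected components: a set $D$ is a minimal dominating set of $G$ if and only if $D\cap C$ is one for every connected component $C$, since both domination and the existence of private neighbours are local to components. Hence the minimal dominating sets of $G$ are exactly the tuples of the Cartesian product of the per-component families, and a standard product-enumeration argument -- iterating the per-component enumerators odometer-style and re-running them to recover successive solutions -- turns output-polynomial, polynomial-space enumerators for the components into one for $G$, the re-running keeping the space polynomial at the cost of a polynomial-factor time overhead.

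For the paw-free case I would invoke Olariu's characterisation: every connected paw-free graph is either triangle-free or complete multipartite. By the factorisation above it then suffices to enumerate minimal dominating sets in components of these two types. Triangle-free components are dispatched directly by Theorem~\ref{thm:op}. For a complete multipartite component with parts $V_1,\dots,V_k$, a short analysis of private neighbours shows that the minimal dominating sets are of only two shapes: a single full part $V_i$ (minimal because each $v\in V_i$ is its own private neighbour, as $N[v]\cap V_i=\{v\}$), or a pair $\{u,v\}$ with $u\in V_i$, $v\in V_j$, $i\ne j$, and $|V_i|,|V_j|\ge 2$ (the size conditions being exactly what forces a private neighbour for each of $u$ and $v$ in the opposite part). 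No minimal dominating set meets three or more parts, so there are only $O(n^2)$ of them and they can be listed explicitly; feeding both component types into the product argument settles the paw-free case.

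The diamond-free case is where the ad hoc work concentrates, and I expect it to be the main obstacle. The structural handle is that diamond-freeness is equivalent to every edge lying in a unique maximal clique; consequently the triangles are edge-disjoint, the maximal cliques of size at least three (the \emph{big cliques}) pairwise share at most one vertex, and, crucially, every vertex outside a big clique $K$ has at most one neighbour in $K$. In particular, deleting all edges internal to big cliques destroys every triangle, so the ``hard'' part of $G$ is concentrated in these well-separated cliques. For a fixed big clique $K$ the interaction with a minimal dominating set $D$ is tightly constrained: either $D$ meets $K$, in which case a single vertex of $K$ already dominates $K$ from inside and the separation property severely limits which subsets of $K$ can simultaneously have private neighbours, or $D$ avoids $K$ entirely and then, since each external vertex dominates at most one vertex of $K$, the set $D$ is forced to contain $|K|$ distinct external vertices, one private to each vertex of $K$.

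The difficulty is that a diamond-free graph may contain very many big cliques, so branching independently over the behaviour of $D$ on each clique would incur an exponential blow-up, and Theorem~\ref{thm:op} cannot be applied as a pure black box because diamond-free graphs (even connected ones) are not $(K_t+K_2)$-free for any fixed $t$. I therefore expect the real work to consist of weaving the clique analysis above into the recursive enumeration engine behind Theorem~\ref{thm:op}, resolving each big clique locally so that the residual instances handed back to the recursion are genuinely triangle-free, while keeping the branching, the private-neighbour bookkeeping across cliques and the triangle-free remainder, and the overall time within an output-polynomial budget and the space polynomial. Making this interaction fit an output-polynomial budget is the step I anticipate being the crux.
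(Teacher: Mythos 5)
Your proposal splits into two halves of very different completeness.

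The paw-free half is correct and takes a genuinely different route from the paper. You invoke Olariu's global structure theorem (every connected paw-free graph is triangle-free or complete multipartite), factor minimal dominating sets over connected components, dispatch triangle-free components to Theorem~\ref{thm:op} (or Theorem~\ref{thm:triangle-free}), and list the $O(n^2)$ minimal dominating sets of a complete multipartite component explicitly; your case analysis of the latter (a full part, or a cross-part pair $\{u,v\}$ with both parts of size at least $2$) is correct, as is the observation that no minimal dominating set meets three parts. The paper instead stays inside its peeling/ordered-generation framework: it uses only the local consequence that $G[N(u)]$ is $\overline{P_3}$-free for every vertex $u$, so that the set $S$ to be dominated at each step induces a complete multipartite graph, and Lemma~\ref{lem:paw-free-edge} then bounds the candidate extensions. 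Your route is more elementary and self-contained for this class; the paper's route proves the stronger bicolored statement (Theorem~\ref{thm:paw-free}) and reuses machinery shared with the other cases. Either way, this half stands.

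The diamond-free half has a genuine gap. You correctly isolate the structural facts (edges lie in unique maximal cliques; a vertex outside a big clique $K$ and not adjacent to its ``apex'' has at most one neighbour in $K$ --- essentially the paper's Lemma~\ref{lem:clique-neighborhoods}), but you stop exactly where the algorithm has to be built and explicitly defer ``the crux.'' The obstacle you name --- that branching over the behaviour of $D$ on each of the possibly many big cliques of $G$ blows up exponentially --- is real for a global attack, but it is precisely what the ordered-generation framework of Section~\ref{sec:bt} dissolves, and you do not take that step. In the paper, each stage of the peeling only requires enumerating minimal dominating sets of a set $S\subseteq N(v_{i+1})$; since $G[N(v_{i+1})]$ is $P_3$-free, $G[S]$ is a disjoint union of cliques, and by Lemma~\ref{lem:clique-neighborhoods} the relevant external neighbourhoods of a single such clique are pairwise disjoint, so its minimal dominating sets form an easily enumerable unordered Cartesian product (Lemma~\ref{lem:clique-domsetsenumeration}); a second, inner level of ordered generation over the cliques of $S$ (Lemma~\ref{lem:P3-free}) then handles all of $S$, and Corollary~\ref{cor:candexrep-to-dom} absorbs the resulting repetitions within an output-polynomial budget. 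Without a concrete mechanism of this kind --- one neighbourhood, hence one cluster graph, at a time --- your sketch does not yet constitute a proof for diamond-free graphs.
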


Our algorithms first decompose the input graph by successively removing closed neighborhoods in the fashion of~\cite{eiter2003new}.
We then follow this decomposition to construct partial minimal dominating sets, adding the neighborhoods back one after the other. A crucial point of this approach, known as \emph{ordered generation}, is that we can relate the enumeration of potential extensions of a partial minimal dominating set to the \DomEnum{} problem in a simpler class.

The paper is organized as follows. In Section~\ref{sec:prelim} we give the necessary definitions. 
The graph decompositions that we use, called \emph{peelings}, are introduced in Section~\ref{sec:bt} along with their main properties.
In Section~\ref{sec:triangle-free}, we give an algorithm for \DomEnum{} that runs in output-polynomial time in triangle-free graphs with better time bound than that coming from Theorem~\ref{thm:op}.
A generalization of this algorithm for $K_t$-free graphs is given in Section~\ref{sec:kt-free} (Theorem~\ref{thm:kt-free}). 
This algorithm is then extended to $(K_t+K_2)$-free graphs in Section~\ref{sec:variants} (Theorem~\ref{thm:ktme}).
In the same section, algorithms are given for diamond-free graphs (Theorem~\ref{thm:diamond-free}) and paw-free graphs (Theorem~\ref{thm:paw-free}), \ie, the two cases of Theorem~\ref{th:pawdiam}. 
We discuss in Section~\ref{sec:beyond} the obstacles to stronger theorems using the same tools. 
Finally, we conclude with possible future research directions in Section~\ref{sec:concl}.

\section{Preliminaries}
\label{sec:prelim}

\paragraph{Graphs.}
All graphs in this paper are finite, undirected, simple, and loopless. 
If $G=(V(G),E(G))$ is a graph, then $V(G)$ is its set of vertices and 
$E(G)\subseteq \{\{x,y\} \mid x,y\in V(G),\ x\neq y\}$ is its set of edges.
Edges are denoted by $xy$ (or $yx$) instead of $\{x,y\}$.
We assume that vertices are assigned distinct indices; these will be used to choose vertices in a deterministic way, typically selecting the vertex of smallest index.
A {\em clique} (respectively an \emph{independent set}) in a graph $G$ is a set of pairwise adjacent (respectively non-adjacent) vertices.
We note $\omega(G)$ the size of a largest clique in $G$.
The subgraph of $G$ \emph{induced} by $X\subseteq V(G)$, denoted by $G[X]$, is the graph $(X,E(G)\cap \{\{x,y\} \mid x,y\in X,\ x\neq y\})$; $G-X$ is the graph $G[V(G)\setminus X]$.
For every graph $H$, we say that $G$ is $H$-free if no induced subgraph of $G$ is isomorphic to~$H$.
If a vertex $v \in V(G)$ is adjacent to every vertex of a set $S \subseteq V(G)$, we say that $v$ is \emph{complete} to~$S$.

If the vertex set of a graph $G$ can be partitioned into one part inducing a clique and one part inducing an independent set (respectively two independent sets, two cliques), we say that $G$ is a \emph{split} (respectively \emph{bipartite}, \emph{co-bipartite}) graph.
If $f$ is a function, we write $f(n) = \poly(n)$ when there is a constant $c\in \mathbb{N}$ such that $f(n) \in O(n^c)$.

\paragraph{Neighbors and domination.}
Let $G$ be a graph and $x \in V(G)$.
We write $N(x)$ for the set of {\em neighbors} of $x$ in $G$ defined by $N(x)=\{y\in V(G)\mid xy\in E(G)\}$; $N[x]$ is the {\em closed neighborhood} of $x$ defined by $N[x]= N(x)\cup\{x\}$.
For a given $X\subseteq V(G)$, we note $N[X]=\bigcup_{x\in X} N[x]$ and $N(X)=N[X]\setminus X$.
Let $D$ be a set of vertices of $G$.
We say that $D$ \emph{dominates} a subset $S \subseteq V(G)$ if $S \subseteq N[D]$. It \emph{minimally dominates} $S$ if no proper subset of $D$ dominates $S$.
The set $D$ is a (\emph{minimal}) \emph{dominating set} of $G$ if it (minimally) dominates $V(G)$.
The set of all minimal dominating sets of $G$ is denoted by $\D(G)$ and the problem of enumerating $\D(G)$ given $G$ is denoted by \DomEnum{}.

Let $S \subseteq V(G)$.
A vertex $y \in V(G)$ is said to be a \emph{private neighbor} of some $x \in S$ if it is only dominated by $x$ in $S$, \ie, if $y\in N[S]$ but $y\not\in N[S\setminus \{x\}]$. Note that $x$ can be its own private neighbor.
The set of private neighbors of $x\in S$ in $G$ is denoted by $\priv_G(S,x)$ and we drop the subscript when it can be inferred from the context.
Observe that $S$ is a minimal dominating set of $G$ if and only if $V(G) \subseteq N[S]$ and for every $x\in S$, $\priv(S,x)\neq \emptyset$.

\paragraph{Enumeration.}
In this paper, to measure time and space complexity we assume the RAM model, where any integer can be stored in a single register and arithmetic operations on integers have unit cost~\cite{strozecki2019survey}. 
The aim of graph enumeration algorithms is to generate a set $\mathcal{X}(G)$ of objects related to a graph $G$. 
We say that an algorithm enumerating $\mathcal{X}(G)$ on input being an $n$-vertex graph $G$ is running in \emph{output-polynomial} time if its running time is polynomially bounded by the sizes of the input and output data, \ie, $n + |\mathcal{X}(G)|$. If an algorithm enumerates $\mathcal{X}(G)$ by spending $\poly(n)$-time (respectively $O(n)$-time) before it outputs the first element, between two output elements, and after it outputs the last element, then we say that it runs with \emph{polynomial delay} (respectively \emph{linear delay}). It is easy to see that every polynomial delay algorithm is also output-polynomial. Note, however, that there exist problems that admit output-polynomial time algorithms but no polynomial delay ones, unless \TFNPFP{} \cite{strozecki2019survey}. When discussing the space used by an enumeration algorithm, we mean the working space and we ignore the space where the solutions are output. If the existence of an output-polynomial time algorithm for a problem implies the existence of one for \DomEnum{}, we say that this problem is \DomEnum{}-hard. As mentioned in the introduction, we have the following. 

\begin{theorem}[Kant\'e et al.~\cite{kante2014split}]\label{thm:cobip-hard}
    \DomEnum{} restricted to co-bipartite graphs is \DomEnum{}-hard.
\end{theorem}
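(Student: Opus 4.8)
The plan is to reduce the general problem \DomEnum{} to its restriction to co-bipartite graphs, exploiting the equivalence between \DomEnum{} and \transv{} recalled in the introduction. Concretely, since the minimal dominating sets of a graph $G$ are exactly the minimal transversals of its closed-neighborhood hypergraph $\{N[v] : v \in V(G)\}$, the problem \DomEnum{} is a particular case of \transv{}. Hence it suffices to exhibit a reduction from an arbitrary instance of \transv{} to a co-bipartite instance of \DomEnum{}: from a hypergraph $\mathcal{H} = (V, \mathcal{E})$ I would build, in polynomial time, a co-bipartite graph $G_{\mathcal{H}}$ whose minimal dominating sets encode the minimal transversals of $\mathcal{H}$, so that any output-polynomial algorithm for co-bipartite \DomEnum{}, run on $G_{\mathcal{H}}$ and post-processed, would solve \transv{} (and therefore \DomEnum{}) in output-polynomial time.

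For the construction I would take two cliques $A = \{a_i : i \in V\}$ and $B = \{b_E : E \in \mathcal{E}\}$, which already makes $G_{\mathcal{H}}$ co-bipartite, and join $a_i$ to $b_E$ precisely when $i \in E$; any further gadget vertices are to be placed inside $A$ or $B$ so as to preserve co-bipartiteness. The key correspondence to establish is that the minimal dominating sets contained in $A$ are exactly the minimal transversals of $\mathcal{H}$. This would follow from the private-neighbor characterization recalled in Section~\ref{sec:prelim}: a set $T \subseteq A$ dominates $B$ if and only if the corresponding vertices meet every hyperedge, while minimality forces each $a_i \in T$ to own a private neighbor $b_E$, that is, a hyperedge hit by $i$ and by no other element of $T$ --- which is precisely the condition for $T$ to be an inclusion-wise minimal transversal.

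The main obstacle is that the naive two-clique graph has additional minimal dominating sets that do not lie in $A$, for instance sets of $B$-vertices whose hyperedges cover $V$, or mixed sets straddling $A$ and $B$; a priori these spurious solutions could be exponentially more numerous than the minimal transversals of $\mathcal{H}$, which would ruin the output-polynomial bound when we translate back. The crux of the proof is therefore to augment $G_{\mathcal{H}}$ with a bounded number of gadget vertices that (i) preserve the correspondence above, and (ii) guarantee that every minimal dominating set either corresponds to a minimal transversal or belongs to a family of size $\poly(|\mathcal{H}|)$ that is efficiently recognizable. I would verify (ii) by a case analysis on how a minimal dominating set $D$ meets $A$, $B$, and the gadget, using the private neighbors forced by the gadget to rule out the covering-type and mixed solutions. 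Once this is done, one has $|\D(G_{\mathcal{H}})| \le |\mathrm{Tr}(\mathcal{H})| + \poly(|\mathcal{H}|)$, so enumerating $\D(G_{\mathcal{H}})$, discarding the gadget solutions, and reading off the sets contained in $A$ yields all minimal transversals of $\mathcal{H}$ in time polynomial in $|\mathcal{H}| + |\D(G_{\mathcal{H}})|$, hence in the input plus output size of the \transv{} instance, establishing \DomEnum{}-hardness.
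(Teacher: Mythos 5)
First, a point of reference: the paper does not prove Theorem~\ref{thm:cobip-hard} at all --- it is imported verbatim from Kanté et al.~\cite{kante2014split} and used as a black box. So your attempt can only be measured against the proof in that reference, whose overall strategy you do reproduce correctly: two cliques, one indexed by the vertices and one by the hyperedges of $\mathcal{H}$, joined according to incidence, so that the minimal dominating sets contained in the vertex-side clique $A$ are exactly the minimal transversals of $\mathcal{H}$. Your private-neighbor argument for that correspondence is sound in both directions (when $|T|\geq 2$ every private neighbor is forced into the opposite clique and hence is a hyperedge hit only by the corresponding element, and singletons match up as well).

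The gap is that the entire technical content of the theorem is deferred to a gadget that you never construct. You correctly observe that the naive two-clique graph is useless for the reduction, and the problem is real: the minimal dominating sets contained in the hyperedge-side clique $B$ are the inclusion-wise minimal covers of $V$ by hyperedges, and these can be superpolynomially more numerous than the minimal transversals. For instance, if $\mathcal{H}$ consists of all singletons and all pairs of an $n$-element set, then $\mathcal{H}$ has a unique minimal transversal (namely $V$), yet every perfect matching by pairs yields a distinct minimal dominating set inside $B$, so $|\D(G_{\mathcal{H}})|$ is not bounded by $|\mathrm{Tr}(\mathcal{H})|+\poly(|\mathcal{H}|)$ and the reduction collapses. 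Your proposal acknowledges that ``the crux of the proof is to augment $G_{\mathcal{H}}$ with a bounded number of gadget vertices'' satisfying properties (i) and (ii), but it neither exhibits such vertices nor argues that they exist; announcing the crux is not carrying it out. A complete proof must specify the extra vertices (placed inside the cliques), and then run the private-neighbor case analysis on the augmented graph to show that every minimal dominating set is either a minimal transversal sitting in $A$ or one of $\poly(|\mathcal{H}|)$ explicitly recognizable exceptions --- in particular it must eliminate both the cover-type solutions inside $B$ and the mixed solutions consisting of a single vertex of one clique together with arbitrarily many vertices of the other, which your sketch mentions but does not rule out. Until that construction and analysis are supplied, what you have is a correct plan rather than a proof.
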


\smallskip

\section{Ordered generation in bicolored graphs}\label{sec:bt}

In this section, we give a general procedure that will be used in the rest of this paper for the enumeration of minimal dominating sets in \mbox{$K_t$-free} graphs and in related graph classes.
This procedure will construct minimal dominating sets one neighborhood at a time, in a variant of what is known as the {\em backtrack search technique} in \cite{read1975bounds,fukuda1997analysis,strozecki2019efficient}, and referred to as {\em ordered generation} in \cite{eiter2003new}.

In what follows, we find it more convenient to deal with the slightly more general setting of domination in bicolored graphs.
A \emph{bicolored graph} is a graph together with a subset of its vertex set. 
For a graph $G$ and a subset $A \subseteq V(G)$, we denote by $G(A)$ the bicolored graph $G$ with prescribed set~$A$.
We also say that $G$ has \emph{bicoloring} $(A,V(G)\setminus A)$.
Then, a \emph{dominating set of $G(A)$} is a set $D \subseteq V(G)$ that dominates $A$, \ie, such that $A \subseteq N[D]$.
It is (inclusion-wise) minimal if it does not contain any dominating set of $G(A)$ as a proper subset.
Intuitively, the vertices of $G-A$ may be used in the dominating set, but do not need to be dominated.
For every graph $G$ and subset $A\subseteq V(G)$, we denote by $\D(G, A)$ the set of minimal dominating sets of~$G(A)$.
Then $\D(G,A)=\D(G)$ whenever $A=V(G)$.

A \emph{peeling} of a bicolored graph $G(A)$ is a sequence of vertex sets $(V_0, \dots, V_p)$ such that $V_p = A$, $V_0 = \emptyset$, and for every $i\in \intv{1}{p}$, there is a vertex $v_i \in V_i$ such that
\[
  V_{i-1} = V_i \setminus N[v_i].
\]
We call $(v_1, \dots, v_p)$ the \emph{vertex sequence} of the peeling.
It is straightforward to see that given a bicolored graph $G(A)$, any peeling of $G(A)$ can be computed in $O(n^2)$ time and space: start with the whole set $A$, and as long as $A$ remains non-empty, pick a vertex $v$ in it and
remove $N[v]$ from $A$.
The representation of a peeling is given in Figure~\ref{fig:peel}. 

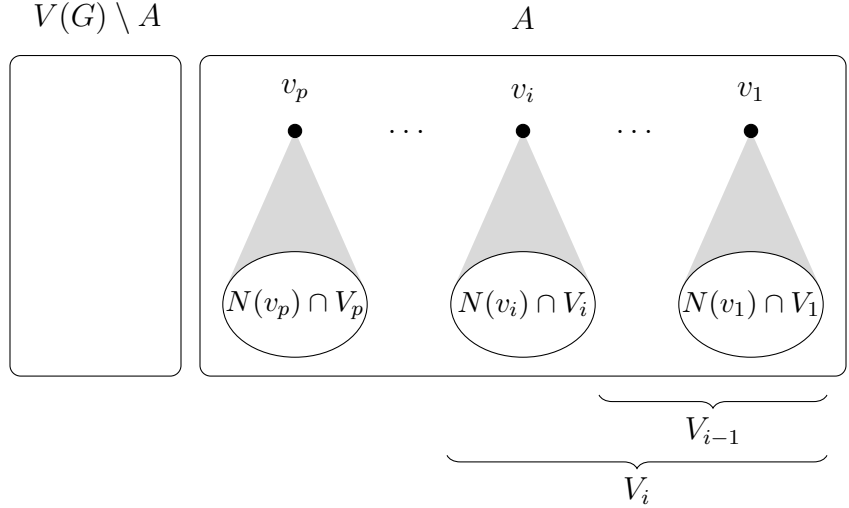
\begin{figure}
  \centering
  \begin{tikzpicture}[xscale = \figscaleAx, yscale = \figscaleAy]
    \draw[rounded corners] (-0.25,0) rectangle (8.25,-4.25);
    \draw (4, 0.5) node[normal] {$A$};
    \draw[rounded corners] (-0.5,0) rectangle (-2.75,-4.25);
    \draw (-1.6, 0.5) node[normal] {$V(G)\setminus A$};
    %
    \draw[every node/.style = black node, label distance=2mm]
    (1, -1) node[label = 90:$v_p$] (vp) {}
    (4, -1) node[label = 90:$v_{i}$] (vi) {}
    (7, -1) node[label = 90:$v_{1}$] (v1) {};
    \draw (2.5, -1) node[normal]{$\dots$};
    \draw (5.5, -1) node[normal]{$\dots$};
    %
    \fill[fill = black, opacity = 0.15] (vp) -- ++(-0.92cm, -2.1) -- ++(2*0.92, 0) -- (vp);
    \fill[fill = black, opacity = 0.15] (vi) -- ++(-0.92cm, -2.1) -- ++(2*0.92, 0) -- (vi);
    \fill[fill = black, opacity = 0.15] (v1) -- ++(-0.92cm, -2.1) -- ++(2*0.92, 0) -- (v1);
    %
    \draw[fill = white]
    (1, -3.3) ellipse (0.95cm and 0.7cm) node[normal] {\small $N(v_p)\cap V_p$}
    (4, -3.3) ellipse (0.95cm and 0.7cm) node[normal] {\small $N(v_{i})\cap V_i$}
    (7, -3.3) ellipse (0.95cm and 0.7cm) node[normal] {\small $N(v_1)\cap V_1$};
    %
    \draw[decorate,decoration={brace,amplitude=5pt}] (8, -4.5) -- ++(-3, 0) node[midway, yshift = -0.15cm, anchor = north, normal] {$V_{i-1}$};
    \draw[decorate,decoration={brace,amplitude=5pt}] (8, -5.3) -- ++(-5, 0) node[midway, yshift = -0.15cm, anchor = north, normal] {$V_i$};
  \end{tikzpicture}
  \caption{Representation of a peeling of a bicolored graph $G(A)$ constructed by iteratively removing $v_i$'s and their neighborhoods, for $i$ from $p$ to $1$.
  Note that vertices that are effectively removed at step $i$ are those of $N[v_i]\cap V_i$, as vertices in $N[v_i]\setminus V_i$ have already been removed at a previous step.
  A crucial property is that $v_i$ has no neighbor in $V_{i-1}$.}
  \label{fig:peel}
\end{figure}

In the remaining of this section, 
we consider a bicolored graph $G(A)$, together with a fixed peeling
$(V_0, \dots, V_{p})$ of $G(A)$ with vertex sequence $(v_1,\dots,v_p)$. 
Observe that $\D(G,V_p)=\D(G,A)$.
We now define the relation that will be used by our algorithm to enumerate the minimal dominating sets of $G(A)$ without repetition.
Recall that the sets of $\D(G, V_i)$ may contain vertices of $G-V_i$, which is a crucial point.

\begin{definition}\label{def:parent}
    Let $i\in \intv{0}{p-1}$ and $D\in \D(G, V_{i+1})$. We define $\parent(D,i+1)$ as the pair $(D^*,i)$ where $D^*$ is obtained from $D$ by exhaustively applying the following operation: as long as there exists a vertex $x$ in $D$ satisfying $\priv(D,x)\cap V_i=\emptyset$, remove from $D$ the vertex of smallest index with this property. 
\end{definition}

Clearly, there is a unique way to build $\parent(D,i+1)$ given $D$ and $i$. 
By construction, the obtained set $D^*$ is a minimal dominating set of $G(V_i)$. 
Hence, every set in $\D(G,V_{i+1})$ can be obtained by completing some $D^*$ in $\D(G,V_i)$; we develop this point below.

\needspace{0.4in}
\begin{proposition}\label{prop:parent-vi}
    Let $i\in \intv{0}{p-1}$ and $D^*\in \D(G,V_i)$. Then:
    \begin{enumerate}[(i)]
        \item if $D^*$ dominates $V_{i+1}$ then $D^* \in \D(G,V_{i+1})$ and $\parent(D^*,i+1)=(D^*,i)$;
        \item otherwise, $D^*\cup \{v_{i+1}\} \in \D(G,V_{i+1})$ and $\parent(D^*\cup \{v_{i+1}\},{i+1})=(D^*,i)$.
    \end{enumerate}
\end{proposition}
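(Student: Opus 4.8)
The plan is to handle the two cases of the statement separately; in each I verify both that the exhibited set lies in $\D(G, V_{i+1})$ and that $\parent(\cdot, i+1)$ returns $(D^*, i)$. Before starting, I would record two consequences of the peeling definition together with a working reformulation of membership in $\D(G, V_j)$. Since $V_{j-1} = V_j \setminus N[v_j]$ for all $j$, the sets are nested, $V_0 \subseteq V_1 \subseteq \cdots \subseteq V_p$; in particular $V_i \subseteq V_{i+1}$ and $V_{i+1} \setminus V_i \subseteq N[v_{i+1}]$. Crucially, $N[v_{i+1}] \cap V_i = \emptyset$, i.e.\ $v_{i+1}$ has no neighbor in $V_i$ (and $v_{i+1} \notin V_i$). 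Finally, writing $\priv(D,x) = N[x] \setminus N[D \setminus \{x\}]$, a set $D$ belongs to $\D(G, V_j)$ exactly when $V_j \subseteq N[D]$ and $\priv(D,x) \cap V_j \neq \emptyset$ for every $x \in D$.

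For case (i), assume $D^*$ dominates $V_{i+1}$. Membership in $\D(G, V_{i+1})$ is almost free: the domination requirement is the hypothesis, and for each $x \in D^*$ the inclusion $V_i \subseteq V_{i+1}$ upgrades $\priv(D^*,x) \cap V_i \neq \emptyset$ (available since $D^* \in \D(G, V_i)$) to $\priv(D^*,x) \cap V_{i+1} \neq \emptyset$. For the parent identity I simply observe that $\parent(D^*, i+1)$ only deletes vertices $x$ with $\priv(D^*,x) \cap V_i = \emptyset$, of which there are none by minimality of $D^*$ for $V_i$; hence nothing is removed and $\parent(D^*, i+1) = (D^*, i)$.

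For case (ii), assume $D^*$ does not dominate $V_{i+1}$ and put $D = D^* \cup \{v_{i+1}\}$. I would first note $v_{i+1} \notin D^*$: were it in $D^*$, minimality would force a private neighbor of $v_{i+1}$ inside $V_i$, impossible since $N[v_{i+1}] \cap V_i = \emptyset$. That $D$ dominates $V_{i+1}$ follows from $V_i \subseteq N[D^*]$ and $V_{i+1} \setminus V_i \subseteq N[v_{i+1}]$. For minimality I produce a private neighbor in $V_{i+1}$ for each element of $D$. For $v_{i+1}$, any vertex of $V_{i+1}$ left undominated by $D^*$ --- one exists by assumption --- is dominated only by $v_{i+1}$ in $D$. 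For $x \in D^*$, I pick a witness $y \in \priv(D^*,x) \cap V_i$ and check it persists: $y \notin N[D^* \setminus \{x\}]$ by choice, and $y \notin N[v_{i+1}]$ because $y \in V_i$ and $N[v_{i+1}] \cap V_i = \emptyset$, so $y \in \priv(D,x) \cap V_{i+1}$.

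It remains to compute $\parent(D, i+1)$ in case (ii), and this is the step I expect to carry the real content. The witnesses $y$ above show every $x \in D^*$ satisfies $\priv(D,x) \cap V_i \neq \emptyset$, whereas $\priv(D, v_{i+1}) \cap V_i = \emptyset$ since $N[v_{i+1}] \cap V_i = \emptyset$; thus $v_{i+1}$ is the unique deletable vertex at the first step and is removed. The remaining set is exactly $D^*$, which lies in $\D(G, V_i)$, so no vertex of it qualifies for deletion and the process halts, giving $\parent(D, i+1) = (D^*, i)$. The only genuinely delicate point is this last interplay between the smallest-index deletion rule and the $V_i$-private neighborhoods; everything hinges on the defining disjointness $N[v_{i+1}] \cap V_i = \emptyset$ of the peeling, which both kills $v_{i+1}$'s private claim on $V_i$ and protects the witnesses of the vertices of $D^*$.
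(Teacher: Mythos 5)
Your proof is correct and follows essentially the same route as the paper's: in both cases you verify domination, persistence of private neighbors via $V_i \subseteq V_{i+1}$, and then use the defining property $N[v_{i+1}] \cap V_i = \emptyset$ to show that $v_{i+1}$ steals no witnesses and is the unique vertex removed by $\parent$. The only difference is that you spell out a few details the paper leaves implicit (e.g.\ that $v_{i+1} \notin D^*$), which is fine.
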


\begin{proof}
    First note that since $D^*\in \D(G,V_i)$, for all $x\in D^*$ we have $\priv(D^*,x)\cap V_i\neq \emptyset$, implying also $\priv(D^*,x)\cap V_{i+1}\neq \emptyset$. Hence, if $D^*$ dominates $V_{i+1}$ then in fact $D^*$ is a minimal dominating set of $G(V_{i+1})$ and thus $D^*\in \D(G,V_{i+1})$. Since $\priv(D^*,x)\cap V_i\neq \emptyset$ for all $x\in D^*$, we then have that $\parent(D^*,i+1)=(D^*,i)$ directly from the definition.
    
    Suppose now that $D^*$ does not dominate $V_{i+1}$, and observe that then $D=D^*\cup \{v_{i+1}\}$ does.
    Moreover, $\priv(D,v_{i+1})\cap V_{i+1}\neq\emptyset$.
    Since $v_{i+1}$, by the definition of the peeling, is not adjacent to any vertex in $V_i$, it cannot steal any private neighbor from the elements of~$D^*$, \ie, $\priv(D^*,x)\cap V_i\neq\emptyset$ implies $\priv(D^*\cup \{v_{i+1}\},x)\cap V_i\neq\emptyset$ for any $x\in D^*$.
    Hence $\priv(D,x)\cap V_{i+1}\neq \emptyset$ for all $x\in D$. 
    Now, note that since $v_{i+1}$ does not steal private neighbors from the elements of~$D^*$, it is indeed the only node in $D$ with no private neighbors in $V_i$, and it is removed when constructing $\parent(D,i+1)$.
    Hence $\parent(D,i+1)=(D^*,i)$, as claimed.
\end{proof}

The $\parent$ function as introduced in Definition~\ref{def:parent} defines a tree on vertex set 
\[\{(D,i) \mid i \in \intv{0}{p},\ D \in \D(G,V_i)\},\] 
with leaves $\{(D,p) \mid D \in \D(G, A)\}$ and root $(\emptyset,0)$ (the empty set being the only minimal dominating set of the empty vertex set $V_0$). 
Our algorithms will search this tree in order to enumerate the minimal dominating sets of~$G(A)$.
Proposition~\ref{prop:parent-vi} guarantees that for every $i<p$ and every $D^* \in \D(G,V_i)$, the pair $(D^*,i)$ is the parent of some $(D,i+1)$ with $D \in \D(G,V_{i+1})$ (possibly $D = D^*$). 
Consequently, every branch of the tree leads to a different minimal dominating set of $G(A)$. 
In particular, for every $i \in \intv{0}{p-1}$ we have
\begin{equation}\label{eq:depthibound}
    |\D(G,V_i)| \leq |\D(G,V_{i+1})| \leq |\D(G,A)|.
\end{equation}

Given a set $D^* \in \D(G, V_i)$, we now focus on the enumeration of all $D \in \D(G,V_{i+1})$ such that $\parent(D, i+1)=(D^*, i)$. 
Any (inclusion-wise) minimal set $X\subseteq V(G)$ such that $V_{i+1}\subseteq N[D^*\cup X]$ will be called a {\em candidate extension} of $(D^*, i)$.
In~other words, $X$ is a candidate extension of $(D^*,i)$ if and only if it is a minimal dominating set of the bicolored graph $G$ with prescribed set $V_{i+1}\setminus N[D^*]$.
Then, we denote by $\C(D^*,i)$ the set of all candidate extensions of~$(D^*,i)$, \ie,
\begin{equation}\label{eq:cdstari}
        \C(D^*,i)\eqdef \D(G,V_{i+1}\setminus N[D^*]).
\end{equation}
Observe that if $(D,i+1)$ has $(D^*, i)$ as its parent, then $D\setminus D^*$ is candidate extension of $(D^*,i)$.
From Proposition~\ref{prop:parent-vi}, we also know that one of $(D^*, i + 1)$ and $(D^*\cup\{v_{i+1}\}, i + 1)$ has $(D^*, i)$ as its parent, hence that either $\emptyset$ or $\{v_{i+1}\}$ is a candidate extension of $(D^*,i)$.
Note that we have no guarantee that any other candidate extension forms a minimal dominating set of $V_{i+1}$, together with $D^*$.
We show that it is still reasonable to test each of the candidate extensions even though $D^*$ might have a unique child.

\begin{lemma}\label{lem:cand-ext-bound}
    Let $H(B)$ be a bicolored graph and $D \subseteq V(H)$. Then
    \[
        |\D(H, B \setminus N[D])| \leq |\D(H,B)|.
    \]
\end{lemma}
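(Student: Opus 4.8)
The plan is to exhibit an injection $\phi \colon \D(H, B\setminus N[D]) \to \D(H,B)$, which immediately yields the claimed inequality. Given $X \in \D(H, B\setminus N[D])$, the set $X \cup D$ dominates all of $B$: indeed $X$ dominates $B\setminus N[D]$ by assumption, while $D$ dominates $B \cap N[D]$. I would therefore define $\phi(X)$ as the minimal dominating set of $H(B)$ obtained from $X\cup D$ by exhaustively removing, in increasing order of index, any vertex that has no private neighbor in $B$ --- exactly the minimalization operation already used in Definition~\ref{def:parent}. This makes $\phi(X)$ a well-defined element of $\D(H,B)$ lying somewhere between $X\cup D$ and its minimal core.

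The crux is to show that no vertex of $X$ is ever deleted during this minimalization, so that $X \subseteq \phi(X) \subseteq X \cup D$. Fix $x \in X$. Since $X$ is a minimal dominating set of $H(B\setminus N[D])$, $x$ has a private neighbor $y \in \priv_H(X,x) \cap (B\setminus N[D])$. Because $y \notin N[D]$, no vertex of $D$ dominates $y$; combined with the fact that no vertex of $X\setminus\{x\}$ dominates $y$, this shows that $y$ remains a private neighbor of $x$ in $X \cup D$, and in fact in every subset of $X\cup D$ that contains $x$. Hence $x$ always retains a private neighbor in $B$ and is never removed. This is the same ``private neighbors lying outside $N[D]$ cannot be stolen'' phenomenon that drives Proposition~\ref{prop:parent-vi}, with the role of $v_{i+1}$ now played by the whole set $D$.

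It remains to prove injectivity, which I expect to be the main obstacle, since $\phi(X)$ and $\phi(X')$ could a priori coincide after the uncontrolled removals from $D$. I would handle this by constructing an explicit left inverse: for $Y \in \D(H,B)$, let $\psi(Y)$ be the minimalization of $Y$ as a dominating set of $B\setminus N[D]$. Writing $\phi(X) = X \cup D''$ with $D'' \subseteq D$, each vertex of $D''$ dominates only vertices of $N[D]$ and therefore has no private neighbor in $B\setminus N[D]$ at any stage, so every vertex of $D''$ is deleted by $\psi$; meanwhile each $x \in X$ keeps the private neighbor $y$ exhibited above, so no vertex of $X$ is deleted. Thus $\psi(\phi(X)) = X$, proving $\phi$ injective and hence the inequality. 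The only points requiring care are that both minimalization steps are order-independent on the relevant vertices --- which holds because the private-neighbor status of the $D''$-vertices and of the $X$-vertices does not depend on the deletion order --- and that $\phi(X)$ indeed satisfies $X \subseteq \phi(X) \subseteq X \cup D$.
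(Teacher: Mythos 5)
Your proof is correct and follows essentially the same route as the paper: map each $X$ to a minimal dominating set of $H(B)$ contained in $X \cup D$, and use the fact that a private neighbor lying in $B \setminus N[D]$ cannot be dominated by $D$ to conclude $X \subseteq \phi(X)$. The paper obtains injectivity more cheaply by noting that $X \cap D = \emptyset$ (a vertex of $D$ cannot have a private neighbor outside $N[D]$), so $X = \phi(X) \setminus D$ is recovered directly, making your left inverse $\psi$ and the order-independence discussion unnecessary.
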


\begin{proof}
We argue that for every $X \in \D(H, B \setminus N[D])$ we can find a set $D_X\in \D(H, B)$ so that the sets $D_X$ are pairwise different for different $X$; this assertion immediately implies the desired inequality.
For this, we define $D_X$ as any minimal dominating set of $H(B)$ that is a subset of $D \cup X$; such a set exists as $D \cup X$ dominates~$B$.
By definition, every vertex of $X$ has a private neighbor in $B \setminus N[D]$ so we have $X \subseteq D_X$. Moreover, since $X$ is a minimal dominating set of $B \setminus N[D]$, $X$ is disjoint with $D$. We conclude that $X=D_X\setminus D$, and hence that the sets $D_X$ are pairwise different for different $X$.
\end{proof}

As a consequence of Lemma~\ref{lem:cand-ext-bound} and Inequality~\eqref{eq:depthibound}, we have the following.
\begin{corollary}\label{cor:cibound}
Let $i\in \intv{0}{p-1}$ and $D^*\in \D(G,V_i)$.
Then $|\C(D^*,i)|\leq|\D(G, A)|$.
\end{corollary}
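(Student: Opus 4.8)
The plan is to combine the two results that immediately precede the corollary. Recall from Equation~\eqref{eq:cdstari} that $\C(D^*,i)$ is by definition the set $\D(G,V_{i+1}\setminus N[D^*])$ of candidate extensions, so the statement $|\C(D^*,i)|\leq|\D(G,A)|$ is really a statement about the size of a family of minimal dominating sets in a bicolored graph with a restricted prescribed set. The natural strategy is therefore to first compare $|\C(D^*,i)|$ with $|\D(G,V_{i+1})|$ and then compare the latter with $|\D(G,A)|$.

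First I would apply Lemma~\ref{lem:cand-ext-bound} to the bicolored graph $H(B)=G(V_{i+1})$ with the set $D=D^*$. Since $B\setminus N[D]$ becomes $V_{i+1}\setminus N[D^*]$, the lemma yields
\[
    |\C(D^*,i)| = |\D(G,V_{i+1}\setminus N[D^*])| \leq |\D(G,V_{i+1})|.
\]
Next I would invoke Inequality~\eqref{eq:depthibound}, which already establishes the monotonicity $|\D(G,V_{i+1})|\leq|\D(G,A)|$ along the peeling. Chaining these two inequalities together gives $|\C(D^*,i)|\leq|\D(G,A)|$, as desired.

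The bulk of the conceptual work has already been done in Lemma~\ref{lem:cand-ext-bound} and in the tree structure argument leading to~\eqref{eq:depthibound}, so the proof of the corollary itself is essentially a one-line composition of two bounds. The only point requiring a moment of care—and the place I would expect a reader to stumble if anything—is the correct instantiation of the abstract lemma: one must match $H(B)$ to $G(V_{i+1})$ rather than to $G(V_i)$ or $G(A)$, and match $D$ to $D^*$, so that the quantity $B\setminus N[D]$ lines up precisely with the defining set $V_{i+1}\setminus N[D^*]$ from~\eqref{eq:cdstari}. Once that identification is made explicit, there is no genuine obstacle; the result follows immediately.
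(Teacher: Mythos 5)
Your proof is correct and matches the paper exactly: the corollary is stated there as an immediate consequence of Lemma~\ref{lem:cand-ext-bound} and Inequality~\eqref{eq:depthibound}, instantiated precisely as you describe with $H(B)=G(V_{i+1})$ and $D=D^*$.
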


We conclude the ordered generation procedure with the following statement, which reduces the existence of an output-polynomial time algorithm enumerating $\D(G,A)$ to the existence of one enumerating $\C(D^*,i)$ for any $i\in \intv{0}{p-1}$ and $D^*\in \D(G,V_i)$.

\begin{theorem}\label{thm:ordered-generation}
    Let $f\colon \N^2 \to \N$ and $s\colon \N\to \N$ be two functions.
    Assume that there is an algorithm that, given a bicolored graph $G(A)$ on $n$ vertices, 
    a peeling $(V_0,\dots, V_{p})$ of $G(A)$, $i\in \intv{0}{p-1}$, and $D^*\in \D(G,V_i)$, enumerates $\C(D^*, i)$ in time at most $f(n,|\D(G,A)|)$ and space at most $s(n)$.
    Then there is an algorithm that, given a bicolored graph $G(A)$ on $n$ vertices, enumerates the set $\D(G,A)$ in time
    \[
        O(n^4d^2 + f(n,d)\cdot nd)
    \]
    and space $O(n \cdot s(n))$, where $d=|\D(G,A)|$.
\end{theorem}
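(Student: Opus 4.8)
The plan is to run the algorithm as a depth-first traversal of the tree defined by $\parent$, whose nodes are the pairs $(D,i)$ with $i\in\intv{0}{p}$ and $D\in\D(G,V_i)$, whose root is $(\emptyset,0)$, and whose leaves $(D,p)$ are exactly the elements of $\D(G,A)$. First I would compute a peeling $(V_0,\dots,V_p)$ of $G(A)$, which costs $O(n^2)$ time and space. The traversal then starts at the root, recurses from a node to each of its children, and outputs $D$ upon reaching a leaf $(D,p)$; since the text already records that distinct branches lead to distinct minimal dominating sets of $G(A)$ and Proposition~\ref{prop:parent-vi} guarantees that every internal node has a child, this produces each set of $\D(G,A)$ exactly once.

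Everything thus reduces to listing, for a node $(D^*,i)$ with $i<p$, its children, and the plan is to do this through the following correspondence. If $(D,i+1)$ is a child of $(D^*,i)$ then $D^*\subseteq D$ by Definition~\ref{def:parent}, so $X:=D\setminus D^*$ is well defined; one checks that $X$ is disjoint from $D^*$ and that $X\in\C(D^*,i)$, the latter because $D^*$ dominates $V_i$, so the private neighbors in $V_{i+1}$ witnessing minimality of $D$ must lie in $V_{i+1}\setminus N[D^*]$, which is exactly the prescribed set defining $\C(D^*,i)$ via~\eqref{eq:cdstari}. Conversely, for each $X\in\C(D^*,i)$ I would set $D:=D^*\cup X$ and apply two tests: (a) whether $D\in\D(G,V_{i+1})$, and (b) whether $\parent(D,i+1)=(D^*,i)$. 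The children of $(D^*,i)$ are then exactly the sets $D^*\cup X$ passing both tests, and distinct $X$ yield distinct children since $X=D\setminus D^*$. The candidate extensions $\C(D^*,i)=\D(G,V_{i+1}\setminus N[D^*])$ are generated by invoking the assumed subroutine with parameters $i$ and $D^*$, which runs in time $f(n,d)$ and space $s(n)$ and, by Corollary~\ref{cor:cibound}, outputs at most $d$ of them.

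The step I expect to be the crux is the necessity of test~(b). Test~(a) alone does not prevent repetitions: adding $X$ to $D^*$ may \emph{steal} private neighbors from vertices of $D^*$, so that the greedy removal of Definition~\ref{def:parent} applied to $D^*\cup X$ deletes some vertices of $D^*$ and returns a pair strictly below $(D^*,i)$. Were such a set accepted as a child of $(D^*,i)$, it would also be reached through its genuine ancestor, and the same minimal dominating set would be output more than once; checking $\parent(D^*\cup X,i+1)=(D^*,i)$ discards precisely these spurious children and is what makes the traversal repetition-free. The surrounding computations are routine: test~(a) is a domination check on $V_{i+1}$ together with a nonempty-private-neighbor check, and test~(b) computes $\parent$ straight from Definition~\ref{def:parent}; both run in $\poly(n)$ time, for which I will use $O(n^3)$ as a safe bound.

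It remains to assemble the complexity. By~\eqref{eq:depthibound} each level holds at most $d$ nodes and there are $p+1\le n+1$ levels, so the tree has $O(nd)$ nodes, each visited once. Per node the subroutine call costs $f(n,d)$, totalling $O(nd\cdot f(n,d))$; and the at most $d$ candidate extensions each incur the $O(n^3)$ tests, totalling $O(n^4d^2)$, plus the negligible $O(n^2)$ for computing the peeling. This gives time $O(n^4d^2+f(n,d)\cdot nd)$. For the space, a recursive implementation keeps one active subroutine instance per level of recursion; as the recursion depth is $O(n)$ and each frame stores an $O(n)$-size set alongside one instance using $s(n)$ space, the working space is $O(n\cdot s(n))$.
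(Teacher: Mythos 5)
Your proposal is correct and follows essentially the same route as the paper: a DFS of the $\parent$-tree, children of $(D^*,i)$ recovered from candidate extensions via the two tests (minimal domination of $V_{i+1}$ and the parent check), with the identical node/level counting for the time bound and the same lazy, one-paused-subroutine-per-level argument for the $O(n\cdot s(n))$ space bound.
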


\begin{proof}
    Let us assume that there exists an algorithm \algob{} that, given a bicolored graph $G(A)$ on $n$ vertices, a peeling $(V_0,\dots, V_{p})$ of $G(A)$, $i\in \intv{0}{p-1}$ and $D^*\in \D(G,V_i)$, enumerates $\C(D^*,i)$ in time at most $f(n,|\D(G,A)|)$ and space at most $s(n)$. Note that we may assume that $s(n)\in \Omega(n)$, as \algob{} needs to store its input.
    We describe an algorithm \algoa{} that enumerates $\D(G,A)$ within the specified time and space complexities.
    
    The algorithm first checks whether $A = \emptyset$ and, if so, returns $\{\emptyset\}$.
    Otherwise, it computes a peeling $(V_0,\dots,V_{p})$ of $G(A)$ in time $O(n^2)$ and using $O(n^2)$ space.
    Recall that the $\parent$ relation defines a tree $T$ on vertex set 
    \[
        \{(D,i) \mid i \in \intv{0}{p},\ D \in \D(G,V_i)\},
    \] 
    with leaves $\{(D,p) \mid D \in \D(G, A)\}$ and root $(\emptyset,0)$. 
    Therefore, in order to enumerate $\D(G,A)$, it is enough for \algoa{} to enumerate the leaves of~$T$.
    To do so, the algorithm performs a depth-first search (DFS) of $T$ outputting each visited leaf.
    For each node $(D^*,i)$, $i\in \intv{0}{p-1}$ of $T$, the algorithm runs \algob{} on input $(G(A), (V_0,\dots,V_{p}), i, D^*)$ to generate $\C(D^*,i)$ in time $f(n,d)$ and space $s(n)$.
    For every $X\in \C(D^*,i)$ generated by~\algob{}, the algorithm tests whether $D^*\cup X$ is a minimal dominating set of $V_{i+1}$, and whether $\parent(D^*\cup X,i+1)=(D^*,i)$.
    This requires $O(n^3)$ steps per candidate extension, and a total working space of $O(n)$, disregarding the space needed to store the (globally fixed) graph~$G$.
    As by Corollary~\ref{cor:cibound} we have $|\C(D^*,i)|\leq |\D(G,A)|=d$, the total time spent by \algoa{} at each node of $T$ is bounded
    by $O(n^3d + f(n,d))$.
    By Inequality~\eqref{eq:depthibound} we have $|V(T)|\leq pd$ and clearly $p\leq n$, so the total running time of \algoa{} is bounded by 
    \[
        O(n^4d^2 + f(n,d)\cdot nd).
    \]
    
    Regarding the space, we observe that whenever we visit a node of $T$, we do not need to compute the whole set of its children. Instead, it is enough in order to continue the DFS to compute the next unvisited child only, which can be done using \algob{} and pausing it afterward. Therefore, when we visit some $(D, i) \in V(T)$, we only need to store the data of the $i-1$ (paused) executions of \algob{} enumerating the children of the ancestors of $(D, i)$, plus the data of the algorithm enumerating the children of $D$, \ie, $i\cdot (O(n)+s(n))$ space. 
    As $s(n) \in \Omega(n)$, the described algorithm uses $O(n\cdot s(n))$ space, as claimed.
\end{proof}

\section{Candidate extensions in triangle-free graphs}\label{sec:triangle-free}

We show that candidate extensions can be enumerated in output-polynomial time in triangle-free graphs, which by Theorem~\ref{thm:ordered-generation} leads to an output-polynomial time algorithm enumerating minimal dominating sets in this class of graphs.
In fact, our result holds in the more general context where only the graph induced by the set that needs to be dominated is required to be triangle-free, and not necessarily the whole graph, a point that is discussed in Section~\ref{sec:beyond}.

In the following, we consider a bicolored graph $G(A)$ on $n$ vertices.
Moreover, we have a fixed peeling $(V_0,\dots, V_{p})$ of $G(A)$ with vertex sequence $(v_1,\dots,v_p)$.
Then, we consider
\begin{align*}
    & i \in \intv{0}{p-1},
    & D^* \in \D(G,V_i),
\end{align*}
and define $\C(D^*,i)$ as in Equality~\eqref{eq:cdstari} in Section~\ref{sec:bt}.
We will show how to enumerate $\C(D^*,i)$ in output-polynomial time whenever $G[A]$ is triangle-free.

Kant\'e, Limouzy, Mary, and Nourine gave the following characterization of minimal dominating sets in split graphs.

\begin{proposition}[\cite{kante2014split}]\label{prop:split-properties}
    Let $H$ be a split graph with vertices partitioned into an independent set $S$ and a clique~$C$, where $S$ is taken to be (inclusion-wise) maximal.
    Then, for every $D\in \D(H)$ the following holds:
    \begin{enumerate}[(i)]
        \item $D\cap S=S\setminus N(D\cap C)$, so in particular $D$ is uniquely determined by its intersection with the clique; and
        \item for every $x\in D$, $\priv(D,x)\cap S\neq \emptyset$.
    \end{enumerate}
    Furthermore, $\D(H)$ can be enumerated with delay $O(n^2)$ and using $O(n^2)$ space.
\end{proposition}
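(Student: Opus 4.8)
The plan is to isolate the single consequence of maximality that drives the whole statement: since $S$ is a maximal independent set, every vertex of $C$ has a neighbor in $S$. I will also repeatedly use that, $S$ being independent, every neighbor of a vertex $s\in S$ lies in $C$; hence $s$ is dominated by a set $D$ if and only if $s\in D$ or $s\in N(D\cap C)$.

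For part (i) I would prove the two inclusions of $D\cap S=S\setminus N(D\cap C)$ separately. The inclusion $\supseteq$ is immediate from the previous remark: if $s\in S$ has no neighbor in $D\cap C$, then the only way for the dominating set $D$ to cover $s$ is to contain it. For $\subseteq$, I argue by contradiction: if some $s\in D\cap S$ had a neighbor $c\in D\cap C$, then $s$ would have no private neighbor, since $s$ itself is dominated by $c$, and every neighbor $c'\in N(s)\subseteq C$ is dominated by $c$ as well (either $c'=c$, or $c'c\in E(H)$ because $C$ is a clique); this contradicts minimality. Uniqueness of $D$ given $D\cap C$ then follows from $D=(D\cap C)\cup(D\cap S)$.

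Part (ii) splits according to whether $x\in S$ or $x\in C$. If $x\in D\cap S$, then by part (i) $x\notin N(D\cap C)$, so $x$ has no neighbor in $D$ and is its own private neighbor, which already lies in $S$. The case $x\in D\cap C$ is the crux, and the step I expect to be the main obstacle. Minimality gives some $y\in\priv(D,x)$; if $y\in S$ we are done, so suppose $y\in C$. Since $C$ is a clique, every vertex of $(D\cap C)\setminus\{x\}$ is adjacent to $y$, while $y\notin N[D\setminus\{x\}]$; this forces $D\cap C=\{x\}$. Maximality now yields a neighbor $s\in S$ of $x$, and part (i) gives $s\notin D$ because $s\in N(x)=N(D\cap C)$. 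As $D\setminus\{x\}\subseteq S$ is independent and $s\in S$, the vertex $s$ is dominated in $D$ only by $x$, so $s\in\priv(D,x)\cap S$, as required.

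For the enumeration I would convert parts (i) and (ii) into an efficient membership test. Writing $W=D\cap C$, the two properties together with maximality show that $D\mapsto W$ is a bijection between $\D(H)$ and the family $\mathcal W$ of sets $W\subseteq C$ in which every $w\in W$ has a neighbor $s\in S$ with $N(s)\cap W=\{w\}$; conversely each such $W$ gives the minimal dominating set $W\cup(S\setminus N(W))$, domination of $C$ being automatic when $W\neq\emptyset$ and following from maximality when $W=\emptyset$. The key point is that $\mathcal W$ is \emph{downward closed}: if $W\in\mathcal W$ and $W'\subseteq W$, then a vertex $s$ isolating $w$ inside $W$ still satisfies $N(s)\cap W'=\{w\}$. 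Enumerating all members of a downward-closed family equipped with a polynomial-time membership oracle is then standard by binary partition over $C$: fix an order on $C$ and recurse by deciding membership of the current vertex, descending into a branch only while the set of already-selected vertices remains in $\mathcal W$. Downward closure guarantees that this pruning never discards a reachable solution, and that every explored node extends to the solution $W=I$ obtained by rejecting all remaining vertices, so every node yields output. Membership in $\mathcal W$ reduces to maintaining $|N(s)\cap W|$ for each $s\in S$; with this bookkeeping a move along the search tree costs $O(n)$ and the depth is at most $|C|\le n$, giving polynomial delay in polynomial space. The only remaining care is the incremental maintenance needed to sharpen this to the stated $O(n^2)$ delay and $O(n^2)$ space.
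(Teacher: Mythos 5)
The paper does not prove this proposition: it is imported verbatim from Kant\'e, Limouzy, Mary and Nourine~\cite{kante2014split}, so there is no in-paper argument to compare yours against. Your proof is correct and self-contained: the two inclusions for (i), the case analysis for (ii) (including the forced conclusion $D\cap C=\{x\}$ when the private neighbor lies in $C$, followed by the appeal to maximality of $S$), and the bijection $D\mapsto D\cap C$ onto the downward-closed family $\mathcal W$ all check out, and binary partition over a downward-closed family with a polynomial membership oracle is indeed the standard way to get polynomial delay here. The one point you rightly flag as needing care is the membership update: when you add $c_j$ to the current set $I$, it is not enough to find a private witness for $c_j$ itself; you must also verify that no previously selected $w\in I$ has lost its last witness $s$ with $N(s)\cap I=\{w\}$. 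Maintaining, for each $w\in I$, a count of such witnesses (updated only for the $s\in N(c_j)\cap S$ whose degree into $I$ passes from $1$ to $2$) makes each move cost $O(n)$, which over depth $|C|\le n$ gives the stated $O(n^2)$ delay and space.
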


We can now use Proposition~\ref{prop:split-properties} to establish the following understanding of candidate extensions in terms of minimal dominating sets of an auxiliary split graph.
The set $S$ in the next lemma corresponds to the elements that, together with $v_{i+1}$, must be dominated by the candidate extensions of $(D^*,i)$.
This situation is depicted in Figure~\ref{fig:characterization}.

\begin{lemma}\label{lem:triangle-free-candidates-characterization}
    Suppose that $N(v_{i+1})$ is an independent set.
    Let $S = V_{i+1}\setminus (N[D^*] \cup \{v_{i+1}\})$, $C=N(S)\setminus \{v_{i+1}\}$, and let $H$ be the split graph obtained from $G[S\cup C]$ by completing $C$ into a clique.
    Then, every set $X$ in $\C(D^*,i)$ either belongs to $\D(H)$, or belongs to $\D(H)$ after removing one vertex (\ie, $X\setminus \{u\}\in \D(H)$ for some $u\in X$), or is such that $X=\{v_{i+1}\}$.
	Moreover, $|\D(H)|\leq n \cdot |\C(D^*,i)|+1$.
\end{lemma}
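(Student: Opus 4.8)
The plan is to first certify that $H$ really is a split graph to which Proposition~\ref{prop:split-properties} applies, then to read the three alternatives off the private neighbours of a candidate extension, and finally to build a counting map from $\D(H)$ into $\C(D^*,i)$. First I would record the decisive structural fact about $S$. Since $D^*\in\D(G,V_i)$ dominates $V_i$ and $V_i=V_{i+1}\setminus N[v_{i+1}]$, every vertex of $V_{i+1}$ that $D^*$ fails to dominate lies in $N[v_{i+1}]$; hence $S\subseteq N(v_{i+1})$, so $S$ is independent (as $N(v_{i+1})$ is) and $v_{i+1}$ is complete to $S$. Because every vertex of $C=N(S)\setminus\{v_{i+1}\}$ has a neighbour in $S$, the set $S$ is a \emph{maximal} independent set of $H$ and $C$ a clique, so $H$ is a split graph satisfying the hypotheses of Proposition~\ref{prop:split-properties}. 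Crucially, $N_H[x]\cap S=N_G[x]\cap S$ for every $x\in S\cup C$, so domination of $S$ and private neighbours inside $S$ are the same in $G$ and in $H$.

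Next I would analyse $X\in\C(D^*,i)=\D(G,P)$ where $P:=V_{i+1}\setminus N[D^*]\subseteq S\cup\{v_{i+1}\}$. If $v_{i+1}\in X$, then since $v_{i+1}$ is complete to $S$ and dominates itself, minimality forces $X=\{v_{i+1}\}$, the third alternative; so assume $v_{i+1}\notin X$. For each $x\in X$, $\priv(X,x)$ meets $P$; if it avoids $S$ it must equal $\{v_{i+1}\}$, which makes $x$ the unique dominator of $v_{i+1}$ in $X$. Hence \emph{at most one} vertex $u\in X$ lacks a private neighbour in $S$. Setting $Z=X\setminus\{u\}$ if such $u$ exists and $Z=X$ otherwise, every vertex of $Z$ has a private neighbour in $S$ and so lies in $S\cup C$, and $Z$ still dominates $S$ because the discarded $u$ privately dominated nothing there. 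I would then verify $Z\in\D(H)$: it dominates $S$, and it dominates $C$ as well since either $Z\cap C\neq\emptyset$ (and $C$ is a clique in $H$) or $Z\subseteq S$ forces $Z=S$, which dominates $C=N(S)\setminus\{v_{i+1}\}$; minimality in $H$ follows because the $S$-private neighbours survive. This yields $X\in\D(H)$ or $X\setminus\{u\}\in\D(H)$, the first two alternatives.

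For the inequality I would exploit Proposition~\ref{prop:split-properties}(ii): every vertex of $D\in\D(H)$ has a private neighbour in $S$, and as $S$-domination agrees in $G$ and $H$, this means $D$ \emph{minimally} dominates $S$ in $G$, i.e.\ $\D(H)\subseteq\D(G,S)$. If $v_{i+1}\in N[D^*]$ then $P=S$, so $\C(D^*,i)=\D(G,S)\supseteq\D(H)$ and we are done. Otherwise $P=S\cup\{v_{i+1}\}$, and I would split $\D(H)$ according to whether $v_{i+1}$ is dominated: those $D$ that dominate $v_{i+1}$ already belong to $\C(D^*,i)$, while for those that do not we have $D\subseteq C$ with no vertex adjacent to $v_{i+1}$. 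For the latter, adding the smallest-index $s_0\in S$ and reducing $D\cup\{s_0\}$ to a minimal dominating set $X_D$ of $S\cup\{v_{i+1}\}$ gives an element of $\C(D^*,i)$; the reduction can only delete the single vertex of $D$ whose unique $S$-private neighbour is $s_0$, so $D$ is recoverable from $X_D$ up to one vertex of $C$. This bounds the number of preimages of each candidate extension by $n$, and the degenerate case $S=\emptyset$ supplies the additive $1$.

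The hard part will be precisely this last step, because membership in $\D(H)$ requires dominating the clique $C$, whereas membership in $\C(D^*,i)$ only requires dominating $S$ (and possibly $v_{i+1}$); Proposition~\ref{prop:split-properties}(ii) is exactly what forces a minimal dominating set of $H$ to be frugal on $S$ and thus collapses this discrepancy. The residual difficulty is the bookkeeping needed to dominate the extra vertex $v_{i+1}$ while keeping the correspondence with $\C(D^*,i)$ under control, and it is this that produces the factor $n$ and the additive constant in the stated bound.
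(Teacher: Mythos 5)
Your proof is correct and follows essentially the same route as the paper: the same auxiliary split graph $H$, the same use of Proposition~\ref{prop:split-properties} to transfer domination and private neighbours in $S$ between $G$ and $H$, and a counting map from $\D(H)$ into $\C(D^*,i)$ with $O(n)$ preimages per target. The only differences are cosmetic: you isolate the at most one vertex of $X$ whose sole private neighbour is $v_{i+1}$ directly rather than via the paper's case split, and your repair map (add a fixed $s_0\in S$ and minimalize, partitioning $\D(H)$ by whether $v_{i+1}$ is dominated instead of by whether $D\cap S=\emptyset$) yields the marginally weaker but equally serviceable bound $(n+1)\cdot|\C(D^*,i)|$.
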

\begin{proof}
Consider any $X\in \C(D^*,i)$, $X\neq \{v_{i+1}\}$. Then, by definition, $X$ is a minimal dominating set of $V_{i+1}\setminus N[D^*]$. 
By assumption as $V_{i+1}\setminus N[D^*]\subseteq N[v_{i+1}]$, we have $v_{i+1}\notin X$.
Observe then that $X\subseteq C\cup S$.

We first consider the case when $v_{i+1}\in N[D^*]$. Then $V_{i+1}\setminus N[D^*]=S$ and, as $H$ is a supergraph of $G[S\cup C]$ and $S$ remains an independent set in $H$, it follows that $X$ minimally dominates $S$ in $H$. Note that either $X$ contains a vertex of $C$, and then this vertex dominates $C$ in $H$, or $X=S$, and then $X$ dominates $C$ in $H$ as well. We conclude that $X$ is a minimal dominating set of $H$, \ie{}, that $X\in \D(H)$ in this case.

We now consider the remaining case when $v_{i+1}\notin N[D^*]$; then $V_{i+1}\setminus N[D^*]=S\cup \{v_{i+1}\}$. Observe that now either $X$ is a minimal dominating set of $S$, or there exists $u\in N(v_{i+1})\cap X$ such that $X\setminus \{u\}$ is a dominating set of $S\setminus N[u]$, \ie{}, of $S\setminus \{u\}$ as $N(v_{i+1})$ is an independent set.
Denote $D=X$ in the former case and $D=X\setminus \{u\}$ in the latter case; we now apply a reasoning similar to that from the previous paragraph.
Since $v_{i+1}\notin D$ and $D$ is a minimal dominating set of $S$ or $S \setminus \{u\}$, it follows that $D\subseteq C\cup S$.
If $D\subseteq S$, then either $D = S \setminus \{u\}$ (if $u$ is defined and $u \in S$) or $D = S$ (otherwise), because $N(v_{i+1})$ (hence in particular $S$) is an independent set.
If $D\not\subseteq S$, then $D\cap C\neq\emptyset$.
In both cases, $D$ dominates $C$ in~$H$, and we conclude that $X$ (if $u$ is defined and $u \in S$) or $D$ (otherwise) is a dominating set of~$H$. Moreover, since $\priv(D,x)\cap S\neq\emptyset$ for each $x\in D$ it follows that $X$ or $D$ is a minimal dominating set of $H$. As $X=D$ or $X=D\cup \{u\}$ for some vertex $u$, we conclude that either $X$ belongs to $\D(H)$, or it belongs to $\D(H)$ after removing one vertex, a vertex that was here only to dominate $v_{i+1}$.

Having considered both cases, the claimed property of elements of $\C(D^*,i)$ follows.
We are left with proving the claimed upper bound on $|\D(H)|$. We first show that 
\begin{align}
    |\{D\in \D(H) \mid D\cap S=\emptyset\}| \leq (n-1) \cdot |\{D\in \D(H) \mid D\cap S\neq \emptyset\}|+1.\label{eq:triangle-free-trashbound}
\end{align}
Indeed, consider the map $f$ that, given $D \in \{D\in \D(H) \mid D\cap S=\emptyset\}$, $D\neq \emptyset$, removes one arbitrary vertex from $D$, and completes the dominating set by adding all the vertices in the independent set which are no longer dominated.
Then, $f$ maps non-empty elements of $\{D\in \D(H) \mid D\cap S=\emptyset\}$ to the set $\{D\in \D(H) \mid D\cap S\neq\emptyset\}$. 
Moreover, every element in this second set is the image of at most $|C| \leq n-1$ elements by $f$. 
This implies the desired bound.

From Inequality~\eqref{eq:triangle-free-trashbound} we immediately obtain that
$$|\D(H)| \leq n \cdot |\{D\in \D(H) \mid D\cap S\neq \emptyset\}|+1,$$
so it suffices to prove that 
$$|\{D\in \D(H) \mid D\cap S\neq \emptyset\}|\leq |\C(D^*,i)|.$$
To see this, we observe that in fact we have $\{D\in \D(H) \mid D\cap S\neq \emptyset\}\subseteq \C(D^*,i)$.
Indeed, by Proposition~\ref{prop:split-properties} we have that every $D\in \D(H)$ is a minimal dominating set of $S$ in $G$, and it moreover dominates $v_{i+1}$ provided $D\cap S\neq \emptyset$.
\end{proof}


\definecolor{redcol}{HTML}{A9A9A9}
\definecolor{bluecol}{HTML}{FFFFFF}
\definecolor{greencol}{HTML}{FFFFFF}

\begin{figure}
    \centering
    \begin{tikzpicture}[every node/.style = black node, xscale=\figscaleBx, yscale=\figscaleBy]
    \draw[rounded corners, dotted, thick, fill = redcol] (2.5, 1.25) rectangle (4.5, -3.5) node[normal, label={[label distance = 1.5mm]135:$V_{i}$}] {};
    \draw[rounded corners, dotted, thick] (-2, 1.5) node[normal, label={[label distance = 1.5mm]-45:$V_{i+1}$}] {} rectangle (4.75, -3.75);
    
    \draw
    (0, 0.25) node[black node, label = 90:$v_{i+1}$] (v) {};
    
    \fill[fill = black, opacity = 0.08] (v) -- ++(-1.58cm, -2.5) -- ++(2*1.58, 0) -- (v);
    \fill[fill = black, opacity = 0.08] (3.5, 1.25) -- (1.618, -2.5) -- (0, -1.5) -- (1.5, 1.25) -- (3.5, 1.25);

    \begin{scope}
    \draw[fill = redcol]
    (0, -2.5) ellipse (1.618cm and 1cm);
    \clip (0, -2.5) ellipse (1.618cm and 1cm);
    \fill[fill = bluecol] (9775/3000, -2.5) circle (9775/3000+0.15);
    \end{scope}
    \draw (0, -2.5) ellipse (1.618cm and 1cm) node[normal] {$N(v_{i+1})\cap V_{i+1}$};
    \draw[fill = greencol] (2.5, 1.25) ellipse (1cm and 0.618cm) node[normal] {$C$};
    \end{tikzpicture}
    \caption{The situation of Lemma~\ref{lem:triangle-free-candidates-characterization}. 
    The set $N[D^*]\cap V_{i+1}$ is depicted in gray (except $v_{i+1}$, in the first case of the proof), and the set $S = V_{i+1}\setminus (N[D^*] \cup \{v_{i+1}\})\subseteq N(v_{i+1})\cap V_{i+1}$ is in white. Note that $C=N(S)\setminus \{v_{i+1}\}$ may intersect $V(G)\setminus V_{i+1}$.}
    \label{fig:characterization}
\end{figure}
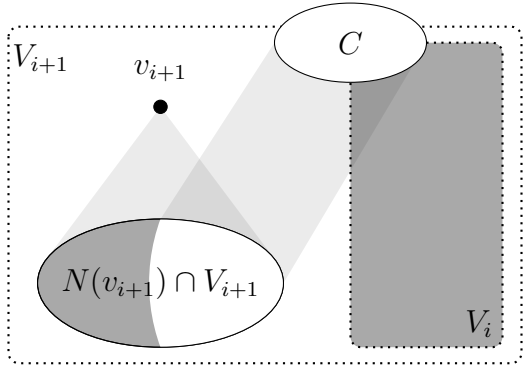

We now show how to efficiently enumerate the candidate extensions.

\begin{lemma}\label{lem:triangle-free-candidates-enumeration}
    There is an algorithm enumerating $\C(D^*,i)$ in total time 
    $O(n^4 \cdot|\D(G,A)|)$ and $O(n^2)$ space whenever $N(v_{i+1})$ is an independent set.
\end{lemma}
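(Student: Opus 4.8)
The plan is to reduce the enumeration of $\C(D^*,i)$ to the enumeration of minimal dominating sets of the auxiliary split graph $H$ constructed in Lemma~\ref{lem:triangle-free-candidates-characterization}, and then invoke the split-graph enumeration algorithm from Proposition~\ref{prop:split-properties}. The key structural input we may assume is that $N(v_{i+1})$ is independent: since $G[A]$ is triangle-free this holds for every $v_{i+1}\in A$, because any two neighbors of $v_{i+1}$ together with $v_{i+1}$ would form a triangle. So the independence hypothesis of Lemma~\ref{lem:triangle-free-candidates-characterization} is exactly what triangle-freeness buys us, and we can legitimately build $H$.

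First I would set up $S = V_{i+1}\setminus (N[D^*]\cup\{v_{i+1}\})$, $C = N(S)\setminus\{v_{i+1}\}$, and form $H$ by completing $C$ into a clique on $G[S\cup C]$; computing $S$, $C$, and $H$ takes $O(n^2)$ time and space. By Lemma~\ref{lem:triangle-free-candidates-characterization}, every $X\in\C(D^*,i)$ is either $\{v_{i+1}\}$, or lies in $\D(H)$, or becomes a member of $\D(H)$ after deleting one vertex $u\in X$. This suggests the following generate-and-filter scheme. I would first output $\{v_{i+1}\}$ as a candidate (it is the trivial extension that dominates exactly $\{v_{i+1}\}$ and nothing else of $S$, and Proposition~\ref{prop:parent-vi} guarantees $\emptyset$ or $\{v_{i+1}\}$ is always a candidate extension). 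Then I would run the split-graph enumerator of Proposition~\ref{prop:split-properties} on $H$ with delay $O(n^2)$ and space $O(n^2)$. For each enumerated $D\in\D(H)$, I would test whether $D$ itself is a valid candidate extension of $(D^*,i)$, i.e.\ whether $D$ minimally dominates $V_{i+1}\setminus N[D^*]$ in $G$; additionally, to recover the ``after removing one vertex'' case, for each $u\in N(v_{i+1})\setminus D$ I would test whether $D\cup\{u\}$ is a minimal dominating set of $V_{i+1}\setminus N[D^*]$ in $G$. Each such membership-and-minimality test costs $O(n^2)$, and there are at most $n$ choices of $u$ per $D$, so the work per enumerated $D$ is $O(n^3)$. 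To avoid duplicate outputs I would maintain the invariant of testing and outputting each distinct set at most once, which is straightforward since the filtering tests are exact characterizations of membership in $\C(D^*,i)$.

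The running time bound is where the crucial counting inequality enters. The split enumerator produces $|\D(H)|$ sets with delay $O(n^2)$, and we spend $O(n^3)$ post-processing per set, so the total time is $O(n^3\cdot|\D(H)|)$. By the second assertion of Lemma~\ref{lem:triangle-free-candidates-characterization} we have $|\D(H)|\leq n\cdot|\C(D^*,i)|+1$, and by Corollary~\ref{cor:cibound} we have $|\C(D^*,i)|\leq|\D(G,A)|$. Chaining these gives total time $O(n^3\cdot(n\cdot|\D(G,A)|+1)) = O(n^4\cdot|\D(G,A)|)$, matching the claimed bound. The space is dominated by storing $H$ and running the split enumerator, namely $O(n^2)$, with only $O(n)$ additional working space for the per-set tests.

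The main obstacle I anticipate is correctly handling the ``after removing one vertex'' case without either missing candidate extensions or producing duplicates. The subtlety is that a single $D\in\D(H)$ may, when a dominator $u$ of $v_{i+1}$ is added back, yield a genuine candidate extension $D\cup\{u\}$ of $(D^*,i)$, while $D$ itself may fail to dominate $v_{i+1}$ in $G$ and hence not be a candidate; conversely, several distinct pairs $(D,u)$ might collapse to the same set $D\cup\{u\}$, or a set reachable as $D\cup\{u\}$ might also arise directly as some other $D'\in\D(H)$. The clean way around this is to make the filtering criterion purely a direct test of membership in $\C(D^*,i) = \D(G, V_{i+1}\setminus N[D^*])$ — i.e.\ I check each candidate set (whether $D$, some $D\cup\{u\}$, or $\{v_{i+1}\}$) directly against the definition of a minimal dominating set of the bicolored graph $G(V_{i+1}\setminus N[D^*])$ — and to deduplicate by the exactness of Lemma~\ref{lem:triangle-free-candidates-characterization}, which guarantees that every element of $\C(D^*,i)$ is reached by at least one branch of the generation. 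The only remaining care is to ensure that equal sets are not emitted twice; since Lemma~\ref{lem:cand-ext-bound}-style injectivity is not available here, I would rely instead on the output-polynomial (rather than polynomial-delay) nature of the guarantee, allowing a modest duplicate-suppression step whose cost is absorbed into the $O(n^4\cdot|\D(G,A)|)$ budget.
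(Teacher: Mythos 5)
Your proposal matches the paper's proof essentially verbatim: both enumerate the superset $\{v_{i+1}\}\cup\D(H)\cup\{D\cup \{u\}\mid D\in \D(H),\ u\in V(G)\}$ using Proposition~\ref{prop:split-properties}, filter each generated set by an $O(n^2)$ direct test of membership in $\C(D^*,i)=\D(G,V_{i+1}\setminus N[D^*])$, and obtain the time bound by chaining $|\D(H)|\leq n\cdot|\C(D^*,i)|+1\leq n\cdot|\D(G,A)|+1$ from Lemma~\ref{lem:triangle-free-candidates-characterization} and Corollary~\ref{cor:cibound}. The duplicate-output subtlety you raise at the end is a legitimate observation, but the paper's own proof does not treat it either, so it does not constitute a difference in approach.
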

\begin{proof}
First, observe that given any set $B$ of vertices, we can test in $O(n^2)$ time and space whether $B\in \C(D^*,i)$.
Hence, it suffices to enumerate in total time $O(n^4 \cdot|\D(G,A)|)$ and $O(n^2)$ space a superset $\F$ of $\C(D^*,i)$ of size $O(n^2 \cdot |\D(G,A)|)$, and for each element of $\F$ to test whether it belongs to $\C(D^*,i)$. By Lemma~\ref{lem:triangle-free-candidates-characterization} we can use
$$\F=\{v_{i+1}\}\cup \D(H)\cup \{D\cup \{u\}\mid D\in \D(H),\ u\in V(G)\},$$
where $H$ is the split graph defined in the statement of Lemma~\ref{lem:triangle-free-candidates-characterization}.
Observe that $$|\F|\leq (n+1) \cdot |\D(H)|+1$$ and,
using Proposition~\ref{prop:split-properties}, we can enumerate $\F$ in total time $O(n^3\cdot |\D(H)|)$ and space $O(n^2)$.
It now remains to observe that by Lemma~\ref{lem:triangle-free-candidates-characterization} and Corollary~\ref{cor:cibound}
we have $$|\D(H)|\leq n\cdot |\C(D^*,i)|+1\leq n\cdot |\D(G,A)|+1,$$ so the claimed time complexity follows.
\end{proof}

We conclude with the following theorem that we state in a more general way than in Section~\ref{sec:intro}, and which is a consequence of Theorem~\ref{thm:ordered-generation}, Lemma~\ref{lem:triangle-free-candidates-enumeration}, and of the fact that when $G[A]$ is triangle-free, the neighborhood of any vertex is an independent set.

\begin{theorem}\label{thm:triangle-free}
    There is an algorithm that, given a bicolored graph $G(A)$ on $n$ vertices such that $G[A]$ is triangle-free, enumerates the set $\D(G,A)$ in time
    \[
        O(\poly(n) \cdot |\D(G,A)|^2)
    \]
    and $O(n^3)$ space.
\end{theorem}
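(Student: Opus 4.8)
The plan is to assemble Theorem~\ref{thm:triangle-free} as a direct application of the ordered generation framework from Section~\ref{sec:bt}, feeding it the candidate-extension enumerator built in this section. First I would observe that the hypothesis we actually need in Lemma~\ref{lem:triangle-free-candidates-enumeration}, namely that $N(v_{i+1})$ is an independent set, follows from the hypothesis of the theorem: if $G[A]$ is triangle-free, then for every vertex $v$ whose private neighbors we peel off, the set $N(v)\cap A$ induces no edge (an edge there together with $v$ would form a triangle in $G[A]$). Since every vertex $v_{i+1}$ in the vertex sequence of the peeling lies in $V_{i+1}\subseteq A$ and the relevant neighborhood is $N(v_{i+1})\cap V_{i+1}\subseteq A$, the independence condition is met at every step of the peeling. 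This is the bridge that lets us invoke the candidate-extension lemma uniformly.

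With that in hand, the main step is to instantiate Theorem~\ref{thm:ordered-generation} with the explicit complexity functions provided by Lemma~\ref{lem:triangle-free-candidates-enumeration}. Concretely, I would set $f(n,d) = O(n^4 \cdot d)$ and $s(n) = O(n^2)$, where $d = |\D(G,A)|$, since Lemma~\ref{lem:triangle-free-candidates-enumeration} guarantees an algorithm enumerating $\C(D^*,i)$ in time $O(n^4\cdot|\D(G,A)|)$ and space $O(n^2)$ precisely when $N(v_{i+1})$ is independent. Theorem~\ref{thm:ordered-generation} then yields an enumerator for $\D(G,A)$ running in time $O(n^4 d^2 + f(n,d)\cdot nd)$ and space $O(n\cdot s(n))$.

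It then remains to simplify these expressions. Substituting $f(n,d)=O(n^4 d)$ gives a running time of
\[
    O\!\left(n^4 d^2 + n^4 d \cdot nd\right) = O(n^5 d^2) = O(\poly(n)\cdot|\D(G,A)|^2),
\]
matching the stated time bound, and substituting $s(n)=O(n^2)$ gives space $O(n\cdot n^2)=O(n^3)$, matching the stated space bound. I would note that although Theorem~\ref{thm:ordered-generation} is phrased for an algorithm \algob{} whose input includes a peeling, the framework computes the peeling internally, so no extra hypothesis on the input is needed beyond the bicolored graph $G(A)$ itself.

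There is essentially no serious obstacle here, since all the difficulty has been front-loaded into Lemma~\ref{lem:triangle-free-candidates-characterization} and Lemma~\ref{lem:triangle-free-candidates-enumeration}; the theorem is a bookkeeping composition. The one point requiring a little care is to confirm that the independence hypothesis of Lemma~\ref{lem:triangle-free-candidates-enumeration} is available at \emph{every} node $(D^*,i)$ visited by the DFS in \algoa{}, not just at some, because the candidate-extension subroutine is invoked at all internal nodes of the parent tree; the triangle-freeness of $G[A]$ secures exactly this, and that is why the theorem is stated with the hypothesis on $G[A]$ rather than on all of $G$.
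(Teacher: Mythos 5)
Your proof is correct and follows exactly the paper's route: the paper derives Theorem~\ref{thm:triangle-free} in one sentence as a consequence of Theorem~\ref{thm:ordered-generation}, Lemma~\ref{lem:triangle-free-candidates-enumeration}, and the observation that triangle-freeness of $G[A]$ makes the relevant neighborhoods independent, and your instantiation $f(n,d)=O(n^4 d)$, $s(n)=O(n^2)$ gives the same $O(n^5 d^2)$ time and $O(n^3)$ space bounds. Your remark that it is really $N(v_{i+1})\cap V_{i+1}\subseteq A$ (rather than all of $N(v_{i+1})$) whose independence is needed is, if anything, slightly more careful than the paper's own one-line justification of this point.
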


When $A = V(G)$, we have $\D(G) = \D(G,A)$. Hence,
Theorem~\ref{thm:triangle-free} implies the existence of an algorithm
enumerating the minimal dominating sets in
triangle-free graphs in output-polynomial time and polynomial space.

\section{Minimal dominating sets in \texorpdfstring{$K_t$}{Kt}-free graphs}
\label{sec:kt-free}

In this section, we generalize the characterization of Lemma~\ref{lem:triangle-free-candidates-characterization} and show how to use it to enumerate minimal dominating sets in~$K_t$-free graphs, at the cost of an increased complexity (see Theorem~\ref{thm:kt-free}).

We start with a general lemma that, roughly, implies that any output-polynomial time algorithm that may repeat outputs can be turned into one without repetition, without increasing space.

\begin{lemma}\label{lem:avoid-repetitions}
    Let $\setin{}, \setout{}$ be two sets and $R\subseteq \setin{} \times \setout{}$ be a relation.
    Let $f,s \colon \setin{} \to \N$ be two functions.
    Suppose that there is a deterministic algorithm enumerating, given any $x \in \setin{}$, the set $\{y \in \setout{} \mid xRy\}$ in time at most $f(x)$ and space at most $s(x)$, possibly with repetition.
    Then there is an algorithm that, on the same input, returns the same output without repetition, in time $O(f(x)^2)$ and space $O(s(n))$.
\end{lemma}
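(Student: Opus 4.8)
The plan is to use the classical ``re-run to detect duplicates'' technique for eliminating repetitions from an enumeration without paying for the space to memorize what has already been produced. Let $\algoa$ denote the hypothesised deterministic algorithm that, on input $x \in \setin$, lists the elements of $\{y \in \setout \mid xRy\}$ --- possibly several times --- in time at most $f(x)$ and space at most $s(x)$. Since $\algoa$ is deterministic, running it twice on the same input $x$ produces exactly the same sequence of outputs $y_1, y_2, \dots, y_N$ in exactly the same order; this reproducibility is the property on which the whole argument hinges. Note also that $N \le f(x)$, because emitting each $y_k$ costs at least one unit of time.

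First I would organise the new algorithm around a single \emph{outer} execution of $\algoa$ that produces the $y_k$ one at a time. When the outer execution is about to emit its $k$-th element $y_k$, I pause it and decide whether this is the \emph{first} occurrence of the value $y_k$ in the sequence. I do this by launching a fresh \emph{inner} execution of $\algoa$ on the same input $x$ and reading off its outputs until position $k$, comparing each of $y_1, \dots, y_{k-1}$ with $y_k$. If some earlier $y_j$ (with $j < k$) equals $y_k$, then $y_k$ is a repeat and I suppress it; otherwise I output $y_k$. Resuming the outer execution and iterating yields each distinct value exactly once, namely at the first position where it appears. Correctness is immediate from determinism: the inner run reconstructs precisely the prefix $y_1, \dots, y_{k-1}$ seen by the outer run, so a value is emitted if and only if it is new.

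For the complexity, the outer loop runs for at most $N \le f(x)$ iterations, and each iteration triggers one inner execution of $\algoa$ costing $O(f(x))$ time, so the total running time is $O(N \cdot f(x)) = O(f(x)^2)$, as required. For the space, the key point is that we never store the set of previously produced elements; instead we keep only the two simultaneous states of $\algoa$ (the paused outer run and the running inner run), each using $O(s(x))$ space, together with the single current element $y_k$ and a constant number of counters bounded by $f(x)$. Hence the working space is $O(s(x))$.

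The hard part, such as it is, is not algorithmic but lies in the accounting: one must resist the natural temptation to maintain a dictionary of already-output elements, which would destroy the space bound, and instead pay for the recomputation in time. The determinism hypothesis is exactly what makes the recomputation faithful, and it is the single place where the argument would break down for a randomised or nondeterministic $\algoa$.
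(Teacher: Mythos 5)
Your proposal is correct and follows essentially the same route as the paper: simulate the deterministic enumerator, and at each output pause it and launch a fresh inner simulation to check whether the current value already appeared earlier in the (reproducible) output sequence, yielding $O(f(x)^2)$ time and twice the space. The only cosmetic difference is that the paper compares output counters between the two simulations rather than scanning positions $1,\dots,k-1$ explicitly, which changes nothing in substance.
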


\begin{proof}
Let \algobp{} be the algorithm that on input $x\in \setin{}$ outputs $\{y \in \setout{} \mid xRy\}$, possibly with repetition, in time at most $f(x)$ and space at most $s(x)$. Elements $y\in \setout{}$ satisfying $xRy$ will be called {\em{solutions}}.
We now give an algorithm \algob{} that, on the same input $x$, outputs all solutions without repetition. Algorithm \algob{} simulates \algobp{} while counting its number of output calls. Every time \algobp{} outputs a solution $y$, \algob{} runs a new simulation of \algobp{} to verify whether $y$ was not output by \algobp{} before. This new simulation is terminated at the first attempt of outputting $y$, and for the verification, \algob{} simply checks the output solution counts in both simulations against each other. If $y$ is indeed output by \algobp{} for the first time, then \algob{} also outputs $y$, and otherwise \algobp{} ignores this output and proceeds with the simulation. Thus, \algob{} outputs every solution exactly once: at the first moment when \algobp{} outputs it. The time complexity of \algob{} is $O(f(x)^2)$, because for every step of \algobp{} we run a second simulation of \algobp{} that takes time at most $f(x)$. The space complexity of \algob{} is at most $2\cdot s(x)+O(1)=O(s(x))$, because we need to store the internal data of two simulations of \algobp{} at any time.
\end{proof}

By combining Lemma~\ref{lem:avoid-repetitions} and Theorem~\ref{thm:ordered-generation}, we get the following corollary.

\begin{corollary}\label{cor:candexrep-to-dom}
Let $f \colon \N^2 \to N$ and $s\colon \N \to \N$ be two functions.
Suppose that there is an algorithm that, given a bicolored graph $G(A)$ on $n$ vertices, a peeling $(V_0, \dots, V_{p})$ of $G(A)$, $i \in \intv{0}{p-1}$ and $D^* \in \D(G, V_i)$, enumerates the set $\C(D^*,i)$ in time at most $f(n, |\D(G,A)|)$ and space at most $s(n)$, possibly with repetition.
Then there is an algorithm that, given a bicolored graph $G(A)$ on $n$ vertices, enumerates the set $\D(G,A)$ in time
\[
        O(n^4d^2 + f(n, d)^2\cdot n d)
\]
and space $O(n\cdot s(n))$, where $d = |\D(G,A)|$.
\end{corollary}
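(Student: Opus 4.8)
The plan is to combine the repetition-removal gadget of Lemma~\ref{lem:avoid-repetitions} with the ordered-generation machinery of Theorem~\ref{thm:ordered-generation} by using the former as a black-box preprocessing step on the candidate-extension subroutine. Concretely, I am given an algorithm, call it \algoc{}, that enumerates $\C(D^*,i)$ in time $f(n,|\D(G,A)|)$ and space $s(n)$ but \emph{may repeat} outputs. The first step is to instantiate Lemma~\ref{lem:avoid-repetitions} with $\setin{}$ the set of valid inputs $(G(A),(V_0,\dots,V_p),i,D^*)$, with $\setout{}$ the set of vertex subsets of $G$, and with $R$ the relation ``$X\in\C(D^*,i)$''. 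The hypotheses of the lemma are met because \algoc{} is exactly a deterministic enumerator of $\{X \mid xRy\}$ with the stated time and space bounds. The lemma then yields a \emph{repetition-free} enumerator of $\C(D^*,i)$ running in time $O(f(n,d)^2)$ and space $O(s(n))$, where I abbreviate $d=|\D(G,A)|$.

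The second step is to feed this repetition-free enumerator into Theorem~\ref{thm:ordered-generation} in place of the algorithm~\algob{} assumed there. The one subtlety to check is that the time bound supplied to Theorem~\ref{thm:ordered-generation} must be a function of the form $f'(n,|\D(G,A)|)$; here $f'(n,d)=O(f(n,d)^2)$ does have this shape, since squaring a function of $(n,d)$ is again a function of $(n,d)$. Similarly the space bound $s'(n)=O(s(n))$ is a function of $n$ alone, as required. Plugging $f'$ and $s'$ into the conclusion of Theorem~\ref{thm:ordered-generation} gives running time
\[
    O\!\left(n^4 d^2 + f'(n,d)\cdot n d\right) = O\!\left(n^4 d^2 + f(n,d)^2\cdot n d\right),
\]
and space $O(n\cdot s'(n))=O(n\cdot s(n))$, which is exactly the claimed bound.

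I do not expect any genuine obstacle here; this corollary is essentially a one-line composition of two already-proven results, and the proof is a matter of verifying that the type signatures line up. The only point requiring a moment of care is that Lemma~\ref{lem:avoid-repetitions} is stated for a generic input $x\in\setin{}$ and measures complexity in terms of $f(x),s(x)$, whereas Theorem~\ref{thm:ordered-generation} wants complexities expressed through the uniform functions $f(n,|\D(G,A)|)$ and $s(n)$. The resolution is simply to note that the per-call bounds $f(n,d)$ and $s(n)$ depend on $x$ only through $n$ and $d$, so applying Lemma~\ref{lem:avoid-repetitions} pointwise over all admissible inputs preserves the uniform dependence and the squaring applies uniformly. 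With that observation in place, the composition goes through verbatim, and no new combinatorial argument about peelings, parents, or private neighbors is needed.
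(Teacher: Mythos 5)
Your proposal is correct and is exactly the argument the paper intends: the corollary is stated as the direct composition of Lemma~\ref{lem:avoid-repetitions} (yielding a repetition-free candidate-extension enumerator in time $O(f(n,d)^2)$ and space $O(s(n))$) with Theorem~\ref{thm:ordered-generation}. Your additional check that the squared bound still has the uniform shape $f'(n,|\D(G,A)|)$ required by Theorem~\ref{thm:ordered-generation} is the only point of care, and you handle it correctly.
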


The aforementioned generalization of Lemma~\ref{lem:triangle-free-candidates-characterization} is the following.

\begin{lemma}\label{lem:candidates-characterization}
    Let $G(A)$ be a bicolored graph and $(V_0,\allowbreak{}\dots,V_{p})$ be a fixed peeling of $G(A)$ with vertex sequence $(v_1,\dots,v_p)$. Let $i\in \intv{0}{p-1}$, $D^*\in \D(G,V_i)$ and $S=V_{i+1}\setminus (N[D^*]\cup \{v_{i+1}\})$. Then:
    \begin{itemize}
        \item if $v_{i+1}\in N[D^*]$ then $\C(D^*,i)=\D(G,S)$;
        \item otherwise, every element of $\C(D^*,i)$ is of the form $Q\cup \{w\}$ for some $w\in N[v_{i+1}]$ and $Q\in \D(G,S\setminus N[w])$. Furthermore, in this case $|\D(G,S\setminus N[w])|\leq |\C(D^*,i)|$ for each $w\in N[v_{i+1}]$.
    \end{itemize}
\end{lemma}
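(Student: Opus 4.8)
The plan is to treat the two bullet points separately: the first is an immediate set identity, while the second requires a private-neighbor argument to establish the structural claim, followed by a direct appeal to Lemma~\ref{lem:cand-ext-bound} for the cardinality bound.

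For the first bullet I would just unfold the definition $\C(D^*,i)=\D(G,V_{i+1}\setminus N[D^*])$ from Equation~\eqref{eq:cdstari}. When $v_{i+1}\in N[D^*]$ we have $N[D^*]\cup\{v_{i+1}\}=N[D^*]$, so $S=V_{i+1}\setminus(N[D^*]\cup\{v_{i+1}\})=V_{i+1}\setminus N[D^*]$, and hence $\C(D^*,i)=\D(G,S)$ on the nose. For the second bullet I would first record that, since $v_{i+1}\in V_{i+1}$ by the definition of a peeling while $v_{i+1}\notin N[D^*]$ and $v_{i+1}\notin S$, the prescribed set splits as a disjoint union $V_{i+1}\setminus N[D^*]=S\cup\{v_{i+1}\}$, so that $\C(D^*,i)=\D(G,S\cup\{v_{i+1}\})$. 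Now take any $X\in\C(D^*,i)$. Since $X$ dominates $v_{i+1}$, there is a vertex $w\in X$ with $v_{i+1}\in N[w]$, that is $w\in X\cap N[v_{i+1}]$; I fix such a $w$ and set $Q=X\setminus\{w\}$, so that $X=Q\cup\{w\}$ with $w\in N[v_{i+1}]$ as required.

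The crux, and the step I expect to be the \emph{main obstacle}, is verifying that $Q\in\D(G,S\setminus N[w])$, namely that deleting $w$ from $X$ leaves precisely a minimal dominating set of the residual set $S\setminus N[w]$. Domination is the easy half: every vertex of $S\setminus N[w]$ lies outside $N[w]$, hence is not dominated by $w$, so from the fact that $X$ dominates all of $S$ it follows that $Q=X\setminus\{w\}$ already dominates $S\setminus N[w]$. For minimality I would exploit the private neighbors guaranteed by the minimality of $X$ in $G(S\cup\{v_{i+1}\})$: for each $x\in Q$ there is some $y_x\in\priv(X,x)\cap(S\cup\{v_{i+1}\})$. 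Because $w\in X\setminus\{x\}$ cannot dominate a private neighbor of $x$, we get $y_x\notin N[w]$; in particular $y_x\neq v_{i+1}$ since $v_{i+1}\in N[w]$, so $y_x\in S\setminus N[w]$. Moreover $y_x\notin N[X\setminus\{x\}]\supseteq N[Q\setminus\{x\}]$, so $y_x$ survives as a private neighbor of $x$ with respect to $Q$ inside $S\setminus N[w]$. Thus every $x\in Q$ has a private neighbor in $S\setminus N[w]$, which together with domination yields $Q\in\D(G,S\setminus N[w])$. The delicate point is tracking that the private neighbors of $X$ genuinely relocate into $S\setminus N[w]$ rather than being absorbed by $w$ or by $v_{i+1}$, which is exactly why choosing $w$ inside $N[v_{i+1}]$ matters.

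Finally, the inequality $|\D(G,S\setminus N[w])|\le|\C(D^*,i)|$ for each $w\in N[v_{i+1}]$ is a direct application of Lemma~\ref{lem:cand-ext-bound} with $H=G$, $B=S\cup\{v_{i+1}\}$ and $D=\{w\}$. Since $w\in N[v_{i+1}]$ forces $v_{i+1}\in N[w]$, we have $B\setminus N[D]=(S\cup\{v_{i+1}\})\setminus N[w]=S\setminus N[w]$, so the lemma gives $|\D(G,S\setminus N[w])|=|\D(H,B\setminus N[D])|\le|\D(H,B)|=|\C(D^*,i)|$, which completes the argument.
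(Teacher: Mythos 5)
Your proof is correct and follows essentially the same route as the paper's: the first bullet is the same set identity, and for the second you decompose $X$ as $Q\cup\{w\}$ with $w\in X\cap N[v_{i+1}]$ and track private neighbors exactly as the paper does (the paper compresses your relocation argument into a single sentence). The only difference is the final cardinality bound: the paper builds the injection $Q\mapsto Q$ or $Q\mapsto Q\cup\{w\}$ into $\C(D^*,i)$ directly, whereas you invoke Lemma~\ref{lem:cand-ext-bound} with $H=G$, $B=S\cup\{v_{i+1}\}$ and $D=\{w\}$ — a valid and slightly cleaner shortcut, since that lemma's proof is the same injection in general form.
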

\begin{proof}
By definition, $\C(D^*,i)=\D(G,V_{i+1}\setminus N[D^*])$. Note that if $v_{i+1}\in N[D^*]$ then $V_{i+1}\setminus N[D^*]=S$, so we immediately get $\C(D^*,i)=\D(G,S)$.
This resolves the first case.

Suppose then that $v_{i+1}\notin N[D^*]$ and consider any $X\in \C(D^*,i)$. Since $X$ minimally dominates $V_{i+1}\setminus N[D^*]=S\cup \{v_{i+1}\}$, there exists some $w\in N[v_{i+1}]\cap X$. Then $X\setminus \{w\}$ dominates $S\setminus N[w]$ and for every element of $X\setminus \{w\}$, its private neighbor in $S\cup \{v_{i+1}\}$ has to actually belong to $S\setminus N[w]$. We conclude that $X\setminus \{w\}\in \D(G,S\setminus N[w])$, proving the characterization of the elements of $\C(D^*,i)$ in this case.

We are left with proving the claimed upper bound on $|\D(G,S\setminus N[w])|$, for each $w\in N[v_{i+1}]$. Take any $Q\in \D(G,S\setminus N[w])$; clearly $w\notin Q$. If $Q$ dominates $S\cup \{v_{i+1}\}$, then $Q$ is also a minimal dominating set of $S\cup \{v_{i+1}\}$, because every vertex of $Q$ has a private neighbor in $S\setminus N[w]\subseteq S\cup \{v_{i+1}\}$. Otherwise, $Q\cup \{w\}$ is a minimal dominating set of $S\cup \{v_{i+1}\}$: $v_{i+1}$ is the private neighbor of $w$, and $w$ could not steal any private neighbors in $S\setminus N[w]$ from any vertices from $Q$. We conclude that either $Q$ or $Q\cup \{w\}$ belongs to $\C(D^*,i)$, which proves that $|\D(G,S\setminus N[w])|\leq |\C(D^*,i)|$.
\end{proof}

Let us point out the key difference between the statements of Lemma~\ref{lem:candidates-characterization} and of Lemma~\ref{lem:triangle-free-candidates-characterization}.
In Lemma~\ref{lem:triangle-free-candidates-characterization}, we reduced the enumeration of $\C(D^*,i)$ to the enumeration of $\D(H)$ for a single split graph $H$. In Lemma~\ref{lem:candidates-characterization}, to obtain larger generality we need to separately consider sets $\D(G,S\setminus N[w])$ for each $w\in N[v_{i+1}]$. When enumerating $\C(D^*,i)$ via enumerating these sets, we will unavoidably obtain repetitions of elements of $\C(D^*,i)$.
These will be handled using Lemma~\ref{lem:avoid-repetitions} at the cost of an increased complexity.

Observe that by Lemma~\ref{lem:candidates-characterization}, to be able to enumerate the candidate extensions in general (and thus the minimal dominating sets, using Theorem~\ref{thm:ordered-generation}) in output-polynomial time, it suffices to be able to enumerate the minimal dominating sets of~$G(S)$ and of $G(S \setminus N[w])$ for every $w\in N[v_{i+1}]$.
For bicolored graphs $G(A)$ such that $G[A]$ is $K_t$-free, this can be done by exploiting the fact that $G[S]$ is $K_{t-1}$-free and running the same algorithm on $G(S)$, as we shall describe now.
We recall that $\omega(G)$ denotes the size of a largest clique in $G$.

\begin{theorem}
\label{thm:kt-free}
    There is a function $p \colon \N \to \N$ and an algorithm that,
    given a bicolored graph $G(A)$ on $n$ vertices, enumerates the set $\D(G,A)$ in time at most
    \[
        p(t) \cdot n^{2^{t+1}} \cdot |\D(G,A)|^{2^t}
    \]
    and space at most $p(t) \cdot n^{t+1}$, where $t=\omega(G[A])+1$.
\end{theorem}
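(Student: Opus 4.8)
The plan is to prove the statement by induction on $t = \omega(G[A])+1$, exhibiting a single recursive procedure $\algoa$ that, given any bicolored graph $G(A)$, enumerates $\D(G,A)$ and whose recursion depth is controlled by $\omega(G[A])$ (so that $\algoa$ never needs to know $t$). The base case is $A = \emptyset$, i.e. $t=1$, where $\D(G,A) = \{\emptyset\}$ and the procedure returns $\emptyset$ at once. For $A \neq \emptyset$, I would compute a peeling and run the ordered-generation scheme of Corollary~\ref{cor:candexrep-to-dom}, whose only nontrivial ingredient is a subroutine enumerating, for a node $(D^*,i)$, the candidate extensions $\C(D^*,i)$ possibly with repetition. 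The whole idea is that this subroutine is itself implemented by recursive calls to $\algoa$ on bicolored graphs whose prescribed set induces a graph of strictly smaller clique number.

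The key structural observation, which makes the induction go through, concerns the set $S = V_{i+1}\setminus(N[D^*]\cup\{v_{i+1}\})$ of Lemma~\ref{lem:candidates-characterization}. Since $V_i = V_{i+1}\setminus N[v_{i+1}]$ is already dominated by $D^*$, every vertex of $V_{i+1}$ left undominated by $D^*$ lies in $N[v_{i+1}]$, whence $S \subseteq N(v_{i+1})$. Therefore any clique of $G[S]$ extends, by adding $v_{i+1}\in V_{i+1}\subseteq A$, to a larger clique of $G[A]$, so $\omega(G[S]) \le \omega(G[A]) - 1$, and the same holds for every $S\setminus N[w] \subseteq S$. Consequently a recursive call enumerating $\D(G,S)$ or $\D(G,S\setminus N[w])$ runs with parameter at most $t-1$; this drives the induction and guarantees termination, as the parameter is a nonnegative integer that strictly decreases. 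The subroutine then follows Lemma~\ref{lem:candidates-characterization}: if $v_{i+1}\in N[D^*]$ it recursively enumerates $\D(G,S) = \C(D^*,i)$; otherwise it iterates over all $w \in N[v_{i+1}]$, recursively enumerates $Q \in \D(G,S\setminus N[w])$, forms $Q\cup\{w\}$, and keeps it if it passes the $O(n^2)$-time test for membership in $\C(D^*,i) = \D(G,V_{i+1}\setminus N[D^*])$. By Lemma~\ref{lem:candidates-characterization} this produces a superset of $\C(D^*,i)$, and after filtering exactly $\C(D^*,i)$ with repetitions, which is precisely the input format required by Corollary~\ref{cor:candexrep-to-dom}. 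Moreover Lemma~\ref{lem:candidates-characterization} together with Corollary~\ref{cor:cibound} gives $|\D(G,S\setminus N[w])| \le |\C(D^*,i)| \le |\D(G,A)|$, so each recursive call sees at most $d \eqdef |\D(G,A)|$ solutions.

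For the complexity I would set up a recursion on the parameter. Let $T_t(n,d)$ and $S_t(n)$ bound the time and working space of $\algoa$. Enumerating $\C(D^*,i)$ with repetition costs $f_t(n,d) = O(n\cdot T_{t-1}(n,d) + \poly(n)\cdot d)$: at most $n+1$ recursive calls, each seeing $\le d$ solutions, plus the $O(n^2)$-per-candidate filtering over the $O(nd)$ generated sets. Feeding this into Corollary~\ref{cor:candexrep-to-dom} yields $T_t(n,d) = O(n^4 d^2 + f_t(n,d)^2\cdot nd)$ and $S_t(n) = O(n\cdot S_{t-1}(n) + n^2)$. With base cases $T_1(n,d) = O(n)$ and $S_1(n) = O(n^2)$, a direct induction solves these: the $n$-exponent obeys $a_t = 2a_{t-1}+3$ and the $d$-exponent obeys $b_t = 2b_{t-1}+1$, giving $a_t = 2^{t+1}-3$ and $b_t = 2^{t-1}-1$, both strictly below the exponents $2^{t+1}$ and $2^t$ claimed (the $n^4 d^2$ term being dominated for $t \ge 2$); the space recursion gives $S_t(n) = O(n^{t+1})$ up to a factor depending only on $t$. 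Folding all $t$-dependent constants into a function $p$ yields the stated bounds.

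The main obstacle is not the bookkeeping but arranging the recursion so that it is simultaneously correct and parameter-decreasing. The crucial point to get right is the clique-number drop $\omega(G[S]) < \omega(G[A])$, which is exactly what permits invoking the very same algorithm on the subproblems; everything else (handling the two cases of Lemma~\ref{lem:candidates-characterization}, removing the repetitions incurred by ranging over $w$ via Corollary~\ref{cor:candexrep-to-dom}, and checking that the squared per-node cost still composes into the deliberately loose claimed bounds) is routine once this observation is secured. A secondary point deserving care is that $\algoa$ must not depend on $t$: it simply recurses until the prescribed set becomes empty, and $t$ enters only as an a priori bound on the recursion depth in the analysis.
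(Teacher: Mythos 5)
Your proposal is correct and follows essentially the same route as the paper: the same reduction of candidate-extension enumeration to recursive calls on $G(S)$ and $G(S\setminus N[w])$ via Lemma~\ref{lem:candidates-characterization}, the same key observation that $S\subseteq N(v_{i+1})\cap V_{i+1}$ forces $\omega(G[S])\le\omega(G[A])-1$, and the same repetition-removal and ordered-generation machinery (you invoke Corollary~\ref{cor:candexrep-to-dom} where the paper applies Lemma~\ref{lem:avoid-repetitions} and Theorem~\ref{thm:ordered-generation} separately, and your recurrence yields slightly tighter exponents from a tighter base case, but both land under the claimed bounds). No gaps.
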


When $A = V(G)$, we have $\D(G) = \D(G,A)$. Hence,
Theorem~\ref{thm:kt-free} implies the existence of an algorithm
enumerating, for every integer $t\geq 1$, the minimal dominating sets in
$K_t$-free graphs in output-polynomial time and polynomial space.
We stress that we provide a single
algorithm for all values of $t$---note that as stated, the algorithm does not require knowledge of $t$.

\begin{proof}[Proof of Theorem~\ref{thm:kt-free}]
In this proof we consider two algorithms \algoa{} and \algob{} that
recursively call each other in order to enumerate the minimal
dominating sets of a bicolored graph. We first give their
specifications, then describe them, and finally prove that they
perform as specified. Let $f \colon \N^3\to \N$ be defined by
$f(n,d,t) = n^{2^{t+1} - 3} \cdot d^{2^t-1}$, for every $n,d,t \in \N$.

\paragraph{Specifications of \algoa{} and \algob{}} 
We will show that
the algorithms $\algoa{}$ and $\algob{}$ have the following properties $P$ and $Q$, for every $t\geq 1$ in case of $P(t)$ and every $t\geq 2$ in case of $Q(t)$.
\begin{itemize}
    \item[$P(t)$:] There is a constant $p(t) \in \N$ such that given
      an $n$-vertex graph $G$ and a set $A \subseteq V(G)$ such that
      $G[A]$ is $K_t$-free, \algoa{} outputs $\D(G,A)$ in time at most
      $p(t) \cdot f(n,|\D(G,A)|,t)$ and space at most $p(t) \cdot n^{t+1}$.
    \item[$Q(t)$:] There is a constant $q(t)\in \N$ such that given a
      $n$-vertex graph $G$, a set $A \subseteq V(G)$ such that $G[A]$
      is $K_t$-free, a peeling $(V_0, \dots, V_p)$ of $G(A)$
      with vertex sequence $(v_1, \dots, v_p)$, $i \in \intv{0}{p-1}$,
      and $D^* \in \D(G, V_i)$, \algob{} outputs $\C(D^*,i)$ in time
      at most $q(t) \cdot n^2 \cdot f(n, |\D(G,A)|, t-1)^2$ and space at most $q(t)
      \cdot n^{t}$.
    \end{itemize}

The statement of Theorem~\ref{thm:kt-free} is
implied by $P(t)$ holding for all $t\geq 1$. In order to prove it, we
will also show that $Q(t)$ holds for every $t\geq 2$.
Let us first describe~\algoa{}.
\paragraph{Description of \algoa{}}
The algorithm \algoa{} is the one given by Theorem~\ref{thm:ordered-generation} that takes as input a bicolored graph $G(A)$, using \algob{} as a routine to enumerate candidate extensions. We will show below that \algob{} indeed does so.

\paragraph{Description of \algob{}}
Recall that \algob{} takes as input a bicolored graph $G(A)$, a peeling $(V_0, \dots, V_p)$ of $G(A)$ with vertex sequence $(v_1, \dots, v_p)$, an integer $i \in \intv{0}{p-1}$, and a set $D^* \in \D(G, V_i)$.
  
We first describe an auxiliary routine~\algobp{}.
Let $S = V_{i+1}\setminus (N[D^*]\cup \{v_{i+1}\})$.
Lemma~\ref{lem:candidates-characterization} above allows us to consider two cases depending on whether $D^*$ dominates $v_{i+1}$ or not:
\begin{enumerate}[(i)]
\item \label{it:vidom} if $v_{i+1}\in N[D^*]$, we call algorithm \algoa{} on $G(S)$ to enumerate $\D(G,S)$ and we give the same output;
\item \label{it:repet} otherwise, we iterate over all $w \in N[v_{i+1}]$ and $Q \in \D(G, S\setminus N[w])$ (where the latter is obtained via a call to \algoa{}) and output $Q\cup \{w\}$ if and only if it is a candidate extension of $D^*$.
\end{enumerate}
We are now done with~\algobp{}.
As we will show later, \algobp{} enumerates $\C(D^*, i)$, however each element may be repeated, up to $n$ times.
Then \algob{} is obtained from \algobp{} using Lemma~\ref{lem:avoid-repetitions}. This concludes the description of~\algob{}.

\paragraph{Correctness of \algoa{} and \algob{}} 
Now that we described the algorithms \algoa{} and \algob{}, we show that they
conform to their specifications, \ie, we prove that $P(t)$ holds for
every $t\geq 1$ and that $Q(t)$ holds for every $t \geq 2$.
The proof by induction on $t$ is split in lemmas.

\begin{lemma}\label{lem:base}
  $P(1)$ holds.
\end{lemma}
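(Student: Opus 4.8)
The plan is to observe that the hypothesis $t=1$ forces the instance to be completely degenerate, so that the base case reduces to the explicit handling of the empty set already built into \algoa{}. First I would recall that $K_1$ is the graph on a single vertex; hence $G[A]$ is $K_1$-free precisely when $A$ contains no vertex at all, that is, when $A=\emptyset$. In this situation the only set dominating $\emptyset$ is the empty set itself, so $\D(G,A)=\{\emptyset\}$ and in particular $d=|\D(G,A)|=1$.

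Next I would invoke the description of \algoa{} from the proof of Theorem~\ref{thm:ordered-generation}: its very first action is to test whether $A=\emptyset$ and, if so, to return $\{\emptyset\}$ immediately. Thus on every instance satisfying the hypothesis of $P(1)$, the algorithm halts right after this initial check, producing exactly $\D(G,A)=\{\emptyset\}$, which settles correctness.

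It then remains only to verify the complexity bounds claimed by $P(1)$. Specializing $f(n,d,t)=n^{2^{t+1}-3}\cdot d^{2^t-1}$ to $t=1$ gives $f(n,d,1)=n\cdot d$, and with $d=1$ the required time bound is $p(1)\cdot n$ while the required space bound is $p(1)\cdot n^{2}$. Testing emptiness of $A$ and writing out $\{\emptyset\}$ take $O(n)$ time, and merely storing the input graph takes $O(n^{2})$ space; both fit within the stated bounds once $p(1)$ is chosen to be a sufficiently large constant. I expect no genuine obstacle here, as the base case is entirely degenerate: the only point requiring a moment's care is matching the running time against the formula $f(n,d,1)=nd$, which holds because $d=1$ collapses the factor $d^{2^{1}-1}$ to one. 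This establishes the base of the induction on $t$ that drives the proof of Theorem~\ref{thm:kt-free}.
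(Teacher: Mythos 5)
Your proof is correct and follows essentially the same route as the paper's: both reduce $P(1)$ to the observation that $K_1$-freeness of $G[A]$ forces $A=\emptyset$, so $\D(G,A)=\{\emptyset\}$ is produced by the initial emptiness check of \texttt{A}, and both verify the bounds by noting $f(n,d,1)=nd=n$ since $d=1$. Your explicit computation of $d=1$ and the collapse of the factor $d^{2^1-1}$ is a slightly more detailed account of the step the paper states tersely, but the argument is the same.
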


\begin{proof}
  The statement $P(1)$ deals with pairs $(G,A)$ such that $G[A]$ is
  $K_1$-free, so $A = \emptyset$. In these cases we clearly have
  $\D(G,A) = \{\emptyset\}$. Notice that such inputs are correctly handled by algorithm \algoa{}. 
Checking whether $A$ is empty and returning $\{\emptyset\}$ takes $O(n)$ time and $O(n^2)$ space. We define $p(1)$
as an integer such that these steps take at most $p(1) \cdot n$ time
and at most $p(1) \cdot n^2$ space
on an input graph of order~$n$. As $f(n,|\D(G,A)|,t) = n$ in this case, $P(1)$ holds.
\end{proof}

\begin{lemma}\label{lem:ind1}
  For every integer $t \geq 1$, $P(t) \Rightarrow Q(t +1)$.
\end{lemma}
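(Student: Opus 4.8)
The plan is to verify that the algorithms \algoa{} and \algob{} described above already realize $Q(t+1)$, feeding on the inductive hypothesis $P(t)$ for all the recursive calls. Throughout I write $d=|\D(G,A)|$ and recall that $f(n,d,(t+1)-1)=f(n,d,t)$. The one structural fact that lets the recursion drop from level $t+1$ to level $t$ is the following: setting $S=V_{i+1}\setminus(N[D^*]\cup\{v_{i+1}\})$, I claim $G[S]$ is $K_t$-free whenever $G[A]$ is $K_{t+1}$-free. Indeed, the peeling relation gives $V_i=V_{i+1}\setminus N[v_{i+1}]$, and $D^*$ dominates $V_i$, so every vertex of $V_{i+1}$ outside $N[D^*]$ lies in $N[v_{i+1}]$; hence $S\subseteq N(v_{i+1})$. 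Since $S\cup\{v_{i+1}\}\subseteq V_{i+1}\subseteq A$, a clique of size $t$ in $G[S]$ together with $v_{i+1}$ would form a $K_{t+1}$ inside $G[A]$, a contradiction. As $S\setminus N[w]\subseteq S$, every $G[S\setminus N[w]]$ is $K_t$-free as well. Therefore each instance $G(S)$ or $G(S\setminus N[w])$ on which \algobp{} calls \algoa{} is $K_t$-free (the vertex count stays $n$, only the prescribed set shrinks), and $P(t)$ governs those calls.

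For correctness I would read off \algobp{} against Lemma~\ref{lem:candidates-characterization}. If $v_{i+1}\in N[D^*]$, then $\C(D^*,i)=\D(G,S)$, which case~(\ref{it:vidom}) outputs verbatim through a single call to \algoa{}. Otherwise every $X\in\C(D^*,i)$ equals $Q\cup\{w\}$ for some $w\in N[v_{i+1}]$ and $Q\in\D(G,S\setminus N[w])$, so looping over all such pairs and retaining exactly those $Q\cup\{w\}$ that survive the $O(n^2)$-time candidate-extension test enumerates $\C(D^*,i)$, with each $X$ produced at most $|X|\le n$ times. Applying Lemma~\ref{lem:avoid-repetitions} to \algobp{} then yields the repetition-free \algob{}, at the cost of squaring the running time.

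For the time analysis I would bound \algobp{} and then square. By Lemma~\ref{lem:candidates-characterization} and Corollary~\ref{cor:cibound} we have $|\D(G,S)|,|\D(G,S\setminus N[w])|\le|\C(D^*,i)|\le d$, so by $P(t)$ and monotonicity of $f$ in its second argument each \algoa{} call costs at most $p(t)\cdot f(n,d,t)$. Looping over the at most $n$ vertices of $N[v_{i+1}]$ and testing the at most $nd$ produced sets in $O(n^2)$ time each bounds the running time of \algobp{} by $n\cdot p(t)\cdot f(n,d,t)+O(n^3d)$. For $t\ge 2$ this is clean: there $n\cdot f(n,d,t)\ge n^6d^3$ absorbs the $O(n^3d)$ testing term, so \algobp{} runs in $O(n\cdot p(t)\cdot f(n,d,t))$ time, and squaring gives $O(p(t)^2)\cdot n^2\cdot f(n,d,t)^2$, which is exactly $Q(t+1)$ with $q(t+1)=O(p(t)^2)$.

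The hard part will be the base case $t=1$, where the crude estimate $O(n^3d)$ is fatal: one has $n\cdot f(n,d,1)=n^2d$, so squaring $O(n^3d)$ would produce $n^6d^2$ against the target $n^2f(n,d,1)^2=n^4d^2$. The escape is that $K_2$-freeness of $G[A]$ forces $A$ to be independent, so $S\subseteq N(v_{i+1})\cap A=\emptyset$; then each sub-call $\D(G,S\setminus N[w])=\{\emptyset\}$ is trivial and $\C(D^*,i)=\D(G,\{v_{i+1}\})=\{\{w\}\mid w\in N[v_{i+1}]\}$, whence $|N[v_{i+1}]|=|\C(D^*,i)|\le d$ by Corollary~\ref{cor:cibound}. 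Thus the loop makes at most $d$ iterations, each costing $O(p(1)n^2)$, so \algobp{} runs in $O(p(1)n^2d)$ time and squaring gives $O(p(1)^2)\cdot n^4d^2=O(p(1)^2)\cdot n^2f(n,d,1)^2$, as needed. Finally the space bound is immediate: at any moment \algobp{} keeps one active \algoa{} call using $p(t)n^{t+1}$ space by $P(t)$, plus $O(n^2)$ bookkeeping, and Lemma~\ref{lem:avoid-repetitions} only doubles this, giving $O(p(t)n^{t+1})=q(t+1)n^{t+1}$ space.
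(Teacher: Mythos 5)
Your proof is correct and follows the paper's own argument essentially step for step: the same observation that $S\subseteq N(v_{i+1})\cap V_{i+1}$ forces $G[S]$ (hence every $G[S\setminus N[w]]$) to be $K_t$-free so that $P(t)$ governs the recursive calls, the same case analysis of \algobp{} via Lemma~\ref{lem:candidates-characterization} with at most $n$ repetitions per output, and the same application of Lemma~\ref{lem:avoid-repetitions} to obtain the repetition-free \algob{} with squared running time. The one place where you go beyond the paper is the $t=1$ instance of the induction step: the paper absorbs the $O(n^2 d)$ testing term into $f(n,d,t)$ with the unjustified parenthetical ``as $t\ge 2$ in this case,'' whereas your observation that $K_2$-freeness of $G[A]$ forces $S=\emptyset$ and hence $|N[v_{i+1}]|=|\C(D^*,i)|\le d$, so the loop runs at most $d$ times, is precisely the extra step needed to make the claimed bound $O(n\cdot f(n,d,1))=O(n^2d)$ for \algobp{} hold at the bottom level.
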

\begin{proof}
  Let $t\geq 1$ and let us assume that the statement $P(t)$
  holds (in particular, $p(t)$ is defined).
  Let $\mathcal{I} = (G, A, V_0, \dots,\allowbreak{} V_{p}, v_1, \dots, v_p, i, D^*)$ be an
  input of \algob{} such that $G[A]$ is $K_{t+1}$-free.
  Let us define $n = |G|$ and $d = |\D(G,A)|$.
  We review the description of \algob{} to show that $Q(t+1)$
  holds. We first consider the auxiliary routine \algobp{}.
  
  \begin{claim}\label{claim:auxi}
    Given $\mathcal{I}$, the algorithm \algobp{} enumerates $\C(D^*, i)$ with each output repeated up to $n$~times, in time at most $k \cdot n
    \cdot f(n, d, t)$ and space at most $k \cdot n^{t+1}$, for some constant~$k$.
  \end{claim}
  \begin{proof}
  Let $S = V_{i+1}\setminus(\{v_{i+1}\} \cup N[D^*])$.
  Note that as $D^*$ dominates $V_i$, we have $S \subseteq N(v_{i+1})\cap V_{i+1}$. Also, $S$ can be computed in $O(n^2)$ time and space.
  
  Since $i<p$, we have $V_{i+1} \subseteq A$, from the definition of
  a peeling. In
  particular, $G[V_{i+1}]$ is $K_{t+1}$-free. As $S \subseteq
  N(v_{i+1})\cap V_{i+1}$, we get that $G[S]$ is $K_t$-free.
  Hence, when applying the algorithm recursively to enumerate $\D(G,S')$ for any $S'\subseteq S$, we may use the already established property $P(t)$, yielding the following:
  
  \begin{remark}\label{rem:ih}
    For any $S' \subseteq S$, a call to \algoa{}
  on $(G,S')$ returns $\D(G,S')$ in time at most $p(t) \cdot f(n,|\D(G,S')|,t)$
  and space at most $p(t) \cdot n^{t+1}$.
  \end{remark}
  
  If $v_{i+1}\in N[D^*]$, then, by Lemma~\ref{lem:candidates-characterization}, we enumerate $\C(D^*,i)$ without repetitions simply by enumerating $\D(G,S)$. By Remark~\ref{rem:ih}, this takes time
  \begin{align*}
    & p(t) \cdot f(n, |\D(G,S)|,t)\\
    =~& p(t) \cdot f(n, |\C(D^*, i)|,t)& \text{(by Lemma~\ref{lem:candidates-characterization})}\\
    \leq~ & p(t) \cdot f(n, d, t) & \text{(by Corollary~\ref{cor:cibound})}
    \end{align*}
    and space at most $p(t) \cdot n^{t+1}$.

 Now suppose that $v_{i+1}\notin N[D^*]$. Then, by Lemma~\ref{lem:candidates-characterization}, we enumerate all elements of the set $\C(D^*,i)$, however each of them is enumerated once per every form $Q\cup \{w\}$ it can take, where $w\in N[v_{i+1}]$ and $Q\in \D(G,S\setminus N[w])$. Every such occurrence is characterized by the choice of $w$, hence there are at most $n$ of them and, consequently, every member of $\C(D^*,i)$ is enumerated at most $n$ times.
 
 Regarding time and space
  complexity we perform at most $n$ times (once for every
  choice of $w\in N[v_{i+1}]$) the following operations:
  \begin{itemize}
  \item constructing $S \setminus N[w]$, in $O(n)$ time and
    space;
  \item calling \algoa{} on $(G, S \setminus N[w])$, in time at most
 $p(t) \cdot f(n, |\D(G,S\setminus N[w])|, t)$ and space at most $p(t) \cdot n^{t+1}$, by
    Remark~\ref{rem:ih};
  \item checking, for each set $Q$ among the outputs of \algoa{}, whether $Q\cup \{w\}$ belongs to $\C(D^*,i)$, in $O(n^2)$ time and space. There are at most $d$ outputs.
  \end{itemize}
  By Lemma~\ref{lem:candidates-characterization} and Corollary~\ref{cor:cibound}, we have
  \begin{align*}
      |\D(G,S\setminus N[w])|\leq |\C(D^*,i)|\leq d.
  \end{align*}
  Therefore, in total the time complexity of these steps adds up to:
  \begin{align}
   & n \cdot \left [ O(n) + p(t) \cdot f(n, d, t) + O\left (n^2 \cdot d
     \right ) \right ]\nonumber\\
    =~& O\left (p(t) \cdot n \cdot f(n, d, t) \right ) & \text{(as $t \geq 2$ in this case).} \label{eq:timebp}
  \end{align}
  Similarly, the space complexity can be upper-bounded by $O(p(t) \cdot
  n^{t+1})$. We conclude by setting $k$ as $p(t)$ times the maximum of the constants hidden in the $O(\cdot)$ notation above.
  \cqed
  \end{proof}
 As proved in Lemma~\ref{lem:avoid-repetitions}, the algorithm of Claim~\ref{claim:auxi} can be turned into an algorithm \algob{} that does not repeat outputs. That is, there is a constant $q(t+1)$ (depending on $k$) such that given $\mathcal{I}$, \algob{} runs in time at most $q(t+1) \cdot n^2\cdot f(n, d, t)^2$ and space at most $q(t+1) \cdot n^{t+1}$. Hence $Q(t+1)$ holds, as desired.
\end{proof}

\begin{lemma}\label{lem:ind2}
  For every integer $t \geq 2$, $Q(t) \Rightarrow P(t)$.
\end{lemma}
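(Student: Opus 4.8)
The plan is to observe that \algoa{} is, by its very description, the algorithm provided by Theorem~\ref{thm:ordered-generation}, run with \algob{} as the subroutine that enumerates candidate extensions. So the entire content of the implication $Q(t)\Rightarrow P(t)$ is to feed the bounds guaranteed by $Q(t)$ into Theorem~\ref{thm:ordered-generation} and to check that the resulting estimates match those claimed in $P(t)$. I would fix $t\geq 2$, assume $Q(t)$ (so that $q(t)$ is defined), and let $G(A)$ be an input with $G[A]$ being $K_t$-free; write $n=|V(G)|$ and $d=|\D(G,A)|$.

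First I would note that every call to \algob{} made during the run of \algoa{} is on the \emph{same} bicolored graph $G(A)$ (only the peeling index $i$ and the set $D^*\in\D(G,V_i)$ vary), so the hypothesis that $G[A]$ is $K_t$-free is inherited unchanged by all of them. Hence $Q(t)$ applies uniformly: each such call enumerates $\C(D^*,i)$ in time at most $q(t)\cdot n^2\cdot f(n,d,t-1)^2$ and space at most $q(t)\cdot n^{t}$, where crucially the second argument of $f$ is $d=|\D(G,A)|$, which is exactly the quantity that Theorem~\ref{thm:ordered-generation} requires the candidate-extension bound to depend on. I would therefore invoke Theorem~\ref{thm:ordered-generation}, taking its candidate-extension time bound to be $(n,d)\mapsto q(t)\cdot n^2\cdot f(n,d,t-1)^2$ and its space bound to be $n\mapsto q(t)\cdot n^{t}$ (which lies in $\Omega(n)$ since $t\geq 2$, as required).

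The main---and essentially only---computation is then the algebraic check that the output of Theorem~\ref{thm:ordered-generation} collapses to $p(t)\cdot f(n,d,t)$. Since $f(n,d,t-1)=n^{2^{t}-3}d^{2^{t-1}-1}$, the dominant (product) term of the running time equals
\begin{align*}
q(t)\cdot n^2\cdot f(n,d,t-1)^2\cdot nd
&= q(t)\cdot n^{3}d\cdot n^{2^{t+1}-6}d^{2^{t}-2}\\
&= q(t)\cdot n^{2^{t+1}-3}d^{2^{t}-1}
= q(t)\cdot f(n,d,t),
\end{align*}
which is precisely the shape demanded by $P(t)$; this is exactly the recursion $f(n,d,t)=n^3\,d\,\big(n^2 f(n,d,t-1)^2\big)$ built into the definition of $f$. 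It then remains to absorb the additive $O(n^4d^2)$ term coming from Theorem~\ref{thm:ordered-generation}: since $t\geq 2$ we have $2^{t+1}-3\geq 5$ and $2^{t}-1\geq 3$, so $f(n,d,t)\geq n^5d^3\geq n^4d^2$ whenever $n,d\geq 1$ (and $d\geq 1$ always holds, as $\D(G,A)$ is never empty). Thus the total time is $O\big(q(t)\cdot f(n,d,t)\big)$, while Theorem~\ref{thm:ordered-generation} bounds the space by $O(n\cdot s(n))=O(q(t)\cdot n^{t+1})$.

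Finally I would set $p(t)$ to be $q(t)$ multiplied by the constants hidden in the two $O(\cdot)$ estimates above, which simultaneously certifies the time bound $p(t)\cdot f(n,d,t)$ and the space bound $p(t)\cdot n^{t+1}$, establishing $P(t)$. I do not expect a real obstacle here: the two points to be careful about are purely bookkeeping, namely the exponent arithmetic of the recursion encoded in $f$, and the observation that the $K_t$-free hypothesis passes to every \algob{}-call because $G$ and $A$ are never modified along the ordered generation.
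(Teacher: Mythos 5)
Your proposal is correct and follows essentially the same route as the paper: assume $Q(t)$, plug the resulting candidate-extension bounds into Theorem~\ref{thm:ordered-generation}, and verify that the arithmetic built into the definition of $f$ yields the time bound $O(q(t)\cdot f(n,d,t))$ and space $O(q(t)\cdot n^{t+1})$ demanded by $P(t)$. Your extra checks (the explicit exponent computation, absorbing the $O(n^4d^2)$ term, and noting that the $K_t$-free hypothesis is inherited because $G(A)$ is never modified) are all consistent with, and slightly more detailed than, the paper's argument.
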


\begin{proof}
   Let us assume that for some integer $t\geq 2$, the statement $Q(t)$
   holds (and in particular $q(t)$ is defined).
   Let $G$ be a graph and $A \subseteq V(G)$ be such that $G[A]$ is $K_t$-free.
   We set $n = |G|$ and $d = |\D(G,A)|$.
   By $Q(t)$, the enumeration of candidate extensions in $G(A)$ can be carried out by \algob{} in total time at most \[q(t)\cdot n^2 \cdot f(n, d, t-1)^2\] and space at most $q(t) \cdot n^{t}$.
   According to Theorem~\ref{thm:ordered-generation}, \algoa{} then enumerates $\D(G,A)$ in time
    \begin{align*}
       & O(n^4\cdot d^2 + q(t)\cdot n^3 \cdot f(n, d, t-1)^2 \cdot d)\\
       =~& O(n^4\cdot d^2 + q(t) \cdot f(n,d,t)) & \text{(by the definition of $f$)}\\
       =~& O(q(t) \cdot f(n,d,t)) & \text{(as $t\geq 2$)}
   \end{align*}
   and space $O(q(t) \cdot n^{t+1})$. Therefore, there is a constant $p(t)$ (depending on $q(t)$) such that \algoa{} runs on this input in time at most 
$p(t) \cdot f(n,d,t)$  and space at most $p(t) \cdot n^{t+1}$. This proves $P(t)$.
\end{proof}
  
\paragraph{Concluding the proof.}
We proceed by induction on $t$. The base case $P(1)$ follows from Lemma~\ref{lem:base}. The induction step that, for every integer $t \geq 1$, $P(t)$ implies $P(t+1)$, is obtained by combining Lemmas~\ref{lem:ind1} and \ref{lem:ind2}. We conclude that $P(t)$ holds for every integer $t\geq 1$. That is, the algorithm \algoa{} has the properties claimed in the statement of the theorem.
\end{proof}

We note that the complexity of the algorithm of Theorem~\ref{thm:kt-free} for $K_t$-free graphs could be slightly improved when $t \geq 3$, using Theorem~\ref{thm:triangle-free} as a base case, however that would not remove the exponential contribution of $t$ to the degree of the polynomial.

\section{Variants of \texorpdfstring{$K_t$}{Kt}-free graphs}\label{sec:variants}

We give output-polynomial time algorithms for classes related to $K_t$-free graphs relying on the algorithms and characterizations of candidate extensions given in Sections~\ref{sec:bt}, \ref{sec:triangle-free}, and~\ref{sec:kt-free}.

\subsection{Forbidding \texorpdfstring{$K_t+K_2$}{Kt+e}}

In this section we show how the algorithm of Theorem~\ref{thm:kt-free} on $K_t$-free graphs can be extended to the setting of $(K_t+K_2)$-free graphs.

\begin{theorem}\label{thm:ktme}
    There is an algorithm that, for every fixed $t \in \N$, enumerates minimal dominating sets in $(K_t+K_2)$-free graphs in output-polynomial time and polynomial space.
\end{theorem}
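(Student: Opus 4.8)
The plan is to mimic the recursive structure of Theorem~\ref{thm:kt-free}, but with the base of the recursion adapted to the new hypothesis. The key observation is that the $K_t$-free machinery never really used that the \emph{whole} graph was $K_t$-free: it only used that $G[S]$ was $K_{t-1}$-free, where $S \subseteq N(v_{i+1})\cap V_{i+1}$ is the set of not-yet-dominated vertices obtained when peeling off $v_{i+1}$. So the first step is to analyze what structure $G[S]$ inherits when $G$ is merely $(K_t+K_2)$-free rather than $K_t$-free. Since $S$ lies inside a single closed neighborhood $N[v_{i+1}]$, and the forbidden pattern $K_t+K_2$ consists of a $K_t$ together with a \emph{disjoint} edge, I would argue that $G[S]$ is close to being $K_t$-free: if $G[S]$ contained a $K_t$, then because $S\subseteq N(v_{i+1})$, any edge of $G$ whose endpoints are non-adjacent to all of that $K_t$ would complete a $K_t+K_2$. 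The upshot I want is a dichotomy: either $G[S]$ is itself $K_t$-free (in which case Theorem~\ref{thm:kt-free} applies directly to $G(S)$ and to each $G(S\setminus N[w])$), or the cliques of $G[S]$ are so tightly constrained — essentially concentrated near $v_{i+1}$ — that the remaining vertices can be handled separately.

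Concretely, I would set up the same two mutually recursive procedures as in Theorem~\ref{thm:kt-free}: algorithm \algoa{} driven by ordered generation (Theorem~\ref{thm:ordered-generation}, or its repetition-tolerant form Corollary~\ref{cor:candexrep-to-dom}), and a candidate-extension routine \algob{} based on the characterization in Lemma~\ref{lem:candidates-characterization}. That lemma reduces enumerating $\C(D^*,i)$ to enumerating $\D(G,S)$ and the sets $\D(G,S\setminus N[w])$ for $w \in N[v_{i+1}]$, with the crucial bound $|\D(G,S\setminus N[w])| \leq |\C(D^*,i)| \leq |\D(G,A)|$ supplied by Corollary~\ref{cor:cibound}. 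So the entire problem again reduces to enumerating minimal dominating sets of bicolored graphs $G(S')$ with $S'\subseteq S \subseteq N(v_{i+1})$. The plan is to show that on such instances the $(K_t+K_2)$-free hypothesis forces $G[S']$ to have bounded clique number (in terms of $t$ alone), or to decompose into a bounded-clique part plus a controlled remainder, so that the $K_{t'}$-free algorithm of Theorem~\ref{thm:kt-free} can be invoked with $t' = t'(t)$ a constant. The depth of recursion is controlled exactly as before by the drop in clique number of the relevant induced subgraph at each level of peeling.

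The main obstacle I expect is precisely pinning down the structure of $G[S]$ under the $(K_t+K_2)$-free hypothesis and controlling it through the recursion. Unlike the clean statement ``$G[S]$ is $K_{t-1}$-free'' in the $K_t$-free case, here a single disjoint edge is allowed, so $G[S]$ can contain large cliques as long as the rest of $G$ is edgeless relative to them; the difficulty is that the recursion passes to subsets $S'\subseteq S$ that are themselves inside neighborhoods, and I must ensure the forbidden-subgraph condition degrades in a way that still terminates the recursion after boundedly many steps. I would handle this by a careful case analysis: distinguish the vertices of $S$ according to whether they participate in a large clique, use the fact that two disjoint edges outside a $K_{t-1}$ would already create a forbidden configuration to bound the number or the location of such cliques, and treat the ``clique-heavy'' part by brute force or by reduction to a co-bipartite/split subproblem handled by Proposition~\ref{prop:split-properties} and Theorem~\ref{thm:cobip-hard}-style arguments. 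Throughout, the time and space bookkeeping parallels the inductive statements $P(t)$ and $Q(t)$ of Theorem~\ref{thm:kt-free}, so once the structural dichotomy is established the complexity accounting is routine and yields output-polynomial time and polynomial space for each fixed $t$.
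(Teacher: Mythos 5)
There is a genuine gap at the heart of your plan: the structural dichotomy you need for $G[S]$ is never established, and its simplest form is false. A $(K_t+K_2)$-free graph can have arbitrarily large cliques inside a single neighborhood: take a clique on $N$ vertices together with one extra vertex $v$ adjacent to a single clique vertex $x$; this graph is $2K_2$-free (hence $(K_t+K_2)$-free for every $t\geq 2$), yet $N(x)$ contains $K_{N-1}$. So ``$G[S]$ has clique number bounded in terms of $t$'' does not hold, and the recursion of Theorem~\ref{thm:kt-free} --- whose termination rests entirely on the clique number of the relevant induced subgraph dropping by one at each peeling level --- has no termination measure in your setting. Your fallback, ``decompose $S$ into a bounded-clique part plus a controlled remainder and handle the clique-heavy part by brute force or a co-bipartite/split reduction,'' is exactly the step that needs a proof and does not come for free: Section~\ref{subsec:bic} of the paper shows that enumerating $\D(G,A)$ when $A$ may contain an arbitrarily large clique and no restriction is placed on $G-A$ is \DomEnum{}-hard, so any treatment of the clique-heavy part must exploit the interaction between $S$ and the rest of $G$ in a concrete way (as is done, ad hoc, for the paw and the diamond). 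Invoking ``Theorem~\ref{thm:cobip-hard}-style arguments'' cannot help here, since that theorem is a hardness result, not an algorithm.

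The paper avoids all of this with a much simpler global argument that you may want to compare against. A minimal dominating set $D$ either induces no edge --- then it is a maximal independent set, enumerable with polynomial delay by Tsukiyama et al.~\cite{tsukiyama1977new} --- or it contains both endpoints of some edge $uv$. In the latter case one sets $A_{uv}=V(G)\setminus N[\{u,v\}]$ and observes that $G[A_{uv}]$ is $K_t$-free, since a $K_t$ in $A_{uv}$ together with the edge $uv$ would induce a $K_t+K_2$. One then runs the algorithm of Theorem~\ref{thm:kt-free} on $G(A_{uv})$, outputs $D'\cup\{u,v\}$ for each $D'\in\D(G,A_{uv})$ that yields a minimal dominating set of $G$, bounds $|\D(G,A_{uv})|\leq|\D(G)|$ via Lemma~\ref{lem:cand-ext-bound}, and removes the at most $|E(G)|$-fold repetitions with Lemma~\ref{lem:avoid-repetitions}. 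The forbidden disjoint edge is thus used once, globally, to guess an edge of the solution and reduce to the $K_t$-free case --- rather than being threaded through the peeling recursion, where, as explained above, it does not give the structure you would need.
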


\begin{proof}
Let $t \in \N$ and let $G$ be a $(K_t+K_2)$-free graph.
It is well-known that the minimal dominating sets of $G$ that induce edgeless subgraphs are exactly the maximal independent sets of~$G$.
We can therefore enumerate these using the polynomial delay algorithms of Tsukiyama et al.~\cite{tsukiyama1977new} for maximal independent sets. In the sequel we may thus focus on those minimal dominating sets of $G$ that induce at least one edge.

We show how to enumerate, for every edge $uv$ of $G$, the minimal dominating sets of $G$ that contain both $u$ and $v$.
Let $A_{uv}=V(G) \setminus N[\{u,v\}]$ and observe that $G[A_{uv}]$ is $K_t$-free.
First, we enumerate $G(A_{uv})$ using the algorithm of Theorem~\ref{thm:kt-free}, which runs in output-polynomial time and polynomial space, as $t$ is fixed.
For every $D\in \D(G,A_{uv})$ obtained from the aforementioned call, we output $D\cup\{u,v\}$ if it is a minimal dominating set of $G$, and discard $D$ otherwise.
By Lemma~\ref{lem:cand-ext-bound} (applied for $H=G$ and $B = V(G)$) we have 
$
    |\D(G,A_{uv})|\leq |\D(G)|
$.
Hence, enumerating $\D(G,A_{uv})$ produces all those minimal dominating sets of $G$ that at least induce the edge $uv$ in time $\poly(n \cdot |\D(G)|)$ and space $\poly(n)$, where the degrees of these polynomials depend on $t$ (see Theorem~\ref{thm:kt-free}).

Now that we know how to enumerate minimal dominating sets that induce at least one particular edge, we can run the above routine for every edge of $G$ to enumerate all minimal dominating sets of $G$, possibly with repetitions. 
Observe that the same output can be repeated at most $|E(G)|$ times. 
Then, repetitions are avoided using Lemma~\ref{lem:avoid-repetitions} with $\setin{}$ being the set of all graphs, $\setout{}$ the set of all vertex sets, and $R$ the relation that associates every graph to its minimal dominating sets.
\end{proof}

\subsection{Forbidding \texorpdfstring{$K_t-e$}{Kt-e}}

Another interesting case is the one of $(K_t-e)$-free graphs.
In this section we show how the characterization of Lemma~\ref{lem:candidates-characterization} can be used to enumerate candidate extensions in diamond-free graphs (which are $(K_t-e)$-free for $t=4$), which by Theorem~\ref{thm:ordered-generation} gives an output-polynomial time algorithm enumerating minimal dominating sets in this class.
We leave open the existence of such an algorithm in the case when $t\geq 5$.

In what follows, we consider a bicolored graph $G(A)$ on $n$ vertices such that $G$ is diamond-free, together with a fixed peeling $(V_0,\dots, V_{p})$ of $G(A)$ with vertex sequence $(v_1,\dots,v_p)$.
Then, we consider
\begin{align*}
    & i \in \intv{0}{p-1},
    & D^* \in \D(G,V_i),
\end{align*}
and define $S=V_{i+1}\setminus(N[D^*]\cup \{v_{i+1}\})$ and $\C(D^*,i)$ as in Sections~\ref{sec:bt}, \ref{sec:triangle-free}, and~\ref{sec:kt-free}.
Note that contrarily to the triangle-free case and the $K_t$-free case, we here require the whole graph $G$ to be diamond-free and not only~$G[A]$.
We start with an easy observation.

\begin{observation}\label{obs:clique-partition}
    For every vertex $u$ of $G$, $G[N(u)]$ is $P_3$-free. 
    Then $G[N(v_{i+1})]$, hence also $G[S]$, can be partitioned into a disjoint union of cliques (\ie, it is a cluster graph).
\end{observation}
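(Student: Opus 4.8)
\textbf{Proof plan for Observation~\ref{obs:clique-partition}.}
The plan is to prove the two assertions in sequence, deriving the clique-partition statements as consequences of the first $P_3$-free claim. First I would establish that $G[N(u)]$ is $P_3$-free for every vertex $u$ of $G$. I argue by contradiction: suppose $G[N(u)]$ contains an induced $P_3$, say on vertices $a,b,c$ with edges $ab$ and $bc$ but $a$ and $c$ non-adjacent. Since $a,b,c$ are all neighbors of $u$, the vertex $u$ is adjacent to each of them. Then I would examine the induced subgraph on $\{u,a,b,c\}$: it contains the edges $ua, ub, uc, ab, bc$, but not $ac$. This is exactly a $K_4$ with one edge (namely $ac$) removed, i.e. a diamond. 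This contradicts the assumption that $G$ is diamond-free, establishing that $G[N(u)]$ is $P_3$-free.

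The second step is to pass from $P_3$-freeness to the cluster-graph (disjoint union of cliques) structure. Taking $u = v_{i+1}$, the graph $G[N(v_{i+1})]$ is $P_3$-free. I would then invoke the standard fact that a graph is a disjoint union of cliques if and only if it contains no induced $P_3$: indeed, $P_3$-freeness means precisely that adjacency restricted to each connected component is transitive, so each connected component of $G[N(v_{i+1})]$ is a clique. Hence $G[N(v_{i+1})]$ is a cluster graph.

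Finally, for $G[S]$ I would observe that $S = V_{i+1} \setminus (N[D^*] \cup \{v_{i+1}\}) \subseteq N(v_{i+1}) \cap V_{i+1}$, which is noted earlier in the $K_t$-free analysis (since $D^*$ dominates $V_i$, every vertex of $V_{i+1}$ outside $N[D^*]$ other than $v_{i+1}$ must lie in $N(v_{i+1})$). Thus $S \subseteq N(v_{i+1})$, so $G[S]$ is an induced subgraph of the cluster graph $G[N(v_{i+1})]$. Since the class of cluster graphs is closed under taking induced subgraphs (equivalently, $P_3$-freeness is hereditary), $G[S]$ is also a disjoint union of cliques, as claimed. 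I do not anticipate a genuine obstacle here; the only point requiring a touch of care is confirming the containment $S \subseteq N(v_{i+1})$, which follows directly from the peeling property that $v_{i+1}$ dominates $V_{i+1} \setminus V_i$ together with $D^* \in \D(G,V_i)$.
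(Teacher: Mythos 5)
Your proof is correct and follows exactly the intended argument: the paper states this as an easy observation without proof, and the standard reasoning is precisely yours — an induced $P_3$ in $N(u)$ together with $u$ yields a diamond, $P_3$-freeness is equivalent to being a cluster graph, and $S\subseteq N(v_{i+1})$ follows from $D^*$ dominating $V_i$ (a containment the paper itself notes elsewhere). Nothing to add.
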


We will show how to minimally dominate one clique of $S$, then a disjoint union of cliques of $S$, and will conclude with the enumeration of $\C(D^*,i)$.

\begin{lemma}\label{lem:clique-neighborhoods}
    Let $K$ be a clique of $G[S]$ and $u$ be a vertex in $G-S$, $u\neq v_{i+1}$, that is adjacent to some vertex of~$K$.
    If $u$ is adjacent to $v_{i+1}$, then it is complete to $K$.
    Otherwise $u$ has exactly one neighbor in $K$.
\end{lemma}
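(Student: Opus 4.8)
The plan is to obtain both statements by exhibiting a forbidden diamond (that is, a $K_4-e$) whenever either conclusion fails, so that everything reduces to the definition of diamond-freeness applied to a carefully chosen four-vertex set. The essential preliminary fact I would establish first is that $S\subseteq N(v_{i+1})$, i.e.\ every vertex of $S$ is adjacent to $v_{i+1}$; in particular every vertex of the clique $K\subseteq S$ is adjacent to $v_{i+1}$. To see this, take $x\in S$, so $x\in V_{i+1}$, $x\neq v_{i+1}$, and $x\notin N[D^*]$. If $x$ were non-adjacent to $v_{i+1}$ then $x\in V_{i+1}\setminus N[v_{i+1}]=V_i$, and since $D^*\in \D(G,V_i)$ dominates $V_i$ we would get $x\in N[D^*]$, a contradiction. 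This is the same observation $S\subseteq N(v_{i+1})$ already used in Section~\ref{sec:kt-free}.

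With this containment in hand, the first part is a one-line appeal to the structure of the diamond. Suppose $u$ is adjacent to $v_{i+1}$ and to the given vertex $a\in K$, but is not complete to $K$, so there is $b\in K$ with $ub\notin E(G)$. The four vertices $u,v_{i+1},a,b$ are pairwise distinct (since $a,b\in S$ while $u\in G-S$ and $v_{i+1}\notin S$, and $a\neq b$), and among them all six pairs are edges except $ub$: indeed $uv_{i+1}$ and $ua$ hold by hypothesis, $v_{i+1}a$ and $v_{i+1}b$ hold because $a,b\in K\subseteq N(v_{i+1})$, and $ab$ holds because $K$ is a clique. This is exactly a diamond, contradicting diamond-freeness; hence $u$ is complete to $K$.

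The second part is handled symmetrically. If $u$ is not adjacent to $v_{i+1}$ yet has two distinct neighbors $a,b\in K$, then on $u,v_{i+1},a,b$ every pair is an edge except $uv_{i+1}$ (again using $a,b\in N(v_{i+1})$ and $ab\in E(G)$ because $K$ is a clique), which is once more a diamond. This contradiction shows $u$ has at most one neighbor in $K$, and since it has at least one by the hypothesis that $u$ is adjacent to some vertex of $K$, it has exactly one.

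I do not expect a genuine obstacle here: once the preliminary containment $S\subseteq N(v_{i+1})$ is in place, each case is an immediate recognition of $K_4-e$ on four vertices. The only point that warrants care is that preliminary step, which hinges on unwinding the peeling identity $V_i=V_{i+1}\setminus N[v_{i+1}]$ together with the fact that $D^*\in\D(G,V_i)$ dominates $V_i$; after that, both conclusions are short consequences of forbidding the diamond.
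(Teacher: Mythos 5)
Your proof is correct and follows essentially the same route as the paper: the paper's first case invokes the $P_3$-freeness of $G[N(v_{i+1})]$ (Observation~5.2), which when unwound is exactly your direct exhibition of a diamond on $\{u,v_{i+1},a,b\}$, and the second case is verbatim the paper's argument. Your explicit verification that $S\subseteq N(v_{i+1})$ is a point the paper establishes elsewhere (e.g.\ in the proof of Claim~5.9) and uses implicitly here, so making it explicit is sound but not a departure.
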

\begin{proof}
    If $u\in N(v_{i+1})$ then, as $G[N(v_{i+1})]$ is $P_3$-free and $K\subseteq N(v_{i+1})$, $u$ is complete to~$K$.
    If $u$ is not adjacent to $v_{i+1}$, then it has exactly one neighbor in $K$, as otherwise  $\{a,b,u,v_{i+1}\}$ would induce a diamond in $G$, for any two neighbors $a,b\in K$ of~$u$.
\end{proof}

\begin{lemma}\label{lem:clique-domsetsenumeration}
    Let $K$ be a clique in $G[S]$. 
    Then $\D(G,K)$ can be enumerated in total time 
    $O(n^2+ n\cdot |\D(G,K)|)$ and $O(n^2)$ space.
\end{lemma}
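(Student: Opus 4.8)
The plan is to turn the local structure around $K$, as captured by Observation~\ref{obs:clique-partition} and Lemma~\ref{lem:clique-neighborhoods}, into an explicit two-family description of $\D(G,K)$ that can be read off directly. The starting point is a dichotomy on vertices adjacent to $K$: I would show that every vertex of $G$ that has a neighbor in $K$ either is \emph{complete} to $K$ (so that it dominates all of $K$) or has exactly one neighbor in $K$ (so that it dominates a single vertex of $K$); call the latter vertices \emph{partial}. Indeed, the vertices of $K$ and the vertex $v_{i+1}$ are complete to $K$, every vertex of $S\setminus K$ adjacent to $K$ lies in the same cluster component as $K$ (by Observation~\ref{obs:clique-partition}) and is therefore complete to $K$, and every remaining vertex touching $K$ is handled by Lemma~\ref{lem:clique-neighborhoods}: it is complete to $K$ if adjacent to $v_{i+1}$ and has exactly one neighbor in $K$ otherwise. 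Computing this classification and, for each $k\in K$, the list $P_k$ of partial vertices adjacent to $k$, costs $O(n^2)$ time and space.

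The heart of the argument is the claim that $\D(G,K)$ splits into two disjoint families. First, if a minimal dominating set $D$ of $G(K)$ contains a vertex $w$ complete to $K$, then $w$ already dominates all of $K$, so no other vertex of $D$ could keep a private neighbor in $K$, forcing $D=\{w\}$; conversely every singleton $\{w\}$ with $w$ complete to $K$ is minimal. Second, if $D$ contains no vertex complete to $K$, then — since in a minimal dominating set of $G(K)$ every vertex has a private neighbor in $K$ and hence touches $K$ — all vertices of $D$ are partial. Minimality then forces the (unique) $K$-neighbors of the vertices of $D$ to be pairwise distinct, and domination forces them to exhaust $K$, so $D$ selects exactly one partial vertex for each $k\in K$. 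Thus the second family is precisely the Cartesian product $\prod_{k\in K}P_k$ (which is empty unless every $P_k$ is nonempty), and distinct tuples give distinct sets because a partial vertex is adjacent to a single vertex of $K$. The two families are disjoint: when $|K|\ge 2$ the product sets have size exactly $|K|\ge 2$ while the first family consists of singletons, and when $|K|=1$ every vertex of $N[k]$ is complete to $K$, no partial vertex exists, and the second family is empty.

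With this description in hand the enumeration is routine. I would output the singleton $\{w\}$ for each of the at most $n$ vertices complete to $K$, each in $O(n)$ time, which costs $O(n^2)$; since each such singleton is a distinct element of $\D(G,K)$, this is also $O(n\cdot|\D(G,K)|)$. Then, provided every $P_k$ is nonempty, I would enumerate $\prod_{k\in K}P_k$ by treating the tuple as a mixed-radix odometer, producing each set (of size $|K|\le n$) with delay $O(n)$; the number of such sets is at most $|\D(G,K)|$. Adding the $O(n^2)$ setup, the total running time is $O(n^2+n\cdot|\D(G,K)|)$ in $O(n^2)$ space, as required.

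The step I expect to be most delicate is the structural claim itself — in particular arguing that no \emph{mixed} minimal dominating set (containing both a complete vertex and other vertices) can exist, and that the product family contains exactly the minimal dominating sets avoiding complete vertices, with the degenerate case $|K|=1$ checked separately so that no set is produced twice. Once that characterization is nailed down, the dichotomy of Lemma~\ref{lem:clique-neighborhoods} does all the work and the complexity accounting is immediate.
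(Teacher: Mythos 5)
Your proposal is correct and follows essentially the same route as the paper: output singletons for the vertices dominating all of $K$ (namely $v_{i+1}$, the vertices of $K$, and their neighbors in $N(v_{i+1})$, all complete to $K$ by Lemma~\ref{lem:clique-neighborhoods}), then enumerate the Cartesian product of the pairwise-disjoint sets of ``partial'' neighbors of the vertices of $K$. Your write-up is merely more explicit than the paper's about why these two families exhaust $\D(G,K)$ (no mixed sets, the $|K|=1$ degeneracy), which the paper leaves as ``correctness follows from Lemma~\ref{lem:clique-neighborhoods}.''
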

\begin{proof}
    We describe an algorithm enumerating $\D(G,K)$ in the specified time and space bounds.
    We first output $\{v_{i+1}\}$ as it is complete to~$K$.
    We then output all vertices $u\in N(v_{i+1})$ such that $u\in K$ or $u$ is adjacent to some vertex of~$K$.
    By Lemma~\ref{lem:clique-neighborhoods}, these vertices are also complete to~$K$.
    Then, for every $x\in K$, we compute the neighborhood of $x$ outside of $N(v_{i+1})$ in total time $O(n^2)$.
    By Lemma~\ref{lem:clique-neighborhoods}, these neighborhoods are disjoint.
    At last, we enumerate the unordered Cartesian products of 
    these neighborhoods. This can clearly be done in total time of $O(n\cdot |\D(G,K)|)$ using $O(n)$ space as they are disjoint.
    Clearly, every element in such an unordered Cartesian product is a minimal dominating set of $K$, and the described algorithm performs within the specified time and space bounds.
    The correctness of the algorithm follows from Lemma~\ref{lem:clique-neighborhoods}.
\end{proof}

\begin{lemma}\label{lem:P3-free}
    Let $W$ be a subset of $S$.
    Then $\D(G,W)$ can be enumerated in total time
    $O(n^{7}\cdot |\D(G,A)|^3)$ and $O(n^3)$ space.
\end{lemma}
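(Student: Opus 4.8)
The plan is to reduce the enumeration of $\D(G,W)$ to the ordered-generation machinery of Section~\ref{sec:bt}, applied to the bicolored graph $G(W)$ rather than to $G(A)$, and to use the single-clique enumeration of Lemma~\ref{lem:clique-domsetsenumeration} as the routine for candidate extensions. First I would invoke Observation~\ref{obs:clique-partition}: since $W\subseteq S$, the graph $G[W]$ is an induced subgraph of the cluster graph $G[S]$ and hence is itself a disjoint union of cliques; let $K_1,\dots,K_m$ be its connected components. I would then build the specific peeling $(W_0,\dots,W_m)$ of $G(W)$ defined by $W_0=\emptyset$ and $W_j=K_1\cup\dots\cup K_j$, with vertex sequence $(u_1,\dots,u_m)$ where $u_j$ is an arbitrary vertex of $K_j$. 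This is a valid peeling: as $G[W]$ has no edge between distinct components, the only neighbours of $u_j$ inside $W_j$ are the other vertices of its own clique, so $N[u_j]\cap W_j=K_j$ and therefore $W_{j-1}=W_j\setminus N[u_j]$.

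With this peeling fixed, the candidate extensions take a particularly simple form. For $j\in\intv{0}{m-1}$ and $D^\dagger\in\D(G,W_j)$, Equality~\eqref{eq:cdstari} gives $\C(D^\dagger,j)=\D(G,W_{j+1}\setminus N[D^\dagger])$; since $D^\dagger$ already dominates $W_j$, the set $W_{j+1}\setminus N[D^\dagger]$ equals $K_{j+1}\setminus N[D^\dagger]$, which is a subset of the cluster $K_{j+1}\subseteq S$ and is therefore a clique of $G[S]$. Thus each candidate-extension enumeration is exactly an instance of Lemma~\ref{lem:clique-domsetsenumeration}, running in time $O(n^2+n\cdot|\D(G,K_{j+1}\setminus N[D^\dagger])|)$ and space $O(n^2)$; and by Corollary~\ref{cor:cibound} applied to this peeling we have $|\C(D^\dagger,j)|\le|\D(G,W)|$. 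Feeding these bounds into Theorem~\ref{thm:ordered-generation} with $f(n,d)=O(n^2+nd)$ and $s(n)=O(n^2)$ yields an enumeration of $\D(G,W)$ in time $O\!\left(n^4\cdot|\D(G,W)|^2\right)$ and space $O(n\cdot n^2)=O(n^3)$, which already matches the claimed space bound.

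It then remains to pass from $|\D(G,W)|$ to $|\D(G,A)|$, and this is where I expect the main difficulty to lie. The reason the cliques cannot be treated independently — which is also what forces the ordered-generation detour instead of a direct Cartesian product — is that a single vertex may dominate several cliques at once: by Lemma~\ref{lem:clique-neighborhoods}, $v_{i+1}$ and every vertex of $N(v_{i+1})$ can be complete to several of the $K_j$, so the per-clique solution sets interact and a naive product would both overcount and produce non-minimal sets. To bound $|\D(G,W)|$ I would use the peeling inequalities: for the instantiations actually arising in the diamond-free algorithm, namely $W=S$ (when $v_{i+1}\in N[D^*]$) and $W=S\setminus N[w]$, Corollary~\ref{cor:cibound} together with Lemma~\ref{lem:candidates-characterization} give $|\D(G,W)|\le|\C(D^*,i)|\le|\D(G,A)|$, so the running time becomes $O\!\left(n^4\cdot|\D(G,A)|^2\right)$, which is subsumed by the stated $O\!\left(n^7\cdot|\D(G,A)|^3\right)$. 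For a genuinely arbitrary $W\subseteq S$ one would instead establish a polynomial blow-up $|\D(G,W)|\le\poly(n)\cdot|\D(G,A)|$ through a remove-a-vertex-and-re-dominate map in the spirit of the argument proving Inequality~\eqref{eq:triangle-free-trashbound} in Lemma~\ref{lem:triangle-free-candidates-characterization}; pinning down such a bound and checking that its loss stays within the slack already present in the exponents $n^7$ and $|\D(G,A)|^3$ is the step I expect to require the most care.
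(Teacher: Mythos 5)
Your proposal follows the same skeleton as the paper's proof --- ordered generation applied to $G(W)$, with the peeling given by the clique partition of the cluster graph $G[W]$ and Lemma~\ref{lem:clique-domsetsenumeration} as the workhorse --- but it handles the candidate extensions differently, and more efficiently. The paper routes each $\Cp(D^\circ,j)$ through Lemma~\ref{lem:candidates-characterization}: when $D^\circ$ does not dominate $u_{j+1}$ it iterates over all $w\in N[u_{j+1}]$ and enumerates $\D(G,Y\setminus N[w])$, which produces each extension up to $n$ times and therefore forces a pass through Lemma~\ref{lem:avoid-repetitions} via Corollary~\ref{cor:candexrep-to-dom}, squaring the per-node cost; this is exactly where the exponents in $O(n^{7}\cdot|\D(G,A)|^3)$ come from. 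You instead observe that $\C(D^\dagger,j)=\D(G,K_{j+1}\setminus N[D^\dagger])$ is already the set of minimal dominating sets of a single clique of $G[S]$ (the set to be dominated remains a clique whether or not it contains $u_{j+1}$, so no case split and no iteration over $w$ is needed), hence Lemma~\ref{lem:clique-domsetsenumeration} applies directly and without repetition, and Theorem~\ref{thm:ordered-generation} yields $O(n^4\cdot|\D(G,W)|^2)$ time and $O(n^3)$ space. This is correct and strictly subsumes the paper's bound. You are also more careful than the paper about converting $|\D(G,W)|$ into $|\D(G,A)|$: the paper simply cites Lemma~\ref{lem:cand-ext-bound}, which formally requires the relevant sets to be of the form $A\setminus N[D]$ --- clear for the two instantiations $W=S$ and $W=S\setminus N[w]$ actually used in Lemma~\ref{lem:diamond-free-candidates-enumeration}, but not obvious for an arbitrary $W\subseteq S$. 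Your derivation of $|\D(G,W)|\le|\C(D^*,i)|\le|\D(G,A)|$ from Lemma~\ref{lem:candidates-characterization} and Corollary~\ref{cor:cibound} closes the argument for every use the paper makes of this lemma; the residual question of a bound for truly arbitrary $W$ is a looseness present in the paper's own write-up as well, not a defect specific to your proof.
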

\begin{proof}
    We use the ordered generation described in Section~\ref{sec:bt}.
    The algorithm first computes a peeling $(U_1,\dots,U_{q})$ of $G(W)$ with vertex sequence $(u_1,\dots,u_q)$, in $O(n^2)$ time and space.
    Note that $N[u_1]\cap W,\dots,N[u_q]\cap W$ is exactly the disjoint clique partition of $G[W]$; denote these sets by $W_1,\ldots,W_q$.
    Given $j\in \intv{0}{q-1}$ and $D^\circ\in \D(G,U_j)$, we define $\Cp(D^\circ, j)$ as the set of candidate extensions of $(D^\circ, j)$ with respect to the chosen peeling of $G(W)$
    and we show how to enumerate $\Cp(D^\circ,j)$ in time
    $
        O(n^6\cdot |\D(G,A)|^2)
    $
    and using $O(n^2)$ space.
    
    We rely on the same characterization of candidate extensions that we use in the proof of Theorem~\ref{thm:kt-free}, \ie{}, Lemma~\ref{lem:candidates-characterization}. Recall that this lemma allows us to consider two cases depending on whether $D^\circ$ dominates $u_{j+1}$ or not. Let $Y=W_{j+1}\setminus N[D^\circ]$.
    
    If $D^\circ$ dominates $u_{j+1}$, then we can enumerate $\C'(D^\circ,j)$ by just enumerating $\D(G,Y)$, as these sets coincide. As $Y$ is a clique in $G[S]$, by Lemma~\ref{lem:clique-domsetsenumeration} we can enumerate $\D(G,Y)$ in time $O(n^2+n\cdot |\D(G,Y)|)$ and space $O(n^2)$. 
    By Lemma~\ref{lem:cand-ext-bound} we have $|\D(G,Y)|\leq |\D(G,A)|$, hence the procedure runs within the required time and space complexity.
    
    In the remaining case when $D^\circ$ does not dominate $u_{j+1}$, we iterate over all $w \in N[u_{j+1}]$ and $Q \in \D(G, Y\setminus N[w])$ (obtained via a call to the algorithm of Lemma~\ref{lem:clique-domsetsenumeration}) and output $Q\cup \{w\}$ if and only if this set belongs to $\C'(D^\circ,j)$, which can be checked in $O(n^2)$ time and space.
    Again, by Lemma~\ref{lem:cand-ext-bound} we have $|\D(G,Y\setminus N[w])|\leq |\D(G,A)|$ for all $w$ as above. Hence, 
    in total, the described algorithm enumerates $\Cp(D^\circ,i)$, possibly with repetitions, in time $O(n^3\cdot |\D(G,A)|)$ and using $O(n^2)$ space.
    Using Corollary~\ref{cor:candexrep-to-dom}, we obtain an algorithm enumerating $\D(G,W)$ in time
    $O(n^{7}\cdot |\D(G,A)|^3)$, and using $O(n^3)$ space.
\end{proof}

\begin{lemma}\label{lem:diamond-free-candidates-enumeration}
    There is an algorithm enumerating $\C(D^*,i)$, possibly with repetitions, in total time 
    $O(n^{8}\cdot |\D(G,A)|^3)$ and using $O(n^3)$ space.
\end{lemma}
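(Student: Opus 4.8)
The plan is to mirror the proof of the auxiliary routine \algobp{} from Theorem~\ref{thm:kt-free}, but with Lemma~\ref{lem:P3-free} playing the role that the recursive call to \algoa{} played there. By Observation~\ref{obs:clique-partition}, $G[S]$ is a cluster graph, so for every $W\subseteq S$ the set $\D(G,W)$ can be enumerated in time $O(n^{7}\cdot |\D(G,A)|^3)$ and space $O(n^3)$ using Lemma~\ref{lem:P3-free}. The whole argument reduces to showing that enumerating $\C(D^*,i)$ can be phrased in terms of a bounded number of such calls, which is exactly what Lemma~\ref{lem:candidates-characterization} provides: it splits the task into two cases depending on whether $v_{i+1}\in N[D^*]$.

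First I would handle the case $v_{i+1}\in N[D^*]$. Here Lemma~\ref{lem:candidates-characterization} gives $\C(D^*,i)=\D(G,S)$, so it suffices to invoke the algorithm of Lemma~\ref{lem:P3-free} with $W=S$. This produces $\C(D^*,i)$ (in fact without repetition) in time $O(n^{7}\cdot |\D(G,A)|^3)$ and space $O(n^3)$, comfortably within the claimed bounds.

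In the remaining case $v_{i+1}\notin N[D^*]$, Lemma~\ref{lem:candidates-characterization} tells us that every element of $\C(D^*,i)$ has the form $Q\cup\{w\}$ with $w\in N[v_{i+1}]$ and $Q\in\D(G,S\setminus N[w])$. I would therefore iterate over all $w\in N[v_{i+1}]$ (at most $n$ choices), compute $S\setminus N[w]\subseteq S$ in $O(n)$ time, and enumerate $\D(G,S\setminus N[w])$ by a call to Lemma~\ref{lem:P3-free}; for each set $Q$ produced I test in $O(n^2)$ time and space whether $Q\cup\{w\}\in\C(D^*,i)$, outputting it if so. By Lemma~\ref{lem:candidates-characterization} together with Corollary~\ref{cor:cibound} we have $|\D(G,S\setminus N[w])|\leq |\C(D^*,i)|\leq |\D(G,A)|$, so each of the at most $n$ calls costs $O(n^{7}\cdot |\D(G,A)|^3)$ time while the per-output checks add only $O(n^2\cdot |\D(G,A)|)$ per call. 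The total time is thus $n\cdot O(n^{7}\cdot |\D(G,A)|^3)=O(n^{8}\cdot |\D(G,A)|^3)$, and since only one invocation of Lemma~\ref{lem:P3-free} is active at any moment the space stays $O(n^3)$.

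I do not expect a real obstacle, as all the delicate work is already encapsulated in Lemma~\ref{lem:P3-free}. The one point needing care is the repetition bookkeeping: a single candidate extension $X$ may be generated once for each $w\in N[v_{i+1}]\cap X$, hence can be output up to $n$ times. This is precisely what the statement tolerates (\emph{possibly with repetitions}), and it is also the reason the lemma is phrased this way—these repetitions are later eliminated via Corollary~\ref{cor:candexrep-to-dom} when the routine is plugged into the full ordered-generation scheme of Theorem~\ref{thm:ordered-generation}.
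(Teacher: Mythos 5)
Your proposal is correct and follows essentially the same route as the paper: both apply the case split of Lemma~\ref{lem:candidates-characterization}, invoke Lemma~\ref{lem:P3-free} on $S$ (resp.\ on $S\setminus N[w]$ for each $w\in N[v_{i+1}]$), filter by membership in $\C(D^*,i)$, and account for the up-to-$n$-fold repetitions that are later removed via Corollary~\ref{cor:candexrep-to-dom}. The complexity bookkeeping matches the paper's as well.
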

\begin{proof}
    We apply the same argument as in the previous lemma, and in the proof of Theorem~\ref{thm:kt-free}.
    Lemma~\ref{lem:candidates-characterization} allows us to consider two cases depending on whether $D^*$ dominates $v_{i+1}$ or not.
    If $D^*$ dominates $v_{i+1}$, we call the algorithm of Lemma~\ref{lem:P3-free} to enumerate $\D(G,S)$ without repetitions in total time $O(n^{7}\cdot |\D(G,A)|^3)$
    and $O(n^3)$ space.
    If $D^*$ does not dominate $v_{i+1}$, we iterate over all $w \in N[v_{i+1}]$ and $Q \in \D(G, S\setminus N[w])$ (obtained via a call to the algorithm of Lemma~\ref{lem:P3-free} as $S\setminus N[w]\subseteq S$) and output $Q\cup \{w\}$ if and only if this set belongs to $\C(D^*,i)$.
    An analogous complexity analysis shows that this algorithm runs in time $O(n^{8}\cdot |\D(G,A)|^3)$ and uses $O(n^3)$ space, and it enumerates $\C(D^*,i)$ possibly with repetitions.
\end{proof}

As a consequence of Corollary~\ref{cor:candexrep-to-dom} and Lemma~\ref{lem:diamond-free-candidates-enumeration}, we get the following. 

\begin{theorem}\label{thm:diamond-free}
    There is an algorithm that,
    given a bicolored graph $G(A)$ on $n$ vertices such that $G$ is diamond-free,
    enumerates the set $\D(G,A)$ in time
    \[
        O(\poly(n) \cdot |\D(G,A)|^7)
    \]
    and $O(n^4)$ space.
\end{theorem}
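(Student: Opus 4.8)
The plan is to read off Theorem~\ref{thm:diamond-free} as a mechanical instantiation of the generic reduction in Corollary~\ref{cor:candexrep-to-dom}, fed with the diamond-specific candidate-extension bound of Lemma~\ref{lem:diamond-free-candidates-enumeration}. First I would check that the hypothesis of Corollary~\ref{cor:candexrep-to-dom} is exactly what Lemma~\ref{lem:diamond-free-candidates-enumeration} supplies: for a diamond-free bicolored graph $G(A)$ on $n$ vertices, a fixed peeling $(V_0,\dots,V_p)$, any $i\in\intv{0}{p-1}$, and any $D^*\in\D(G,V_i)$, there is an algorithm enumerating $\C(D^*,i)$, possibly with repetitions, in time $f(n,d)=O(n^8\cdot d^3)$ and space $s(n)=O(n^3)$, where $d=|\D(G,A)|$. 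The uniformity over all $i$ and all $D^*$ along the peeling is precisely what Corollary~\ref{cor:candexrep-to-dom} requires, so no extra work is needed to bridge the two statements.

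Next I would substitute $f$ and $s$ into the bounds of Corollary~\ref{cor:candexrep-to-dom}. Its running time $O(n^4d^2+f(n,d)^2\cdot nd)$ becomes, using $f(n,d)^2=O(n^{16}d^6)$ and hence $f(n,d)^2\cdot nd=O(n^{17}d^7)$, a total of $O(n^4d^2+n^{17}d^7)=O(\poly(n)\cdot d^7)$, the $d^7$ term dominating. Its space bound $O(n\cdot s(n))$ becomes $O(n\cdot n^3)=O(n^4)$. Both quantities agree with the statement, so this single substitution completes the proof.

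The hard part is not here but is already discharged inside Lemma~\ref{lem:diamond-free-candidates-enumeration}, and I would stress where diamond-freeness is truly used. By Observation~\ref{obs:clique-partition} the set $S$ partitions into a disjoint union of cliques, and Lemma~\ref{lem:clique-neighborhoods} pins down how outside vertices attach to each clique; these facts let one enumerate $\D(G,K)$ for a single clique $K$ via unordered Cartesian products (Lemma~\ref{lem:clique-domsetsenumeration}), then a union of cliques via a nested ordered generation (Lemma~\ref{lem:P3-free}), and finally $\C(D^*,i)$ via the case split of Lemma~\ref{lem:candidates-characterization}. The only delicate point is accounting the cost across these nested reductions and absorbing the at-most-$n$ multiplicity produced when branching over $w\in N[v_{i+1}]$, which Corollary~\ref{cor:candexrep-to-dom} (through Lemma~\ref{lem:avoid-repetitions}) removes at the price of a quadratic blow-up in the per-node cost. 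Once Lemma~\ref{lem:diamond-free-candidates-enumeration} is in hand, Theorem~\ref{thm:diamond-free} is immediate.
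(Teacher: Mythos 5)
Your proposal is correct and matches the paper exactly: the paper derives Theorem~\ref{thm:diamond-free} as an immediate consequence of Corollary~\ref{cor:candexrep-to-dom} and Lemma~\ref{lem:diamond-free-candidates-enumeration}, and your substitution $f(n,d)=O(n^8 d^3)$, $s(n)=O(n^3)$ yielding $O(n^{17}d^7)$ time and $O(n^4)$ space is the intended (and arithmetically correct) instantiation.
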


Note that when $A = V(G)$, we have $\D(G)=\D(G,A)$. 
Hence, Theorem~\ref{thm:diamond-free} implies the existence of an algorithm enumerating the minimal dominating sets in
diamond-free graphs in output-polynomial time and using polynomial space, which is one of the two cases of Theorem~\ref{th:pawdiam}.

\subsection{Paw-free graphs}

We now consider the exclusion of a specific graph, the paw, and show that \DomEnum{} admits an output-polynomial time algorithm in paw-free graphs.

In what follows, we consider a bicolored graph $G(A)$ on $n$ vertices such that $G$ is paw-free, together with a fixed peeling $(V_0,\dots, V_{p})$ of $G(A)$ with vertex sequence $(v_1,\dots,v_p)$.
Then, we consider
\begin{align*}
    & i \in \intv{0}{p-1},
    & D^* \in \D(G,V_i),
\end{align*}
and define $S=V_{i+1}\setminus(N[D^*]\cup \{v_{i+1}\})$ and $\C(D^*,i)$ as in Sections~\ref{sec:bt}, \ref{sec:triangle-free} and~\ref{sec:kt-free}.
As in the previous section we stress that we require the whole graph $G$ to be paw-free, and not only~$G[A]$.
We start with an easy observation.

\begin{observation}
    For every vertex $u$ of $G$, $G[N(u)]$ is $\overline{P_3}$-free.
    Hence $G[S]$ is a complete multipartite graph (also called a co-cluster graph).
\end{observation}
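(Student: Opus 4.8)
The plan is to prove the two assertions in turn, both by reduction to the forbidden paw. First I would establish that $G[N(u)]$ is $\overline{P_3}$-free for every vertex $u$. Recall that $\overline{P_3}$ is the three-vertex graph consisting of a single edge together with an isolated vertex. Suppose toward a contradiction that $N(u)$ contains an induced $\overline{P_3}$, that is, three vertices $a,b,c\in N(u)$ with $ab\in E(G)$ and $ac,bc\notin E(G)$. Then I would examine the induced subgraph $G[\{u,a,b,c\}]$: since $a,b,c\in N(u)$ its edges are exactly $ua$, $ub$, $uc$, and $ab$, so $\{u,a,b\}$ induces a triangle while $c$ is adjacent to $u$ only. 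This is precisely an induced paw, contradicting that $G$ is paw-free. Hence no such triple exists, and $G[N(u)]$ is $\overline{P_3}$-free.

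For the second assertion I would first recall, exactly as in the $K_t$-free and diamond-free sections, that since $D^*$ dominates $V_i$ and $V_{i+1}\setminus V_i = N[v_{i+1}]\cap V_{i+1}$, the set $S=V_{i+1}\setminus(N[D^*]\cup\{v_{i+1}\})$ satisfies $S\subseteq N(v_{i+1})$. Consequently $G[S]$ is an induced subgraph of $G[N(v_{i+1})]$; as $\overline{P_3}$-freeness is inherited by induced subgraphs, $G[S]$ is $\overline{P_3}$-free by the first part. It then remains to invoke the standard characterization that a graph is complete multipartite if and only if it is $\overline{P_3}$-free. Concretely, an induced $\overline{P_3}$ consists of an edge $xy$ together with a vertex $z$ adjacent to neither $x$ nor $y$, which is exactly a witness that the relation of non-adjacency fails to be transitive. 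Thus $G[S]$ being $\overline{P_3}$-free means that non-adjacency is an equivalence relation on $S$, whose classes are independent sets that are pairwise completely joined; this is the definition of a complete multipartite graph.

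I do not anticipate a genuine obstacle here, since everything reduces to a four-vertex case analysis for the first claim and to a well-known structural characterization for the second. The only point that requires a little care is the bookkeeping that $S\subseteq N(v_{i+1})$, so that $G[S]$ inherits $\overline{P_3}$-freeness from $G[N(v_{i+1})]$; this follows directly from the peeling structure as already used in the earlier sections, and I would simply cite that fact rather than re-derive it.
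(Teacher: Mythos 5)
Your proof is correct and is exactly the argument the paper intends: the paper states this as ``an easy observation'' without proof, and your two steps --- the four-vertex paw witness $\{u,a,b,c\}$ for the first claim, and the containment $S\subseteq N(v_{i+1})$ (already used in the $K_t$-free analysis) combined with the standard fact that $\overline{P_3}$-free means complete multipartite for the second --- are precisely the omitted details. No gaps.
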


Note that when enumerating $\C(D^*,i)$ (\ie{}, the minimal dominating sets of $G(S)$), we may safely ignore the edges between two vertices of $G-S$.
Therefore, if $S$ is an independent set, we can delete all edges of $G-S$ (in $O(n^2)$ time) to obtain that $N(v_{i+1})$ is an independent set and then apply the algorithm enumerating $\C(D^*,i)$ in this setting given by Lemma~\ref{lem:triangle-free-candidates-enumeration}.
In the next lemma, we consider the case where $S$ contains at least one edge. 

\begin{lemma}\label{lem:paw-free-edge}
    Assume that $G[S]$ contains at least one edge, and let $u$ be a vertex of $G$, $u\neq v_{i+1}$, that has a neighbor in~$S$.
    If $u$ is not adjacent to $v_{i+1}$, then it is complete to~$S$.
    Otherwise, $u$ is complete to $S\setminus I_j$, for some $j\in \intv{1}{q}$ where $I_1,\dots,I_q$ is the complete multipartition of $G[S]$ (every $I_j$ induces an independent set in $G$, while vertices from different $I_j$'s are adjacent).
\end{lemma}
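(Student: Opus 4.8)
The statement concerns the structure of the neighborhood of a vertex $u$ with respect to the set $S$, under the assumption that $G$ is paw-free and $G[S]$ is a complete multipartite graph (as established in the preceding observation) with parts $I_1,\dots,I_q$. There are two cases to analyze depending on whether $u$ is adjacent to $v_{i+1}$, and the plan is to treat each separately, in each case exploiting the forbidden-paw condition to constrain how $u$ can attach to $S$.

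\textbf{Case 1: $u$ not adjacent to $v_{i+1}$.} Here I would argue that $u$ is complete to $S$. The plan is to suppose for contradiction that $u$ has a neighbor $a\in S$ but a non-neighbor $b\in S$. Since $G[S]$ is complete multipartite and $S$ contains at least one edge, I would aim to locate a triangle inside $S$ involving $a$ (or appropriate vertices) together with the structure of $S$, and then observe that $u$, being adjacent to one vertex of this triangle but not to others, together with $v_{i+1}$ (which is adjacent to everything in $S\subseteq N(v_{i+1})\cap V_{i+1}$ but not to $u$), produces a paw. The key point is that $v_{i+1}$ is complete to $S$ while $u$ is not adjacent to $v_{i+1}$, so a triangle in $S\cup\{v_{i+1}\}$ plus the pendant edge from $u$ to $a$ yields the forbidden paw. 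The main care needed is to ensure the triangle can be chosen so that $u$ is adjacent to exactly one of its vertices; the complete multipartite structure (with at least one edge, hence at least two distinct parts) should make such a triangle available whenever $u$ has both a neighbor and a non-neighbor in $S$.

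\textbf{Case 2: $u$ adjacent to $v_{i+1}$.} Here the claim is that $u$ is complete to $S\setminus I_j$ for some single part $I_j$. The approach is to show that the set of non-neighbors of $u$ within $S$ is contained in one part $I_j$. I would argue that if $u$ fails to be adjacent to two vertices $b,b'$ lying in \emph{different} parts of the multipartition, then $b$ and $b'$ are adjacent in $G[S]$, and combined with a neighbor $a$ of $u$ in $S$ and the structure around $v_{i+1}$ one again extracts a paw or a diamond-type obstruction forbidden by paw-freeness. Thus all non-neighbors of $u$ in $S$ lie in a common part $I_j$, which is exactly the assertion that $u$ is complete to $S\setminus I_j$.

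\textbf{Main obstacle.} I expect the delicate part to be the careful case analysis inside $S$: the complete multipartite structure has several regimes (two non-neighbors in the same part versus different parts, and whether $u$'s neighbor $a$ lies in the same part as a non-neighbor), and one must verify in each sub-case that the chosen four vertices—some combination of $u$, $v_{i+1}$, and two or three vertices of $S$—induce precisely a paw (a triangle with a pendant edge) rather than some other configuration. The role of $v_{i+1}$ as a vertex complete to $S$ is what supplies the triangle, so the crux is pinning down which three vertices form the triangle and which single adjacency from $u$ provides the pendant edge, while confirming no extra edges are present that would turn the paw into a $K_4$, a diamond, or otherwise destroy the induced-paw witness.
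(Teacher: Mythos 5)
Your strategy is the same as the paper's: use the facts that $v_{i+1}$ is complete to $S$, that $G[S]$ is complete multipartite with $q\geq 2$ parts (since it has an edge), and hunt for an induced paw. The outline is sound, but the one step you explicitly defer in Case~1 --- ``the complete multipartite structure should make such a triangle available'' --- is precisely the content of the proof, so as written there is a gap in execution. It does close, and along exactly your lines: let $a\in S\cap N(u)$ and $b\in S\setminus N(u)$. If $a$ and $b$ lie in different parts, then $\{v_{i+1},a,b\}$ is a triangle and $u$ is adjacent only to $a$, giving a paw. If $a,b\in I_j$ for the same $j$, pick $z\in S\setminus I_j$ (possible as $q\geq2$); if $uz\notin E(G)$ then $\{u,v_{i+1},a,z\}$ induces a paw (triangle $\{v_{i+1},a,z\}$, pendant $ua$), and if $uz\in E(G)$ then $\{u,v_{i+1},b,z\}$ does (triangle $\{v_{i+1},b,z\}$, pendant $uz$). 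Note that in every sub-case one can indeed take $u$ as the pendant vertex and a triangle through $v_{i+1}$, which is marginally cleaner than the paper's version, where one sub-case uses the triangle $\{u,x,z\}$ with the pendant edge inside $S$.

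In Case~2 your conclusion and witness set are right, but the justification contains an error: a ``diamond-type obstruction'' is \emph{not} forbidden by paw-freeness --- the diamond $K_4-e$ contains no induced paw, so exhibiting a diamond proves nothing here. Fortunately you do not need it, nor do you need the neighbor $a$: if $b\in I_{j'}$ and $b'\in I_{j''}$ with $j'\neq j''$ are both non-neighbors of $u$, then $\{u,v_{i+1},b,b'\}$ already induces a paw outright (triangle $\{v_{i+1},b,b'\}$ with pendant edge $uv_{i+1}$, and no extra edges since $ub,ub'\notin E(G)$). Hence all non-neighbors of $u$ in $S$ lie in a single part, which is the claim; this also covers the degenerate situation $u\in S$, where $u$ belongs to some part $I_j$ and is complete to $S\setminus I_j$ by the multipartite structure alone.
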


\begin{proof}
    As by assumption $G[S]$ contains an edge, we have $q\geq 2$.
    Assume first that $u$ is not adjacent to $v_{i+1}$, but has a neighbor in $S$; in particular $u\notin S$. Suppose for contradiction that $u$ is not complete to $S$. Hence there are vertices $x \in S \cap N(u)$ and $ y\in S \setminus N(u)$.
    Note that $xy\not\in E(G)$ as otherwise $\{u, v_{i+1}, x, y\}$ induces a paw in $G$.
    Then $x,y\in I_j$ for some $j\in \intv{1}{q}$.
    Let $z \in S\setminus I_j$; such a vertex exists as $q \geq 2$ and it is complete to $\{v_{i+1}, x, y\}$ by definition of the~$I_k$'s.
    Then, either $uz \in E(G)$ and $\{u, x, y,z\}$ induces a paw, or $uz \notin E(G)$ and $\{u, v_{i+1}, y,z \}$ does, a contradiction.
    
    Assume now that $u$ is adjacent to $v_{i+1}$. 
    If $u$ belongs to $S$, then it belongs to some~$I_j$, $j\in\intv{1}{q}$ and is complete to $S\setminus I_j$, by definition of the $I_k$'s.
    We now assume $u \in N(v_{i+1}) \setminus S$. If there is no $j\in\intv{1}{q}$ such that $u$ is complete to $S\setminus I_j$, then, as $q\geq 2$, $u$ has at least two non-neighbors $x\in I_{j'}$ and $y\in I_{j''}$ for two different $j',j''\in \intv{1}{q}$.
    Then $\{u, v_{i+1}, x, y\}$ induces a paw in $G$, a contradiction.
\end{proof}

\begin{lemma}\label{lem:paw-free-candidates-enumeration}
    There is an algorithm enumerating $\C(D^*,i)$ in total time $O(n^5 \cdot|\D(G,A)|)$ and $O(n^2)$ space.
\end{lemma}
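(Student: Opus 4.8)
The plan is to reduce, via the two cases of Lemma~\ref{lem:candidates-characterization}, the enumeration of $\C(D^*,i)$ to enumerating $\D(G,T)$ for a set $T$ that induces a complete multipartite graph, and then to exploit that structure directly. If $G[S]$ has no edge I simply invoke the reduction described before Lemma~\ref{lem:paw-free-edge} together with Lemma~\ref{lem:triangle-free-candidates-enumeration}; so assume $G[S]$ contains an edge, i.e.\ the multipartition $I_1,\dots,I_q$ of $G[S]$ has $q\geq 2$ parts. Put $T=S$ when $v_{i+1}\in N[D^*]$ and $T=S\cup\{v_{i+1}\}$ otherwise; since $S\subseteq N(v_{i+1})$, in the latter case $\{v_{i+1}\}$ becomes an extra part and $G[T]$ is again complete multipartite, with parts $P_1,\dots,P_r$, $r\geq 2$. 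By Lemma~\ref{lem:candidates-characterization} we have $\C(D^*,i)=\D(G,T)$ in both cases, so it suffices to enumerate $\D(G,T)$. The first thing I would record, reading off Lemma~\ref{lem:paw-free-edge} (applied to $S$) together with the facts that $v_{i+1}$ dominates all of $S$ and that a vertex outside $N[v_{i+1}]$ with a neighbour in $S$ is complete to $S$, is a clean classification of the candidate dominators: every vertex $u$ with a neighbour in $T$ is either \emph{complete to $T$}, or \emph{$P$-deficient} for exactly one part $P$, meaning it dominates $T\setminus P$ but misses part of $P$. Vertices of a part $P$ are $P$-deficient, and the part $\{v_{i+1}\}$, when present, is missed precisely by the vertices complete to $S$ that are non-adjacent to $v_{i+1}$.

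With this classification I would split $\D(G,T)$ into three explicit families. First, every $\{c\}$ with $c$ complete to $T$ is a minimal dominating set; conversely, if a minimal dominating set $X$ contains such a $c$, then every other $x\in X$ fails to have a private neighbour, as $c$ dominates all of $T$, so $X=\{c\}$. Thus all solutions using a complete-to-$T$ vertex are these at most $n$ singletons. Among the remaining solutions, in which every member is $P$-deficient for a single part, I would distinguish those whose members are all deficient on one common part $P_k$ from those with at least two distinct deficiency-parts. In the first subfamily the parts $P\neq P_k$ are dominated by every member, so $X$ is forced to be a minimal dominating set of $P_k$ using only $P_k$-deficient vertices, each member carrying a private neighbour inside $P_k$; these are exactly the minimal dominating sets of the bicolored graph obtained from $G$ by keeping $P_k$ and its $P_k$-deficient neighbours. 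The second subfamily is where the key rigidity appears: if $x$ is deficient on $P_a$ and its private neighbour lies in $P_b$, then every other member of $X$ must also miss that private neighbour and hence be deficient on $P_b$; applying the same reasoning to a member deficient on $P_b$ forces every vertex but it to be deficient on $P_a$, and these two conclusions are compatible only if $|X|=2$. So the mixed solutions are exactly certain pairs of deficient vertices on two distinct parts.

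The algorithm then enumerates the three families separately, which guarantees that no solution is produced twice. The singletons are listed in $O(n^2)$ time. The mixed solutions are found by iterating over the $O(n^2)$ pairs of deficient vertices on distinct parts and testing each in $O(n^2)$ time, for $O(n^4)$ total. For the common-part family I would, for each part $P_k$, build the split graph on $P_k$ (independent side) and its $P_k$-deficient neighbourhood (turned into a clique), and enumerate its minimal dominating sets with Proposition~\ref{prop:split-properties}; exactly as in Lemma~\ref{lem:triangle-free-candidates-characterization}, this over-produces by at most a factor $n$ (the sets meeting $P_k$ are genuine solutions, and the others are bounded against them), so after discarding via an $O(n^2)$ membership test in $\C(D^*,i)$ we recover precisely these solutions. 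Since the solutions produced across the three families are pairwise distinct elements of $\D(G,T)=\C(D^*,i)$, their number is at most $|\D(G,A)|$ by Corollary~\ref{cor:cibound}; carrying the polynomial overheads yields total time $O(n^5\cdot|\D(G,A)|)$, and since the split enumeration and all tests run in $O(n^2)$ space, the space bound is $O(n^2)$.

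The step I expect to be the crux is the structural dichotomy for the edge case---in particular the private-neighbour argument pinning the mixed solutions down to size exactly $2$, and verifying that the classification ``complete to $T$, or $P$-deficient for a unique part $P$'' survives the promotion of $\{v_{i+1}\}$ to a separate part when $v_{i+1}\notin N[D^*]$. Once these are in place, the enumeration itself and its complexity analysis are routine adaptations of Lemma~\ref{lem:triangle-free-candidates-characterization} and Lemma~\ref{lem:triangle-free-candidates-enumeration}.
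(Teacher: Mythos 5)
Your overall strategy is the same as the paper's: in the edge case, use Lemma~\ref{lem:paw-free-edge} to pin every candidate extension down to either a part of the multipartition or a very small set, and then enumerate these directly. However, your key structural claim --- that \emph{every} vertex with a neighbour in $T$ is either complete to $T$ or deficient on exactly one part --- over-extends Lemma~\ref{lem:paw-free-edge}, which only speaks about vertices having a neighbour in $S$. When $v_{i+1}\notin N[D^*]$ and you promote $\{v_{i+1}\}$ to a part of $T=S\cup\{v_{i+1}\}$, a vertex $w\in N(v_{i+1})$ with \emph{no} neighbour in $S$ has a neighbour in $T$ (namely $v_{i+1}$) but dominates only $\{v_{i+1}\}$ there; since $q\geq 2$ it misses at least two parts, so it is neither complete to $T$ nor $P$-deficient for a single $P$. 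Such a $w$ genuinely occurs in solutions: if $c$ is complete to $S$ and non-adjacent to $v_{i+1}$, then $\{w,c\}\in\D(G,T)$ ($c$ privately dominates $S$, $w$ privately dominates $v_{i+1}$). Your third family, described as ``pairs of deficient vertices on two distinct parts,'' excludes $w$ and therefore misses all these solutions; the private-neighbour argument pinning mixed solutions to size two also silently relies on the false classification. (Your second family is harmless but essentially vacuous: an $I_j$-deficient vertex is non-adjacent to all of $I_j$ except possibly itself, so the split graph you build has no cross edges and returns only $I_j$ itself; and for $P_k=\{v_{i+1}\}$ the family is empty since its members cannot dominate $v_{i+1}$.)

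The paper sidesteps this entirely by not classifying dominators at all: it shows that any $X\in\C(D^*,i)$ is either some $I_j$ or has at most three vertices (at most one whose private neighbour is $v_{i+1}$, at most two with private neighbours in $S$), and then brute-forces all $I_j$'s and all subsets of size at most three in $O(n^5)$ time, which automatically catches pairs like $\{w,c\}$ above. Your argument is repairable in the same spirit --- replace ``pairs of deficient vertices on distinct parts'' by ``all vertex subsets of size at most two (or three)'' and test membership in $\C(D^*,i)$ --- but as written the enumeration is incomplete.
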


\begin{proof}
In the case where $S$ induces an independent set, we use the algorithm of Lemma~\ref{lem:triangle-free-candidates-enumeration} to enumerate $\C(D^*,i)$ in time 
\[
    O(n^4\cdot|\D(G,A)|)
\]
and $O(n^2)$ space.
Otherwise, we deduce from Lemma~\ref{lem:paw-free-edge} that minimal dominating sets of $S$ are either of size at most two, or of the form $I_j$ for some $j\in \intv{1}{q}$.
If $v_{i+1}\in N[D^*]$, that is if $S=V_{i+1}\setminus N[D^*]$, we try each of these sets and output those that minimally dominate $S$; this can be done in total time $O(n^4)$.
This enumerates $\C(D^*,i)$ by definition.
If $v_{i+1}\not\in N[D^*]$, we first output $I_j$ for every $j\in \intv{1}{q}$.
Then, we iterate over all vertex subsets of size at most three and output those that minimally dominate $S$; this can be done in total time $O(n^5)$.
This will enumerate $\C(D^*,i)$, for the following reason implied by Lemma~\ref{lem:paw-free-edge}. If $X\in \C(D^*,i)$, then either $X=I_j$ for some $j\in \intv{1}{j}$, or $X$ contains at most three vertices: one with $v_{i+1}$ as a private neighbor and at most $2$ with private neighbors in $S$.
\end{proof}

As a consequence of Theorem~\ref{thm:ordered-generation} and Lemma~\ref{lem:paw-free-candidates-enumeration}, we get the following.

\begin{theorem}\label{thm:paw-free}
    There is an algorithm that, given a bicolored graph $G(A)$ on $n$ vertices such that $G$ is paw-free, enumerates the set $\D(G,A)$ in time
    \[
        O(\poly(n) \cdot |\D(G,A)|^2)
    \]
    and $O(n^3)$ space.
\end{theorem}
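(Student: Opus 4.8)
The plan is to instantiate the ordered-generation framework of Theorem~\ref{thm:ordered-generation} with the candidate-extension enumerator supplied by Lemma~\ref{lem:paw-free-candidates-enumeration}. Recall that Theorem~\ref{thm:ordered-generation} reduces the enumeration of $\D(G,A)$ to the enumeration, at each node $(D^*,i)$ of the parent tree, of the set $\C(D^*,i)$ of candidate extensions: if such an enumerator runs in time $f(n,|\D(G,A)|)$ and space $s(n)$, then $\D(G,A)$ can be enumerated in time $O(n^4 d^2 + f(n,d)\cdot nd)$ and space $O(n\cdot s(n))$, where $d = |\D(G,A)|$. So the whole theorem comes down to plugging in the right $f$ and $s$.

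First I would confirm that the hypotheses line up. Since $G$ is paw-free, so is every relevant induced subgraph, and in particular the structural fact exploited by Lemma~\ref{lem:paw-free-candidates-enumeration}---that $G[S]$ is complete multipartite (as $S \subseteq N(v_{i+1})$)---holds uniformly at every node $(D^*,i)$ with $i \in \intv{0}{p-1}$ and $D^* \in \D(G,V_i)$. Thus Lemma~\ref{lem:paw-free-candidates-enumeration} applies at every node, giving an enumerator for $\C(D^*,i)$ with $f(n,d) = O(n^5 \cdot d)$ and $s(n) = O(n^2)$. A point worth emphasizing is that this enumerator produces $\C(D^*,i)$ \emph{without} repetition (unlike the diamond-free case), so it can be fed directly into Theorem~\ref{thm:ordered-generation} rather than routed through the repetition-removing Corollary~\ref{cor:candexrep-to-dom}; this is exactly what keeps the exponent of $d$ at $2$ instead of doubling it.

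It then remains to substitute the bounds. Plugging $f(n,d) = O(n^5 d)$ into the time bound of Theorem~\ref{thm:ordered-generation} gives
\[
    O\!\left(n^4 d^2 + n^5 d \cdot n d\right) = O(n^6 d^2) = O(\poly(n)\cdot |\D(G,A)|^2),
\]
and plugging $s(n) = O(n^2)$ into the space bound gives $O(n\cdot n^2) = O(n^3)$, exactly as claimed.

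The substantive content---and hence what would have been the main obstacle---has already been discharged in Lemma~\ref{lem:paw-free-candidates-enumeration} and the structural Lemma~\ref{lem:paw-free-edge} it rests on: the observation that in a paw-free graph a candidate extension is either an entire part $I_j$ of the multipartition or consists of at most three vertices (one for $v_{i+1}$ and at most two with private neighbors in $S$), which is what makes a brute-force enumeration of $\C(D^*,i)$ feasible. Given that lemma, the proof of the theorem is purely a matter of bookkeeping the polynomial factors; the only genuine care needed is to verify that the candidate-extension enumeration is repetition-free, so that Theorem~\ref{thm:ordered-generation} applies verbatim and all polynomial overheads collapse into the stated $O(\poly(n)\cdot |\D(G,A)|^2)$ time and $O(n^3)$ space.
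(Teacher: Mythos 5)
Your proposal is correct and follows exactly the paper's route: the paper derives Theorem~\ref{thm:paw-free} by plugging the repetition-free candidate-extension enumerator of Lemma~\ref{lem:paw-free-candidates-enumeration} (time $O(n^5\cdot|\D(G,A)|)$, space $O(n^2)$) directly into Theorem~\ref{thm:ordered-generation}, yielding the stated bounds by the same substitution you perform.
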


Note that when $A = V(G)$, we have $\D(G)=\D(G,A)$. 
Hence, Theorem~\ref{thm:paw-free} implies the existence of an algorithm enumerating the minimal dominating sets in
paw-free graphs in output-polynomial time and using polynomial space, which is the second of the two cases of Theorem~\ref{th:pawdiam}.

\section{Technique limitations}\label{sec:beyond}

In this section, we discuss various obstacles that we detected in our attempts to improve our results or proofs.

\subsection{A standard technique fails for bipartite graphs}
A natural technique (sometimes called {\em flashlight search} or {\em backtrack}) to enumerate valid solutions to a given problem such as, for instance,
sets of vertices satisfying a given property, is to build them element
by element. If during the construction one detects that the current
partial solution
cannot be extended into a valid one, then it can be discarded along with all
the other partial solutions that contain it. 
Note that in order to apply this technique, one should be able to decide whether a given partial solution can be completed into a valid one. 
It turns out that for minimal dominating
sets, this problem (called the {\em extension problem} problem) is \NP{}-complete \cite{kante2011split}, even when
restricted to split graphs \cite{kante2015chordal}. 
We show that it remains \NP{}-complete in bipartite graphs,
so in particular on $(K_t+K_2)$-free graphs for every $t\geq 3$. This stands in contrast with Theorem~\ref{thm:op} and suggests that, indeed, a more involved technique was needed to obtain our results.

The extension problem, denoted \Dcs{}, is formally defined as follows. 
Given a graph $G$ and a set $A\subseteq V(G)$ of vertices, is there a minimal dominating set $D\in \D(G)$ such that $A\subseteq D$. 

\begin{theorem}\label{thm:ext}
    \Dcs{} restricted to bipartite graphs is \NP{}-complete.
\end{theorem}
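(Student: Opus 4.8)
The plan is to prove membership in $\NP$ and $\NP$-hardness separately. Membership is immediate: given a candidate minimal dominating set $D\supseteq A$, one can verify in polynomial time that $D$ dominates $V(G)$ and that every vertex of $D$ has a private neighbor, so \Dcs{} restricted to bipartite graphs lies in $\NP$. The substance of the theorem is the hardness, and since \Dcs{} is already known to be $\NP$-complete on general graphs (and even split graphs) by~\cite{kante2011split,kante2015chordal}, the natural route is a polynomial-time reduction from one of these harder instances, or directly from a canonical $\NP$-complete problem such as \SAT{}, producing a bipartite instance.

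First I would set up the reduction so that the prescribed set $A$ encodes the ``choices'' of a satisfiability instance while the bipartite structure forces domination to behave like clause-checking. A clean approach is to reduce from $3$-\SAT{}: introduce on one side of the bipartition a vertex for each literal and auxiliary gadget vertices, and on the other side vertices representing clauses and private-neighbor enforcers. The set $A$ would consist of a fixed set of gadget vertices whose forced inclusion in any minimal dominating set $D$ constrains, for each variable, exactly one of its two literals to be usable as a dominator. The key design goal is that extending $A$ to a minimal dominating set is possible if and only if one can pick a consistent truth assignment (one literal per variable) that dominates all clause-vertices, i.e.\ satisfies the formula. The privacy condition from the characterization in the preliminaries---that every $x\in D$ needs $\priv(D,x)\neq\emptyset$---is what I would exploit to prevent ``cheating'' extensions: each forced vertex in $A$ should come equipped with a pendant or degree-one private neighbor so that its membership is never in conflict with minimality, while the literal-selection vertices should compete for private neighbors so that at most one literal per variable can join $D$.

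The main obstacle will be reconciling two tensions simultaneously inside a \emph{bipartite} graph: the inclusion-minimality requirement (every vertex of $D$ must retain a private neighbor, which tends to force sparse solutions) against the domination requirement (every clause vertex must be covered, which pushes toward adding more vertices). In split graphs the existing hardness proof can lean on the clique side to handle domination freely; bipartiteness removes that crutch, so I would need pendant private-neighbor gadgets to decouple minimality from domination, and variable-consistency gadgets (for instance, a shared private neighbor attached to both literals of a variable, forcing them to exclude each other) to enforce that $D$ selects a genuine assignment rather than an inconsistent mixture. The delicate verification is the ``only if'' direction: showing that \emph{every} minimal dominating set extending $A$ projects to a satisfying assignment, which requires checking that no unintended vertex can sneak in or out while preserving both domination and the nonempty-private-neighbor condition.

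Finally I would confirm the reduction is polynomial in the size of the \SAT{} instance and that the constructed graph is genuinely bipartite (all gadget edges respecting the two-part split). Since bipartite graphs are $(K_t+K_2)$-free for every $t\geq 3$, this hardness result is exactly the contrast with Theorem~\ref{thm:op} advertised in the surrounding text: enumeration is tractable even though the extension decision problem is intractable, which is precisely why the flashlight-search approach fails and the ordered-generation machinery of Section~\ref{sec:bt} was required.
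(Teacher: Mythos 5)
Your high-level strategy coincides with the paper's: both reduce from \SAT{} to a bipartite instance in which one side carries a vertex per literal, the other side carries a vertex per clause plus one consistency vertex per variable, and the prescribed set $A$ consists of forced gadget vertices whose private-neighbor requirements encode the selection of at most one literal per variable. But the proposal stops at the level of design goals, and the two concrete mechanisms it floats would not achieve them. The first gap: nothing you describe prevents a clause vertex from simply joining $D$ and dominating \emph{itself}. Since the literal vertices are already dominated by the forced variable gadgets in $A$, a clause vertex $y_C$ entering $D$ can take itself as its own private neighbor whenever no literal of $C$ is selected, so every formula would yield a yes-instance. This is exactly what the paper's extra path gadget $uvwz$ is for: $u$ is made adjacent to all clause vertices, $v$ and $w$ are placed in $A$, and one checks that $\priv(D,v)\subseteq\{u\}$, so the presence of any clause vertex (or of $u$ itself) in $D$ would strip $v$ of its last private neighbor. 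You correctly flag the ``only if'' direction as the delicate part, but this is the gadget that makes it go through, and it is missing.

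The second gap concerns variable consistency. A shared private neighbor attached to both literals of a variable does not force them to exclude each other: a literal vertex in $D$ is free to take a clause vertex as its private neighbor instead, so both literals of a variable could enter $D$ with distinct clause witnesses, and the extracted ``assignment'' could set a variable both true and false (consider $(x)\wedge(\neg x)$, which is unsatisfiable although $\{x,\neg x\}$ dominates both clause vertices). The paper instead puts the shared vertex $\mathrm{neg}_{x_i}$ into $A$ itself; since $N[\mathrm{neg}_{x_i}]=\{\mathrm{neg}_{x_i},x_i,\neg x_i\}$ and clause vertices are barred from $D$ by the first gadget, $\mathrm{neg}_{x_i}$ retains a private neighbor if and only if at most one of its two literals is selected. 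Note finally that your suggestion of equipping every vertex of $A$ with a pendant private neighbor works against you here: it is precisely the \emph{absence} of a guaranteed private neighbor for $\mathrm{neg}_{x_i}$ that enforces consistency, so that vertex must not be given one.
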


\begin{proof}
Since \Dcs{} is \NP{}-complete in the general case, it is clear that \Dcs{} is in \NP{} even when restricted to bipartite graphs. Let us now present a hardness reduction from \SAT{}.

Given an instance $\varphi$ of \SAT{} with variables $x_1,\dots,x_n$ and clauses $C_1,\dots,C_m$, we construct a bipartite graph $G$ and a set $A\subseteq V(G)$ such that there exists a minimal dominating set containing $A$ if and only if there exists a truth assignment to the variables of $\varphi$ that satisfies all the clauses.
The graph $G$ has vertex bipartition $(X,Y)$, defined as follows.
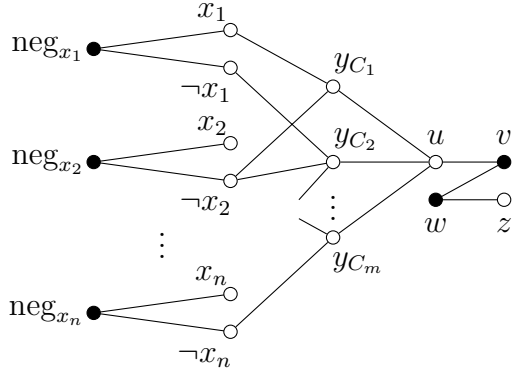
\begin{figure}
    \centering
    \begin{tikzpicture}[xscale = \figscaleCx, yscale = \figscaleCy]
    \draw[every node/.style = black node]
        (0, -0.5) node[label=180:$\mathrm{neg}_{x_1}$] (n1) {}
        (0, -2) node[label=180:$\mathrm{neg}_{x_2}$] (n2) {}
        (0, -4) node[label=180:$\mathrm{neg}_{x_n}$] (nn) {};
    \draw[every node/.style = white node, xshift = 2cm, yshift = 0.25cm]
        (0, -0.5) node[label=135:$x_1$] (x1) {} -- (n1)
        (0, -2) node[label=135:$x_2$] {} -- (n2)
        (0, -4) node[label=135:$x_n$] {} -- (nn);
    \draw[every node/.style = white node, xshift = 2cm, yshift = -0.25cm]
        (0, -0.5) node[label=225:$\neg x_1$] (x1b) {} -- (n1)
        (0, -2) node[label=225:$\neg x_2$] (x2b) {} -- (n2)
        (0, -4) node[label=225:$\neg x_n$] (xnb) {} -- (nn);
    \draw[every node/.style = white node, xshift = 3.5cm]
        (0, -1) node[label=45:$y_{C_1}$] (y1) {}
        (0, -2) node[label=45:$y_{C_2}$] (y2) {}
        (0, -2.5) node[normal] {\vdots}
        (0, -3) node[label=-45:$y_{C_m}$] (ym) {};
    \draw (5, -2) node[white node, label=90:$u$] (u) {} --
        ++(1,0) node[black node, label=90:$v$] {} --
        ++(-1,-0.5) node[black node, label=-90:$w$] {} --
        ++(1,0) node[white node, label=-90:$z$] {};
    \draw (u) -- (y1) (u) -- (y2) (u) -- (ym)
        (xnb) -- (ym) -- ++(-0.5, 0.25)
        (x1) -- (y1) -- (x2b) -- (y2) -- (x1b) (y2) -- ++(-0.5, -0.5);
    \draw (1, -3) node[normal] {\vdots};
    \end{tikzpicture}
    \caption{A bipartite graph $G$ and a set $A\subseteq V(G)$ constructed from an instance of \SAT{} with variables $x_1,\dots,x_n$ and clauses $C_1,\dots,C_m$. Black vertices constitute the set $A$.
    Then $A$ can be extended into a minimal dominating set $D$ of $G$ if and only if there is a truth assignment of the variable satisfying all the clauses.}
    \label{fig:extension-npc}
\end{figure}

The first part $X$ contains two special vertices $u$ and $w$, and for every variable $x_i$, one vertex for each of the literals $x_i$ and $\neg x_i$.
The second part $Y$ contains one vertex $y_{C_j}$ per clause $C_j$, one vertex $\mathrm{neg}_{x_i}$ per variable $x_i$, and two special vertices $v$ and $z$.
For every $i \in \intv{1}{n}$ we make $\mathrm{neg}_{x_i}$ adjacent to the two literals $x_i$ and $\neg x_i$ and for every $j \in \intv{1}{m}$ we make $y_{C_j}$ adjacent to $u$ and to every literal $C_j$ contains.
Finally, we add edges to form the path $uvwz$ and set $A=\{\mathrm{neg}_{x_1},\dots,\mathrm{neg}_{x_n},v,w\}$. Clearly this graph can be constructed in polynomial time from~$\varphi$.
The construction is illustrated in Figure~\ref{fig:extension-npc}.

Let us show that $A$ can be extended into a minimal dominating set of $G$ if and only if $\varphi$ has a truth assignment that satisfies all the clauses. The proof is split into two claims.
A \emph{partial assignment} of $\varphi$ is a truth assignment of a subset of the variables $x_1, \dots, x_n$. Observe that a partial assignment may satisfy all the clauses (\ie, the values of the non-assigned variables do not matter). A partial assignment that satisfies all the clauses is called a \emph{minimal assignment} if none of its proper subsets satisfies all the clauses.

\begin{claim}\label{clm:domsat}
Let  $S \subseteq \{x_1, \neg x_1, \dots, x_n, \neg x_n\}$ be a set containing at most one literal for each variable. Then $S$ minimally dominates $\{y_{C_1}, \dots, y_{C_m}\}$ if and only if its elements form a minimal assignment of $\varphi$.
\end{claim}
\begin{proof}
Suppose $S$ is as above and $S$ minimally dominates $\{y_{C_1},\ldots,y_{C_m}\}$. Consider any $j \in \intv{1}{m}$. Since $y_{C_j} \notin S$, the set $S$ contains a neighbor $x$ of $y_{C_j}$. By construction, $x$ is a literal appearing in~$C_j$. Hence, the literals present in~$S$ form a partial assignment of the variables of $\varphi$ satisfying all its clauses. Moreover, this partial assignment is minimal by the minimality of $S$.
The proof in the other direction is analogous.
\cqed%
\end{proof}

\begin{claim}\label{cl:minlit}
If $D$ is a minimal dominating set of $G$ containing $A$, then $D \setminus A \subseteq \{x_1, \neg x_1,\allowbreak{} \dots,\allowbreak{} x_n, \neg x_n\}$ and $D$ contains at most one literal of each variable.
\end{claim}
\begin{proof}
Notice that $\priv(A,v) = \{u\}$. If $y_{C_j}$ belongs to $D$ for some $j \in \intv{1}{m}$, then $\priv(D,v) = \emptyset$, a contradiction to the minimality of $D$. For similar reasons $u,z\notin D$. Hence $D\cap \{u,z, y_{C_1}, \dots, y_{C_m}\} = \emptyset$.
Besides, for every $i \in \intv{1}{m}$, $D$ contains at most one of $x_i$ and $\neg x_i$, as otherwise $\priv(D, \mathrm{neg}_{x_i})$ would be empty, again contradicting the minimality of~$D$. This proves the claim.
\cqed%
\end{proof}

If $A$ can be extended into a minimal dominating set $D$ of $G$, then by combining the two claims above, we deduce that $\varphi$ has a truth assignment that satisfies all clauses. Conversely, if $\varphi$ has a truth assignment satisfying all the clauses, then it also has a minimal truth assignment satisfying all the clauses, so there is a set~$S$ as in the statement of Claim~\ref{clm:domsat}. In $S \cup A$, every element of~$S$ has a private 
neighbor, as a consequence of the minimality of~$S$ and the fact that no element of $A$ has a neighbor among the clause vertices. Besides, each of $\mathrm{neg}_{x_1},\dots,\mathrm{neg}_{x_n}$ has a private neighbor (because $S$ contains at most one of the two literals for each variable) and it is easy to see that the same holds for $v$ and $w$. Hence $S \cup A$ is a minimal dominating set of~$G$.

Given an instance $\varphi$ of SAT, we constructed in polynomial time an instance $(G,A)$ of $\Dcs{}$ that is equivalent to satisfiability of $\varphi$. This proves that $\Dcs{}$ is \NP{}-hard.
\end{proof}

\subsection{Limitations of the bicolored argument}\label{subsec:bic}

Let us present a brief argument of why enumerating the minimal dominating sets in a bicolored graph $G(A)$ is \DomEnum{}-hard if $A$ can contain an arbitrarily large clique and no restriction is put on the structure of $G-A$ nor its interactions with $A$. In other words, we argue that \DomEnum{} can be reduced to the problem of enumerating the minimal dominating sets in a bicolored graph $G(A)$ where $A$ is a clique.

Because of Theorem~\ref{thm:cobip-hard}, we know that enumerating minimal dominating sets of a co-bipartite graph $G$ is \DomEnum{}-hard. However, note that free to disregard the minimal dominating sets consisting of exactly one vertex in each clique of the partition, every minimal dominating set is included in one of the two cliques. 
Let $A_1$ and $A_2$ be the two sides of this partition.
Observe that as both $A_1$ and $A_2$ induce cliques, they satisfy any property that does not limit the size of the largest clique. Combined with the fact that minimal dominating sets consisting of exactly one vertex in each side of the partition are easy to enumerate, we obtain the desired conclusion.

Note however that this obstacle was circumvented in Theorem~\ref{th:pawdiam} by keeping track of what the forbidden structures in $G$ imply for the interactions between $G-A$ and $A$. Unfortunately, the arguments were quite ad hoc in nature and it is unclear how far they can be generalized.

This obstacle was bypassed in a different way in Theorem~\ref{thm:ktme}, simply by first enumerating all the minimal dominating sets without a given structure, then using the fact that the structure appears in any remaining dominating set to guess where it does, and finally arguing that the vertices that remain to be dominated cannot induce an arbitrarily large clique. We now show that this technique is in fact very limited.

\subsection{Limitations of enumerating all minimal dominating sets with a certain structure}

We present now a brief argument on why enumerating all $H$-free minimal dominating sets in a graph is \DomEnum{}-hard unless $H$ is a clique of size at most $2$. Here, a minimal dominating set $D$ is {\em{$H$-free}} if $G[D]$ does not contain $H$ as an induced subgraph.

The case when $H$ is not a clique is directly implied by the argument in Section~\ref{subsec:bic}. We now focus on the case when $H$ is a clique on at least $3$ vertices; it suffices to handle the case when $H$ is a triangle. In other words, we argue that \DomEnum{} can be reduced to the question of enumerating all triangle-free minimal dominating sets.

Consider a graph $G$. We build an auxiliary graph $G'$ by creating two copies $A$ and $B$ of $V(G)$, creating a vertex $u$, and setting $V(G')=A \cup B \cup \{u\}$. We set $A$ to be an independent set, $B$ to be a clique, and the vertex $u$ to be adjacent to all of $A$ and none of $B$. We set the edges between $A$ and $B$ as follows: a vertex in $A$ and a vertex in $B$ are adjacent if and only if the vertices of $G$ they originate from are the same or are adjacent.

Let us consider what the structure of a minimal dominating set $D$ of $G'$ can be, and how easy it is to generate all minimal dominating sets of a given type.
We consider three cases.
\begin{enumerate}
\item $u \not\in D$. We generate all minimal dominating sets of the split graph $G'[A \cup B]$: this can be done in output-polynomial time according to Proposition~\ref{prop:split-properties}. For each such minimal dominating set, either the intersection with $A$ is non-empty and it is a minimal dominating set of $G'$, or it is empty and we can generate in polynomial-time all additions of a vertex of $A$ that would result in a minimal dominating set of~$G'$, if any. Since the number of minimal dominating sets of $G'[A \cup B]$ with empty intersection with $A$ is polynomially bounded by the number of those with non-empty intersection (see Lemma~\ref{lem:triangle-free-candidates-enumeration}, Inequality~\eqref{eq:triangle-free-trashbound}), we can generate all minimal dominating sets of $G'$ not containing $u$ in output-polynomial time.
    \item \mbox{$u \in D$ and $D \cap B \neq \emptyset$}. Then $|D \cap B|=1$, and for any $v \in B$, the set $\{u,v\}$ is a minimal dominating set of $G'$.
    \item \mbox{$u \in D$ and $D \cap B = \emptyset$}. All these minimal dominating sets are triangle-free. We observe that there is a bijection between the minimal dominating sets of this type and the minimal dominating sets of $G$.
\end{enumerate}

Summarizing, the first two types of minimal dominating sets are easy to generate in output-polynomial time. We note that, free again to disregard minimal dominating sets that are easy to generate, enumerating all triangle-free minimal dominating sets of $G'$ boils down to enumerating all minimal dominating sets of $G'$ that are included in $A \cup \{u\}$ and contain $u$. This is equivalent to enumerating all minimal dominating sets of $G$, hence the conclusion.

Note, however, that there is still hope for this technique when we assume some structure on the whole graph.

\section{Perspectives for further research}\label{sec:concl}

In this paper, we investigated the enumeration of minimal dominating sets in graph classes forbidding an induced subgraph $H$.
We gave algorithms that run in output-polynomial time and polynomial space when $H$ is a clique, or more generally when $H =K_t + K_2$, and when $H$ is the paw or the diamond. We now discuss possible directions for future research. For simplicity, let us here denote by $\DomEnum{}(H)$ the problem \DomEnum{} restricted to $H$-free graphs.

The most natural continuation of our work is to seek output-polynomial time algorithms for $\DomEnum{}(H)$ for other choices of the graph~$H$. 
We discuss a possible classification of the graphs $H$
depending on whether $\DomEnum{}(H)$ admits an output-polynomial time algorithm, is \DomEnum{}-hard, or is not known to belong to one of these two cases.
We stress that the first two cases may not be disjoint as it is currently open whether \DomEnum{} admits an output-polynomial time algorithm in general. However, in the current state of the art, such a classification will highlight specific graph classes where the problem could be attacked more easily than in the general case.

Because of Theorem~\ref{thm:cobip-hard}, if $H$ is such that co-bipartite graphs form a subclass of $H$-free graphs then $\DomEnum{}(H)$ is \DomEnum{}-hard.
This includes the cases $H=C_t$ or $H=P_t$ with $t \geq 5$.
This is also true for any graph $H$ that has an independent set of size at least three, in particular all graphs $H$ that have at least three connected components and graphs with two connected components where one component has one non-edge.
Therefore, all the graphs $H$ with more than one connected component for which $\DomEnum{}(H)$ is not known to be \DomEnum{}-hard are of the form $H=K_p + K_q$ (where by $+$ we denote the disjoint union), for integers~$p,q \geq 1$. We gave an output-polynomial time algorithm for the case where $p\leq 2$ or $q\leq 2$ in Theorem~\ref{thm:ktme} and leave open the existence of such algorithms for $p,q \geq 3$.

Let us now focus on connected choices of~$H$.
Besides the case where $H$ is a clique, which we addressed with Theorem~\ref{thm:kt-free}, we settled the case where $H =K_t - e$ for $t=4$ (Theorem~\ref{thm:diamond-free}). For $t\in\{2,3\}$, $\DomEnum{}(H)$ is output-polynomial time solvable since $(K_t - e)$-free graphs then are, respectively, cliques and disjoint unions of cliques.
To the best of our knowledge, it is currently unknown whether
$\DomEnum{}(K_t - e)$ for $t \geq 5$ is \DomEnum{}-hard and whether it is output-polynomial time solvable.
We also considered graphs $H$ of the form $(K_t-\{uv,vw\})$ for $t\geq 3$, \ie, graphs obtained from a clique on $t$ vertices by removing two incident edges. 
When $t = 3$, $(K_t-\{uv,vw\})$-free graphs are exactly the complete multipartite graphs, for which an output-polynomial time algorithm can be obtained as in the proof of Lemma~\ref{lem:paw-free-candidates-enumeration}.
We dealt with the case $t=4$ in Theorem~\ref{thm:paw-free} and leave open the cases of larger~$t$.

Regarding the exclusion of specific graphs, we note that the status of $\DomEnum{}(P_t)$ is completely explored: either $t\leq 4$ and an output-polynomial time algorithm is known, or $t \geq 5$ and the problem is \DomEnum{}-hard, as noted above. Among graph classes defined by forbidding an induced cycle, we proved that
$\DomEnum{}(C_3)$ is output-polynomial time solvable by Theorem~\ref{thm:triangle-free} and noted above that $\DomEnum{}(C_t)$ is \DomEnum{}-hard for $t\geq 5$, so only $\DomEnum{}(C_4)$ remains to be classified. The graph $C_4$ is also the only graph on at most 4 vertices for which $\DomEnum{}(H)$ has not been classified yet.
Other graph classes that are closed by taking induced subgraphs and where no output-polynomial time algorithm for \DomEnum{} neither \DomEnum{}-hardness proof are known to include unit-disk graphs~\cite{kante2008encyclopedia, golovach2016chordalbip} and comparability graphs.

Another natural research direction is to optimize the running times of our algorithms or to prove that this is not possible.
Theorem~\ref{thm:ext} suggests that no improvement of our results can be obtained using backtrack search. 
We leave as an open problem whether there are polynomial delay algorithms for \DomEnum{} in the cases that we considered.

Finally, we note that the algorithm of Theorem~\ref{thm:kt-free} has been implemented in python/\allowbreak{}SageMath \cite{sagemath}.

\section*{Acknowledgements}

The authors wish to thank Paul Ouvrard for extensive discussions on the topic of this paper. We gratefully acknowledge support from Nicolas Bonichon and the Simon family for the organization of the $3^{\textrm{rd}}$ Pessac Graph Workshop, where part of this research was done. We also thank the organisers of the Dagstuhl Seminar 18421 on algorithmic enumeration~\cite{dagstuhl2019enumeration} where some ideas present in this paper have been discussed. Last but not least, we thank Peppie for her unwavering support during the work sessions.

\bibliographystyle{alpha}
\bibliography{main}

\end{document}